\newif\ifnotes
\newif\iflncs
\newif\ifanonymous
\newif\ifexabs
  \setlist[description]{noitemsep}
  \setlist[enumerate]{noitemsep}
  \setlist[itemize]{noitemsep}
\newcommand{\paren}[1]{\left(#1\right)}
\newcommand{\mparen}[1]{\mleft(#1\mright)}
\newcommand{\mbracket}[1]{\mleft[#1\mright]}
\newcommand{\braces}[1]{\left\{#1\right\}}
\newcommand{\abs}[1]{\left|#1\right|}
\newcommand{\norm}[1]{\left\lVert#1\right\rVert}
\newcommand{\equad}{\mathrel{\phantom{=}}}
\newcommand{\extrapolateto}{\texorpdfstring{$\rightarrow$}{→}}
\newcommand{\Tr}{\mathrm{Tr}}
\newcommand{\ketbra}[2]{\ket{#1}\!\bra{#2}}
\newcommand{\proj}[1]{\ketbra{#1}{#1}}
\definecolor{DarkBlue}{RGB}{0,0,150}
\definecolor{DarkRed}{RGB}{150,0,0}
\definecolor{DarkGreen}{RGB}{0,150,0}
  \crefname{step}{Step}{Steps}
\newtheorem{theorem}{Theorem}[section]
\newtheorem{proposition}[theorem]{Proposition}
\newtheorem{claim}[theorem]{Claim}
\newtheorem{lemma}[theorem]{Lemma}
\newtheorem{corollary}[theorem]{Corollary}
\newtheorem{remark}[theorem]{Remark}
\newtheorem{definition}[theorem]{Definition}
\newenvironment{theorem-restated}[1]{
  \theorem}{\endtheorem\addtocounter{theorem}{-1}}
\newenvironment{definition-restated}[1]{
  \definition}{\endtheorem\addtocounter{theorem}{-1}}
\newenvironment{corollary-restated}[1]{
  \corollary}{\endtheorem\addtocounter{theorem}{-1}}
\newenvironment{lemma-restated}[1]{
  \lemma}{\endtheorem\addtocounter{theorem}{-1}}
\newtheorem{fact}[theorem]{Fact}
\newtheorem{construction}[theorem]{Construction}
\Crefname{importedtheorem}{Imported Theorem}{Imported Theorems}
\Crefname{theorem}{Theorem}{Theorems}
\Crefname{proposition}{Proposition}{Propositions}
\Crefname{claim}{Claim}{Claims}
\Crefname{lemma}{Lemma}{Lemmas}
\Crefname{conjecture}{Conjecture}{Conjectures}
\Crefname{corollary}{Corollary}{Corollaries}
\Crefname{construction}{Construction}{Constructions}
\Crefname{property}{Property}{Properties}
\Crefname{game}{Game}{Games}
\Crefname{item}{Item}{Items}
\Crefname{algorithm}{Algorithm}{Algorithms}
  \theoremstyle{definition}
\Crefname{definition}{Definition}{Definitions}
\Crefname{assumption}{Assumption}{Assumptions}
\Crefname{notation}{Notation}{Notations}
  \theoremstyle{remark}
\Crefname{question}{Question}{Questions}
\Crefname{remark}{Remark}{Remarks}
\Crefname{comment}{Comment}{Comments}
\Crefname{fact}{Fact}{Facts}
\def\cA{{\cal A}}
\def\cB{{\cal B}}
\def\cR{{\cal R}}
\providecommand{\RegA}{\mathsf{A}}
\providecommand{\RegB}{\mathsf{B}}
\providecommand{\RegC}{\mathsf{C}}
\providecommand{\RegD}{\mathsf{D}}
\providecommand{\RegK}{\mathsf{K}}
\providecommand{\RegM}{\mathsf{M}}
\providecommand{\RegR}{\mathsf{R}}
\providecommand{\RegS}{\mathsf{S}}
\providecommand{\secp}{\lambda}
\providecommand{\poly}{\mathsf{poly}}
\providecommand{\negl}{\mathsf{negl}}
\providecommand{\Adv}{\mathsf{Adv}}
\providecommand{\com}{\mathsf{com}}
\providecommand{\Com}{\mathsf{Com}}
\providecommand{\Gen}{\mathsf{Gen}}
\providecommand{\CNOT}{\mathsf{CNOT}}
\providecommand{\TD}{\mathsf{TD}}
\DeclareMathSymbol{\mh}{\mathord}{operators}{`\-}
\providecommand{\Verify}{\mathsf{Verify}}
\DeclareMathOperator*{\E}{\mathbb{E}}
\providecommand{\aux}{\mathsf{aux}}
\providecommand{\Rand}{\mathsf{Rand}}
\providecommand{\CRand}{\mathsf{CRand}}
\providecommand{\init}{\mathsf{init}}
\newcommand{\concat}{\|}
\newcommand{\chal}{\mathsf{chal}}
\newcommand{\Clone}{\mathsf{Clone}}
\renewcommand{\braket}[2]{\left\langle #1 \middle| #2 \right\rangle}
\title{Hard Quantum Extrapolations in Quantum Cryptography}
  \author{Luowen Qian\inst{1}\orcidID{0000-0002-1112-8822} \and Justin Raizes\inst{2} \and Mark Zhandry\inst{1}\orcidID{0000-0001-7071-6272}}
  \institute{NTT Research, Inc. \and Carnegie Mellon University}
    \author{Luowen Qian$^1$ \and Justin Raizes$^2$ \and Mark Zhandry$^1$}
    \date{$^1$NTT Research, Inc. \\ $^2$Carnegie Mellon University}
\begin{document}
\maketitle

\ifexabs\else
  \begin{abstract}
Although one-way functions are well-established as the minimal primitive for classical cryptography, a minimal primitive for quantum cryptography is still unclear.
Universal extrapolation, first considered by Impagliazzo and Levin (1990), is hard if and only if one-way functions exist.
Towards better understanding minimal assumptions for quantum cryptography,
we study the quantum analogues of the universal extrapolation task.
Specifically, we put forth the classical$\rightarrow$quantum extrapolation task, where we ask to extrapolate the rest of a bipartite pure state given the first register measured in the computational basis.
We then use it as a key component to establish new connections in quantum cryptography: (a) quantum commitments exist if classical$\rightarrow$quantum extrapolation is hard; and (b) classical$\rightarrow$quantum extrapolation is hard if any of the following cryptographic primitives exists: quantum public-key cryptography (such as quantum money and signatures) with a classical public key or 2-message quantum key distribution protocols.

For future work, we further generalize the extrapolation task and propose a fully quantum analogue.
We show that it is hard if quantum commitments exist, and it is easy for quantum polynomial space.
\end{abstract}

  \section{Introduction}
\fi
Modern cryptography works by reducing the security of complicated cryptosystems down to the hardness of solving simpler problems.
These reductions also have the side benefit that it enables us to modularize cryptographic constructions and theorems.
A celebrated example developed through a sequence of such transformations is the fact that one-way functions --- functions that are easy to compute but hard to invert --- are sufficient to realize a large swath of symmetric key cryptography~\cite{FOCS:GolGolMic84,C:LubRac85,STOC:Rompel90,HILL99}. On the other hand, one-way functions are inherent to the vast majority of complexity-based cryptography~\cite{FOCS:ImpLub89,Gol90-ci}. This makes the one-way function abstraction a bedrock of modern cryptography, and cryptographers frequently refer to the assumed existence of one-way functions as the ``minimal'' assumption in cryptography.

The similar question for quantum cryptography is just as important but is a lot less studied for now.
Interestingly, one-way functions are no longer minimal: recent works have demonstrated that using quantum information, cryptography can be based on assumptions that appear strictly milder than even a one-way function~\cite{TQC:kretschmer21,C:AnaQiaYue22,C:MorYam22,KQST23-algorithmica}. This leads to the following fundamental question:
\begin{center}
    \emph{Is there a quantum analog of a one-way function, that is both inherent to essentially all of complexity-based cryptography, while also being sufficient to build useful cryptosystems?}
\end{center}
In recent years, there have been some progress towards answering this question but it does not have a satisfactory answer yet.
To explain this, let us first recall how classical cryptographic security is usually formalized.
Broadly, we can divide them into two categories:
\begin{itemize}
    \item In a \textbf{decision-style} security game, the adversary's goal is to distinguish between two experiments, such as distinguishing whether the encrypted message is $0$ or $1$, or whether the game is in the real world or the ideal world.
    
    Decision-style quantum security assumptions are fairly well understood.
    Specifically, security games of this style can be captured by the notion of an EFI pair, and any such EFI pairs implies a number of cryptographic tasks, such as commitments \cite{AC:Yan22,ITCS:BCQ23}.
    \item In a \textbf{search-style} security game, the adversary's goal is simply to produce certain message to satisfy some predicate, such as forging a signature or a quantum money state.
    
    Search-style quantum security assumptions are a lot less studied.
    One glaring issue is that public-key quantum money, a very natural quantum cryptographic primitive, is not known to be related to the other quantum cryptographic primitives.
    In fact, public-key quantum money and its strengthening, quantum lightning~\cite{EC:Zhandry19b}, are only known from extremely strong assumptions like post-quantum obfuscation or in ideal oracle models, yet we do not even know if any cryptographic assumption such as EFI pairs are necessary for them.
\end{itemize}

A few recent works have studied what we call a \textbf{quantum$\rightarrow$classical} search-style assumption, where the adversary is given a potentially quantum input, and must produce some classical output.
An example is a quantum one-way state generator \cite{TQC:MorYam24}, where informally a keyed mixed state can be efficiently prepared from its classical key but it is hard to find its key given the state.
It was shown that quantum one-way state generators with inefficient verification algorithms are existentially equivalent to EFI pairs~\cite{STOC:KT24,FOCS:BJ24}.
As a consequence of this, EFI pairs are implied by primitives like secret-key quantum money, given the scheme has a classical secret.
However, public-key quantum money may not have a classical secret, and quantum lightning must not have a classical secret.
Thus, this does not give us a way to construct EFI pairs from public-key quantum money.

Going to more general search-style assumptions, very little is known about the search-style assumption where the adversary is supposed to output some quantum state, or even a \textbf{classical$\rightarrow$quantum} search-style assumption such as public-key quantum money.
To make progress towards addressing the fundamental question above, we can ask a more concrete question:
\begin{center}
    \emph{Is it possible to build EFI pairs from public-key quantum money?}
\end{center}

\subsection{Classical\extrapolateto{}Quantum Extrapolation}

To understand how EFI pairs, or equivalently, commitments may be built from public-key cryptography, let us take a step back and think about how a classical analogue of this is proved, for example, how we can construct a classical commitment scheme or a one-way function from a classical signature scheme.
The key middle step of the classical proof \cite{FOCS:ImpLev90} is the hardness of the \emph{universal extrapolation} task.
In particular, it is shown that secure signature schemes imply hardness of universal extrapolation, which in turn implies the existence of (distributional) one-way functions.

Inspired by this template, we study quantum analogues of this extrapolation task, which we emphasize is a search-style assumption.
We first define classical$\rightarrow$quantum extrapolation which, given an efficient quantum pure state, asks to extrapolate the rest of the quantum state conditioned/post-selected on measuring the first half of the state (in the computational basis).

\begin{definition}[Classical$\rightarrow$Quantum Extrapolation, informal]
    A classical$\rightarrow$quantum extrapolation problem is specified by a circuit $\Gen$ that produces a pure state, which can be written as
    \[
        \ket{\Gen} = \sum_{s} \alpha_s \ket{s}_{\RegA} \otimes \ket{\psi_s}_{\RegB}
    \]
    for some $\alpha_s \ge 0$ and unit vectors $\ket{\psi_s}$.
    We say (a uniform family of) $\Gen$ is hard if for every QPT adversary $\Adv$ (potentially with auxiliary input), its output has a negligible overlap with the correct state $\ket{\psi_s}$ given the classical part $s$, i.e.
    \[
        \E[\Tr(\proj{\psi_s}\Adv(s))] = \Tr\mparen{\sum_s \alpha_s^2 \proj{\psi_s} \Adv(s)}
    \]
    is negligible.
\end{definition}

This is a natural generalization of the classical extrapolation task for distributions to pure states, but instead of asking for the conditional distribution on $\RegB$ conditioned on $s$, we use a stronger requirement of preparing the pure (unmeasured) state $\ket{\psi_s}$.

Using similar proof ideas as classically, we can straightforwardly establish the following.

\begin{proposition}
    There exists a hard classical$\rightarrow$quantum extrapolation task if any of the following exists:
    \begin{itemize}
        \item Public-key quantum money scheme (with a classical public key).
        \item Public-key quantum signature scheme (with a classical public key).
        \item 2-message quantum key distribution that is only unpredictably-secure.
    \end{itemize}
\end{proposition}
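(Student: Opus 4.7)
The plan is to handle the three primitives under a unified template. For each, we build the extrapolation generator $\Gen$ by coherently running the primitive's honest algorithms: the classical ``public'' output (public key, transcript) is measured into register $\RegA$, while the remaining quantum state (money note, signature, or derived key, bundled with purifications of all private randomness and secret keys) is placed in register $\RegB$. Extrapolation hardness then follows directly from the primitive's security because, in each construction, the target sub-register of $\ket{\psi_s}$ lies in an ``accepting'' subspace of the primitive's verifier.

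For the quantum money case, $\Gen$ coherently executes $\Setup$ and $\Mint$ to produce
\[
\ket{\Gen} \;=\; \sum_{\pk} \sqrt{p(\pk)}\,\ket{\pk}_{\RegA} \otimes \ket{\psi_\pk}_{\RegB},
\]
where $\ket{\psi_\pk}$ purifies Mint's distribution conditioned on $\pk$ and contains the minted note in a designated sub-register alongside purifications of $\sk$ and the mint coins. The key observation is that the note sub-register of $\ket{\psi_\pk}$ lies, branch-by-branch, in the accepting subspace of $\Ver(\pk,\cdot)$, so writing $\Pi_{\Ver,\pk}$ for that projector we get the operator inequality $\proj{\psi_\pk} \le \Pi_{\Ver,\pk} \otimes \Id_{\text{aux}}$. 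Hence any extrapolator $\Adv$ with $\E_\pk \Tr[\proj{\psi_\pk}\Adv(\pk)] \ge 1/\poly(\secp)$ yields a counterfeiter that runs $\Adv(\pk)$, applies $\Ver(\pk,\cdot)$ to the note sub-register, and wins with the same advantage, contradicting no-counterfeiting security. The signature case is entirely analogous: let $s = (\pk,m)$ for $m$ drawn from the message distribution, let $\ket{\psi_s}$ purify an honest signing execution, and use the signature-verification projector in place of $\Pi_{\Ver,\pk}$; the existence of multiple valid signatures per message poses no obstacle since the verification projector spans them all.

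For the 2-message QKD case, $\Gen$ coherently runs both parties with purified private coins together with the key-derivation algorithm, placing the classical portion of the transcript in $\RegA$ and the rest---quantum-message registers, purified private coins, and a coherently computed key register $\RegK$---in $\RegB$. A successful extrapolator yields an unpredictability adversary as follows: the eavesdropper possesses the classical transcript $s$ and has physical access to the intercepted quantum messages; she runs $\Adv(s)$, uses the intercepted quantum messages to project the extrapolated quantum-message sub-registers onto the honest branch of the superposition, and finally measures $\RegK$ in the computational basis to produce her key guess. An analogous projector-inequality argument with the ``pinning'' projector playing the role of $\Pi_{\Ver,\pk}$ shows that her prediction probability is at least an inverse-polynomial fraction of the extrapolator's advantage, contradicting unpredictability.

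The main obstacle, shared across the three cases, is the presence of unavoidable auxiliary purification registers (secret keys, coins, private states) inside $\ket{\psi_s}$: naively, the extrapolator is asked to reproduce not only the target but also these auxiliaries, which looks strictly harder than the primitive's natural attack. The projector-inequality trick---valid precisely because the target register of $\ket{\psi_s}$ is always in the accepting subspace by construction---cleanly discards the auxiliary registers and reduces extrapolation to a primitive attack on the target sub-register alone. The QKD case is the most delicate, since the effective ``verifier'' must incorporate the eavesdropper's physical access to the intercepted quantum messages into the pinning step in order to turn a coherent superposition over honest branches into a concrete key guess.
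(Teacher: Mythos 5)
Your quantum money argument is sound and matches the paper's. Your signature argument is also correct, though via a slightly different route from the paper: you extrapolate a signature on a challenge $(\pk, m)$, whereas the paper extrapolates the quantum signing secret and then uses it to sign fresh messages; both work under the projector-inequality observation, and yours is arguably the more direct formulation.

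The QKD argument has a genuine gap. You extrapolate the full post-execution state (including a coherently computed key register $\RegK$) conditioned on the entire classical transcript, then try to ``project'' the extrapolated message registers onto the intercepted quantum messages before measuring $\RegK$. Two things go wrong. First, the pinning step is not a well-defined quantum operation: the intercepted message is one physical system, entangled with Alice's and Bob's real private registers, and the extrapolated message register is a separate system entangled with your fresh private registers; there is no channel that aligns these two worlds (a swap test only detects overlap, it cannot transplant correlations). Second, and more fundamentally, even a perfect fresh sample of $\ket{\psi_s}$ is useless: $\RegK$ in your copy is statistically independent of the real parties' key, and conditioned only on the classical transcript a secure QKD key has high min-entropy (that is precisely what the protocol achieves), so the collision probability between your measured $\RegK$ and the true key is exponentially small. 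The paper's reduction is \emph{active}, not passive: it defines the extrapolation task at the intermediate moment, right after Alice prepares her quantum message and before any quantum communication has reached Bob, so that Bob's internal registers are unentangled from everything else conditioned on the transcript so far. The interceptor then discards Alice's real quantum message, forwards the fresh extrapolated message register to Bob in its place, and continues to play Alice's role using the extrapolated copy of Alice's internal registers. By QKD correctness, the key the interceptor computes as ``Alice'' then agrees with Bob's real key with probability close to the extrapolation overlap. Passive eavesdropping plus extrapolation from the final transcript cannot recover the key, because unpredictability security is exactly a guarantee against such an attacker; you must actively substitute the quantum message to get a handle on the agreed key.
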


Our main theorem then finishes the proof by establishing the following.

\begin{theorem}
    If there exists a hard classical$\rightarrow$quantum extrapolation task, then quantum commitment schemes exist.
\end{theorem}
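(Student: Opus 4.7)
The plan is to build an EFI pair from the hard classical$\rightarrow$quantum extrapolation instance and then invoke the known equivalence between EFI pairs and canonical quantum commitments~\cite{AC:Yan22,ITCS:BCQ23}. Writing $\ket{\Gen} = \sum_s \alpha_s \ket{s}_\RegA \otimes \ket{\psi_s}_\RegB$ and $\sigma_\RegB \defeq \Tr_\RegA \ketbra{\Gen}{\Gen}$, I propose the pair
\[
\rho_0 \defeq \sum_s \alpha_s^2\, \ketbra{s}{s}_\RegA \otimes \ketbra{\psi_s}{\psi_s}_\RegB,
\qquad
\rho_1 \defeq \Bigl(\sum_s \alpha_s^2\, \ketbra{s}{s}_\RegA\Bigr) \otimes \sigma_\RegB.
\]
Both are efficiently preparable from $\Gen$: $\rho_0$ by running $\Gen$ once and measuring $\RegA$, and $\rho_1$ by doing the same for the $\RegA$-register while running an independent copy of $\Gen$ and tracing out its $\RegA$-register to obtain $\sigma_\RegB$ on $\RegB$.

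For statistical farness, I would consider the (information-theoretic) POVM element $M \defeq \sum_s \ketbra{s}{s}_\RegA \otimes \ketbra{\psi_s}{\psi_s}_\RegB$. A direct computation gives $\Tr[M\rho_0] = 1$ and $\Tr[M\rho_1] = \sum_{s,t}\alpha_s^2\alpha_t^2\,|\braket{\psi_s}{\psi_t}|^2$. Crucially, the right-hand side is precisely the extrapolation-game value of the \emph{efficient} adversary that, on input $s$, ignores $s$ and outputs the fixed state $\sigma_\RegB$, which must therefore be negligible by hardness. Hence $\TD(\rho_0,\rho_1) \ge 1-\negl$.

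For computational indistinguishability, I argue contrapositively. Suppose a QPT distinguisher $D$ attains advantage $\epsilon$. Since both $\rho_0$ and $\rho_1$ are block-diagonal in the computational basis of $\RegA$, WLOG $D$ first measures $\RegA$ to obtain $s$ and then applies an $s$-dependent efficient POVM $\{M_0^{(s)},M_1^{(s)}\}$ on $\RegB$. The advantage then rewrites as
\[
\sum_s \alpha_s^2\,\bigl(\bra{\psi_s} M_0^{(s)}\ket{\psi_s} - \Tr[M_0^{(s)}\sigma_\RegB]\bigr) = \epsilon,
\]
so by averaging, a noticeable $\alpha^2$-mass of $s$ admits a POVM with gap $\Omega(\epsilon)$. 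I would then build an extrapolation adversary which, on input such $s$, prepares $\sigma_\RegB$ (efficient) and filters it using a coherent purification of $M_0^{(s)}$ to boost the $\ket{\psi_s}$-component, ultimately outputting a state of inverse-polynomial overlap with $\ket{\psi_s}$, contradicting hardness.

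The main obstacle is the extraction step: the distinguishing gap only certifies that $M_0^{(s)}$ favors $\ket{\psi_s}$ over the \emph{average} conditional state $\sigma_\RegB$, not that $\ket{\psi_s}$ is a dominant eigenvector of $M_0^{(s)}$. I would overcome this by exploiting the seed overlap $\bra{\psi_s}\sigma_\RegB\ket{\psi_s} \ge \alpha_s^2$ and chaining polynomially many alternating reflections through $M_0^{(s)}$ and $\sigma_\RegB$ in a Marriot--Watrous/quantum singular value transformation style amplification, converting the additive distinguishing gap into a post-selected state with polynomial overlap with $\ket{\psi_s}$. A secondary concern is handling inputs $s$ for which $\alpha_s$ is itself negligible; this can be addressed by preprocessing $\Gen$ to truncate or resample outcomes whose weight falls below an inverse-polynomial threshold, yielding a well-spread input distribution.
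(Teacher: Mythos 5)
Your construction of an EFI pair fails at the computational indistinguishability step, and the failure is not a technical gap that can be patched by better amplification --- the claimed indistinguishability is simply false for the cases we care most about. Consider instantiating the hard classical$\rightarrow$quantum extrapolation task with public-key quantum money having a classical verification key (the paper's motivating example): here $s=\vk$ and $\ket{\psi_s}=\ket{\$_s}$ is a banknote. Your $\rho_0$ is $(\vk,\ket{\$_\vk})$ and $\rho_1$ is $(\vk,\text{banknote for an independent key})$. The money scheme's own $\Verify(\vk,\cdot)$ is an \emph{efficient} distinguisher with advantage $1-\negl$: it accepts on $\rho_0$ with probability close to $1$ and rejects on $\rho_1$ with overwhelming probability. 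So $(\rho_0,\rho_1)$ is not an EFI pair, even though extrapolation is hard. The root cause is that your reduction would need to be a search-to-decision reduction --- a distinguisher only \emph{verifies} $\ket{\psi_s}$, it does not \emph{synthesize} it --- and extrapolation hardness is compatible with efficient verification.

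Your proposed Marriott--Watrous/QSVT amplification does not rescue this. First, the seed overlap $\bra{\psi_s}\sigma_\RegB\ket{\psi_s}\ge\alpha_s^2$ must be negligible on essentially all of the probability mass: the ``output $\sigma_\RegB$'' adversary succeeds with probability at least $\sum_s\alpha_s^4 = \E_{s\sim\alpha^2}[\alpha_s^2]$, so hardness forces this to be negligible, and by Markov the mass surviving your $1/\poly$-threshold truncation is itself negligible. Second, even granting a noticeable seed, post-selecting $\sigma_\RegB$ on $M_0^{(s)}$ (or iterating reflections between the two) does not converge to $\ket{\psi_s}$: the high-acceptance subspace of $M_0^{(s)}$ can be a large subspace in which $\ket{\psi_s}$ has weight $\alpha_s^2$ relative to $\sigma_\RegB$, and the distinguishing gap certifies nothing about $\ket{\psi_s}$ being singled out within that subspace. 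This is exactly the decision-vs.-search gap, and it is why the paper does not build an EFI pair at all. Instead, the paper constructs a canonical-form commitment whose statistical hiding is argued information-theoretically via random MUB/2-design measurements and whose \emph{computational binding} --- itself a search property --- reduces directly to extrapolation, giving a search-to-search reduction that sidesteps this obstruction entirely.
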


In fact, our commitment construction trivially adapts to \emph{cloneable}$\rightarrow$quantum extrapolation, where we only require the challenge $\RegA$ register to be cloneable instead of strictly classical.
As a consequence, we can construct commitments from public-key quantum money scheme with a cloneable quantum public key as well.
Interested readers should refer to the formal treatment later.

We remark that these results are optimal in a few perspectives.
First, asking for the public key to be cloneable is barely a restriction: if one would like to reap the benefits of having a public key infrastructure, it is important for the authority to be able to efficiently clone and distribute public keys without all the users being online to produce new copies of the public key.
Second, it is known that uncloneable public-key signatures (with bounded number of copies of the public key) are in fact statistically possible~\cite{arxiv:GC01}.
Third, 3-message quantum key distribution with unpredictability security is also statistically possible.

\ifexabs
On a high level, in our construction we prepare $\ket{\Gen}$, do a random measurement on the quantum $\RegB$ register if committing to $0$ or on an all-$0$ bitstring if committing to $1$; then we send $s$ and the measurement result as the commitment message.
The analysis is more involved and is in the full paper.
We briefly highlight the proof of statistical hiding, where we prove that random measurements weakly \emph{destroy} information with respect to trace distance even if the measurement basis is revealed.
Interestingly, this seems to complement the traditional understanding of random measurements, which is that they are tomographically complete thus \emph{preserve} information.
\else
\begin{remark}
    It might be tempting to think that some form of distributional one-wayness should follow from hard classical$\rightarrow$quantum extrapolation tasks.
    Classically, consider a sampler of a hard-to-extrapolate distribution $Samp(r) \rightarrow (a, b)$.
    Then the truncated sampler $f(r) \rightarrow a$ is immediately a distributionally one-way function.
    However, since a quantum algorithm can build a pure state from scratch, it is unclear if any meaningful form of distributional one-wayness can be constructed here.
    From this perspective, it appears that extrapolation may be a more useful abstraction than distributional one-wayness.
\end{remark}
\fi

\subsection{Quantum Extrapolation}

Given the result above, it is natural to wonder if some version of the following generalization is true: any search-style assumption implies EFI pairs, even if both the inputs and outputs are potentially quantum.
With that said, to the best of our knowledge, all natural examples (binding security of commitments, or soundness of zero knowledge arguments) already imply EFI pairs.

Hence, we put forth a candidate \emph{quantum input, quantum output} search task.
Intuitively, the task is to convert a bipartite pure state into its canonical purification.
We show that this task is implied by commitment schemes and on the other hand, can be solved in quantum polynomial space.
We do not know how to use it to construct other primitives such as commitments, and we provide a discussion of the difficulties in extending our construction in \ifexabs{}the full paper\else\Cref{sec:tech-q2qe}\fi.

\begin{theorem}[Informal]
    \label{thm:informal-qextrapolation}
    If quantum bit commitments exist, then there exists an efficiently preparable state with Schmidt decomposition
    \[
        \sum_{i} \alpha_i \ket{A_i} \otimes \ket{B_i},
    \]
    such that it is hard to coherently ``strongly'' map $\ket{B_i}$ to $\ket{A_i^*}$ on average over $\alpha_i$.
    On the other hand, this task is possible in $\mathsf{pureUnitaryPSPACE}$ for any efficiently preparable state.
\end{theorem}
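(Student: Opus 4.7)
The plan is to prove hardness and feasibility separately. For the hardness direction from quantum commitments, I would start from a canonical quantum bit commitment $\ket{C_0}, \ket{C_1} \in \mathcal{H}_M \otimes \mathcal{H}_R$ that is computationally binding and statistically hiding (so $\rho_0^M \approx_s \rho_1^M$ in trace distance); such a scheme exists whenever quantum commitments do, by the standard flavor-swap equivalence. To make the Schmidt structure clean, I append a one-qubit sender-side flag $F$ and set $\ket{\tilde C_b}_{MRF} := \ket{C_b}_{MR}\ket{b}_F$, so the augmented states are exactly orthogonal without affecting security. The candidate hard state is
\[
  \ket{\psi} = \tfrac{1}{\sqrt 2}\mparen{\ket{\tilde C_0}_{MRF}\ket{0}_X + \ket{\tilde C_1}_{MRF}\ket{1}_X},
\]
with the Schmidt cut $\tilde A = MX$ versus $\tilde B = RF$. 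The flag register makes cross terms vanish, giving $\rho_{\tilde A} = \tfrac{1}{2}\proj{0}_X \otimes \rho_0^M + \tfrac{1}{2}\proj{1}_X \otimes \rho_1^M$, whose spectral decomposition reads off the Schmidt form $\ket{\psi} = \sum_{b,i}\sqrt{\lambda_i^b/2}\,\ket{b,e_i^b}_{\tilde A} \otimes \ket{\tilde e_i^b}_R\ket{b}_F$, where $(\lambda_i^b, \ket{e_i^b})$ is the spectral data of $\rho_b^M$ and $\ket{\tilde e_i^b}_R$ is the $R$-side Schmidt vector coming from $\ket{C_b}$.

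Any efficient unitary $V$ that strongly canonicalizes $\ket{\psi}$ must act, on the Schmidt subspace, as $V:\ket{\tilde e_i^b}_R\ket{b}_F \mapsto \ket{(e_i^b)^*}_R\ket{b}_F$, which crucially preserves the flag. I would then exhibit the following binding adversary: honestly commit to $0$ and hold $\ket{\tilde C_0}$; apply $V$ on $RF$, obtaining $\ket{\mathrm{canon}(\rho_0^M)}_{MR}\ket{0}_F$; flip the flag to $\ket{1}_F$; and apply $V^{-1}$ on $RF$. The Powers--St\o{}rmer inequality gives $\|\ket{\mathrm{canon}(\rho_0^M)} - \ket{\mathrm{canon}(\rho_1^M)}\|^2 \le \|\rho_0^M - \rho_1^M\|_1$, so under statistical hiding the intermediate state is negligibly close to $\ket{\mathrm{canon}(\rho_1^M)}_{MR}\ket{1}_F$, and $V^{-1}$ transports it to a state negligibly close to $\ket{\tilde C_1}$, contradicting computational binding.

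For the $\mathsf{pureUnitaryPSPACE}$ upper bound, I would use that a PSPACE machine can compute $\rho_{\tilde A}$ for any efficiently preparable $\ket\psi$ and diagonalize it to inverse-exponential precision. This determines the Schmidt coefficients $\alpha_i$ and the Schmidt basis $\ket{B_i}_{\tilde B} = \alpha_i^{-1}\bra{e_i}_{\tilde A}\ket\psi$ on the support. The desired unitary $V:\ket{B_i}\mapsto\ket{e_i^*}_{\tilde B}$ under a fixed isometric embedding of $\tilde A$ into $\tilde B$ is then a PSPACE-computable unitary, and standard PSPACE state-preparation and reversible-computation techniques realize it coherently without measurement, placing the canonicalization task in $\mathsf{pureUnitaryPSPACE}$.

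The main obstacle is arranging that a canonicalizer for the \emph{single} state $\ket\psi$ breaks the commitment; naively, one would need canonicalizers for both $\ket{C_0}$ and $\ket{C_1}$ separately to compose them into a binding attack. The flag register $F$ is designed precisely to sidestep this: by making the two commitment branches orthogonal on the sender side, it forces the Schmidt decomposition of $\ket\psi$ to split as a flag-labeled direct sum, so a single canonicalizer for $\ket\psi$ automatically contains canonicalizers for both $\ket{C_b}$'s as its flag-conditioned components. A secondary concern is handling only non-negligible average fidelity of the canonicalizer rather than near-perfect behavior; I would track the compounded errors through the Powers--St\o{}rmer bound and verify that the resulting binding advantage remains non-negligible.
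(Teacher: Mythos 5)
Your proof is correct in spirit but takes a genuinely different route from the paper's. The paper treats $\ket{\Com_0}$ and $\ket{\Com_1}$ as two separate extrapolation instances and composes two solvers $\cA_0$ and $\cA'_1$ (the inverse of a solver for $\ket{\Com_1}$), arguing that at least one of the two instances must be hard; you instead package both commitment branches into a single bipartite state $\ket\psi$ via a flag register, so that a single solver $V$ serves both branches, and the binding attack is $V^{-1}\circ(\text{flag flip})\circ V$. Both are forward-then-backward attacks, but your flag trick has the advantage of producing one explicit hard instance rather than a disjunction, and it avoids the need to argue separately that the solver is invertible. The key lemma also differs: the paper bounds $\abs{\braket{T_0}{T_1}}$ via the Holevo-fidelity/trace-distance inequality $F_H(\rho_0,\rho_1)\ge(1-\TD)^2$ combined with the triangle-like inequality for fidelity, whereas you use Powers--St{\o}rmer, which gives the cleaner $\norm{\ket{T_0}-\ket{T_1}}^2 \le \norm{\rho_0-\rho_1}_1$, and in fact a slightly tighter constant ($F_H\ge 1-\TD$). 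Two things deserve more care: (a) high fidelity of $V$ on $\ket\psi$ only directly controls the average over the two flag branches, so you need the short argument that $\norm{u_0+u_1}\ge 2\sqrt{1-\epsilon}$ forces each branch fidelity $\ge 1-O(\epsilon)$ and that the junk register produced by $V$ is approximately independent of the branch; and (b) the dimension of the output register $\RegB'$ must match $\RegA = MX$, not $RF$, so $V$ is an isometry into a fresh register rather than a map from $RF$ to itself. For the $\mathsf{pureUnitaryPSPACE}$ direction, your diagonalize-and-transport argument is morally the paper's approach unrolled; the paper instead reduces cleanly to a fidelity-$1$ Uhlmann instance and invokes the Algorithmic Uhlmann Theorem, which you would be better off citing directly rather than re-deriving the coherent implementation by hand.
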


\paragraph{Discussion: quantum minimalism.}
Mark Zhandry's talk on ``Quantum Minimalism'' \cite{Zha23-minimalism} highlighted various features that make one-way functions ``minimal'' for classical cryptography.
We recall these features here and highlight the fact that quantum extrapolation satisfy the first 7 proposed features to roughly the same extent as classically.
Thus, hard quantum extrapolations might be a good candidate for the minimal quantum cryptographic assumption from this perspective if we can build useful cryptography from these.
\begin{description}
    \item[Feature 1: (Somewhat) trivially implied by most general primitives] As discussed above, this matches one-way functions since we can somewhat trivially prove hardness of quantum extrapolation from commitments, which is implied by most general primitives.
    \item[Feature 2: Trivially and robustly implied by most concrete assumptions] This holds for all concrete assumptions that we can think of.
    \item[Feature 3: Simple to define] Modulo the complex conjugate condition which is necessary in order to satisfy other features, this is essentially true.
    \item[Feature 4: Falsifiable] This is probably not true in general. However, in its defense, its classical analogues, universal extrapolation hardness or distributional one-way functions, are not known to be falsifiable either. (If falsifiable assumptions like commitments are implied by this, then this feature is not as essential.)
    \item[Feature 5: Search problem] Quantum extrapolation is defined as a search problem.
    \item[Feature 6: Trivial combiners and universal constructions] A trivial combiner can be constructed by simply concatenating the respective parts of the two states: this thus implies a universal construction via standard techniques.
    \item[Feature 7: Minimal correctness requirements] This holds since the only requirement on the state is that it should be efficiently preparable.
    \item[Feature 8: Can build crypto?] We leave as interesting open question whether the hardness of quantum extrapolation can be used to construct EFI pairs or other cryptographically useful hardness.
\end{description}

\ifexabs\else
\section{Technical Overview}

\subsection{Why a New Approach?}

Before we discuss our commitment construction, one might wonder if it might be simpler to generalize the construction of commitments (or EFI pairs) from quantum one-way state generators \cite{STOC:KT24,FOCS:BJ24} to this setting.
After all, their techniques can handle quantum$\rightarrow$classical search-style assumption and we only need to do the other way around.

Upon closer inspection, it appears that unfortunately their techniques crucially rely on having access to a classical secret.
More specifically, their construction of EFI pair on a high level still goes through the classical construction of hardcore predicates from one-way functions, and this in turn builds on the fact that the secret is classical.

To see this, recall that the construction essentially leaks some random inner product about the secret and argues that such a leakage appears computationally random to the adversary.
However, the presence of a quantum secret shuts down this approach due to the lack of a quantum analogue of the Goldreich--Levin hardcore bit.
In fact, some natural formulation of such an analogue is flat out impossible.
For example, assume the quantum secret is a (computationally) Haar random state. Then a non-trivial leakage would instead be statistically independent of the secret, whereas the GL hardcore bit is statistically determined by the secret.

Another na\"ive approach is to try to apply the construction of one-way functions from hardness of universal extrapolation to the classical$\rightarrow$quantum extrapolation setting.
Unfortunately, this construction also applies randomness extractors on the secret and we run into a similar difficulty as above.

\subsection{Commitments from Classical\extrapolateto{}Quantum Extrapolation}

Given the known approaches of constructing EFI pairs do not seem to work, it appears that a drastically different approach is needed.
The initial idea is that since the binding security game of the commitment scheme is already a search-style assumption and is known to be existentially equivalent to EFI pairs, it might be easier to construct a computationally \emph{binding} commitment instead.
Perhaps we can carefully craft a commitment scheme so that breaking its binding would directly correspond to solving the extrapolation problem.
This idea was also used to construct commitments from secretly-verifiable statistically-invertible one-way state generators \cite{TQC:MorYam24}.

To construct a commitment, we will use the canonical commitment scheme template \cite{AC:Yan22}.
As a starting point, imagine a sender who manufactures $\ket{\Gen}$ in the opening register $\RegD$ and copies $s$ to the commitment register $\RegC$. This results in the state
\[
    \sum_s \alpha_s \ket{s}_\RegC \otimes \ket{s, \psi_s}_\RegD,
\]
which we will use as the commitment and the decommitment register if we want to commit to bit $1$.

Now imagine if we could also ``obliviously'' quantum sample the challenge distribution, that is, we can prepare the following pure state
\[
    \sum_s \alpha_s \ket{s}_\RegC \otimes \ket{s}_\RegD
\]
for the case of committing to $0$.
For example, if the reduced density matrix on $s$ is maximally mixed, then this is essentially asking to prepare the maximally entangled state, which can be efficiently done.

However, note that if this were indeed possible, we would be done as the commitment would be computationally binding.
If a malicious committer commits to $0$ and wishes to change his mind to decommit to $1$, he would essentially be forced to craft the state $\ket{\psi_s}$ given only $s$.%

\paragraph{Removing quantum sampling assumption.}
Unfortunately, not every efficient distribution can be quantum sampled unless, for example, $\mathsf{SZK} \subseteq \mathsf{BQP}$ \cite{AT07-qsample}.
Thus we want to also handle the case where $s$ can be arbitrarily distributed, which could be the case for a general quantum money scheme.

A first idea is that maybe starting from the $0$-commitment state, we can ``effectively'' remove $\ket{\psi_s}$ from the committer's view (or equivalently, its register $\RegD$).
Na\"ively we could simply assign the state $\ket{\psi_s}$ to the $\RegC$ register.
However, the resulting commitment scheme would be not hiding, so some care has to be taken.

The next idea, then, is to maybe employ some encryption scheme to do this.
More specifically, our second attempt is as follows:
\begin{align*}
    \ket{\com_0} &= \E_k\mbracket{\sum_s \alpha_s \ket{s, k}_\RegC \otimes \ket{s, Enc(k, \psi_s)}_\RegD}, \\
    \ket{\com_1} &= \E_k\mbracket{\sum_s \alpha_s \ket{s, k}_\RegC \otimes \ket{s, Enc(k, 0), \psi_s}_\RegD},
\end{align*}
for some (unitary) encryption scheme $Enc$.

Now, what encryption scheme would work?
The most natural candidate, quantum one-time pad, turns out not to work: if $\ket{\psi_s}$ are all encoded in Hadamard basis then the resulting scheme would again not be hiding.
More specifically, consider the state $1 \otimes \ket{+}^{\otimes n}$.
Then (ignoring the $\ket{+}^{\otimes n}$ part in $\com_1$ which is irrelevant for hiding)
\begin{align*}
    \ket{\com_0} &= \E_{k \in \{0, 1\}^{2n}}\mbracket{\ket{k}_\RegC \otimes \ket{Enc(k, \ket{+^n})}_\RegD} = \mparen{\ket{+}_{\RegC_X} \otimes \frac{\ket{0+}_{\RegC_Z\RegD} + \ket{1-}_{\RegC_Z\RegD}}{\sqrt2}}^{\otimes n}, \\
    \ket{\com_1} &= \E_{k \in \{0, 1\}^{2n}}\mbracket{\ket{k}_\RegC \otimes \ket{Enc(k, 0)}_\RegD} = \mparen{\ket{+}_{\RegC_Z} \otimes \frac{\ket{00}_{\RegC_X\RegD} + \ket{11}_{\RegC_X\RegD}}{\sqrt2}}^{\otimes n}.
\end{align*}
Thus we see that these two density matrices have statistical distance close to $1$ as $n \to \infty$.

Nevertheless, we make the crucial observation: this attack appears to rely on the secret quantum state being encoded in a specific basis.
If we attempt to encrypt, instead of only in two bases, in exponentially many bases, then the probability that we hit such a bad basis would be negligible.
To make this efficient, let us say that we just pick one of bases to encrypt uniformly at random.
Since adding random phases to a basis is equivalent to measuring it, this gives us the following random-measurement commitment scheme.

\begin{construction}[Informal]
    Consider an (exponentially large) family of complete measurement bases $M = \{\ket{\phi_{k, x}}\}$ indexed by $k$ and measurement outcome $x$.
    Given a quantum state $\ket\psi$, we use the notation $\ket{M_k(\psi)} \otimes \ket{M_k(\psi)}$ to informally denote (coherently) measuring $\ket\psi$ in basis $k$ and then cloning the outcome into the second register.
    More formally, we use this notation to denote the following pure state
    \[
        \ket{M_k(\psi)} \otimes \ket{M_k(\psi)} := \sum_x \braket{\phi_{k, x}}{\psi} \ket{x} \otimes \ket{x}.
    \]
    Then the random-measurement commitment with bases $M$ is the following construction:
    \begin{align*}
        \ket{\com_0} &= \E_k\mbracket{\sum_s \alpha_s \ket{s, k, M_k(\psi_s)}_\RegC \otimes \ket{s, k, M_k(\psi_s)}_\RegD}, \\
        \ket{\com_1} &= \E_k\mbracket{\sum_s \alpha_s \ket{s, k, M_k(0)}_\RegC \otimes \ket{s, k, M_k(0), \psi_s}_\RegD}.
    \end{align*}
\end{construction}

In the encryption perspective, this is equivalently using a random encryption scheme indexed by some public randomness $k$, and secret random phases are added to the basis specified by $k$.
Moving forward, we will use the measurement perspective since it is more convenient for the analysis.

We now need to argue that the commitment is hiding and binding and neither is obvious at this point.
Intuitively, hiding should hold as our intuition above indicates that a sufficiently random set of bases should evade the bad distinguishing attack with overwhelming probability.
On the other hand, giving the adversary a complete measurement of the state in a probably useless basis should not help the adversary to swap the state out with the zero state --- or in the encryption perspective, the encryption should be secure enough that the adversary could only build the state from scratch instead of extracting useful information from the encryption.

\paragraph{Weak statistical hiding.}
As observed above, the family of measurement bases must be sufficiently large, and different bases need to be ``independent'' enough to avoid the attack above.
Given these criteria, we consider a few candidates: (1) mutually unbiased bases (MUBs); (2) unitary $t$-designs; (3) stripped down $t$-designs such as binary phase bases \cite{MPSY24-pru}.
It turns out that both MUBs and $t$-designs for $t \ge 2$ work, while $1$-designs and binary phase bases do not necessarily work.
For comparison, MUBs have very good randomness complexity ($n$ bits of randomness for hiding $n$ qubits) while $2$-designs or random Cliffords are more efficient (can be applied in quasi-linear time).
For interested readers, we explain the counterexamples in \iflncs{}the full version\else\Cref{appendix:alternative-bases}\fi.

On a high level, the hiding proof can be reduced the following question: given any two mixed states $\rho_0, \rho_1$ that could be statistically far; is it true that the states after measuring them in a random basis $k$, $(\rho_0^{(k)}, k), (\rho_1^{(k)}, k)$ are somewhat statistically close?
This also might be an independently interesting question on its own beyond this application: traditionally, tomography asks for a set of measurements that preserves the quantum information; here, we ask for a set of measurements that destroys quantum information with respect to statistical distance.

Crucially, we need to bound the statistical distance \emph{even if the basis choice $k$ is revealed}.
This is problematic for the proof of $t$-designs, since $t$-designs generally do not guarantee anything if the secret key $k$ is revealed: everyone can check that the specific unitary corresponding to $k$ is applied instead of a Haar random unitary.
However, there is a simple trick we can apply: the statistical distance we need to bound above can be equivalently expressed as $\E_k[TD(\rho_0^{(k)}, \rho_1^{(k)}]$, as the distribution over $k$ is identical.
Furthermore, here the key is no longer given to the distinguisher.
Even though trace distance is not a polynomial (due to the absolute value), we can nevertheless bound it by an appropriate degree-$2$ polynomial and then invoke the security of $2$-designs.
In the end, we show that the average statistical distance under $2$-designs is upper bounded by $\norm{\rho_0 - \rho_1}_2/2$.
A na\"ive upper bound would be $\norm{\rho_0 - \rho_1}_2/2 \le \norm{\rho_0 - \rho_1}_1/2 = \TD(\rho_0, \rho_1)$ which yields a trivial bound.
However, by more carefully leveraging Jordan--Hahn decomposition of $\rho_0 - \rho_1$, we are able to show that $\norm{\rho_0 - \rho_1}_2/2 \le \TD(\rho_0, \rho_1)/\sqrt2$.
Curiously, it appears that even measurements under Haar random unitaries would still leave behind $1/2$ statistical distance.

To prove MUBs work, we leverage a useful fact by Ivanovic \cite{Ivonovic1981,Ivanovic92} that any density matrix can be decomposed into the sum of its post-measurement states over all MUBs; furthermore, each pair of states are orthogonal with respect to the Hilbert--Schmidt inner product after appropriate renormalization.
Utilizing the orthogonality, we can translate the average ($L^1$) statistical distance to ($L^2$) Hilbert--Schmidt distance to invoke the decomposition, and then back to statistical distance using the same trick above without losing too much.
In the end, we can compute that the average statistical distance under MUBs is at most $1/\sqrt2$ as well.

We refer interested readers to the later sections for a more detailed proof.

\paragraph{Computational binding.} To show (honest) binding, we show that any adversary who is given register $\RegD$ of $\ket{\com_0}$ and successfully opens to $1$ can be used to solve the classical$\rightarrow$quantum extrapolation problem.
In the classical$\rightarrow$quantum extrapolation game, the reduction is essentially given a mixed state $\sum_{s} \alpha_s^2 \proj{s}$.
In order to meaningfully invoke the binding adversary, the reduction must produce the reduced density matrix on $\RegD$ of $\com_0$, which looks like
\[
    \sum_{k, s} \alpha_s^2 \proj{s}\otimes \proj{k} \otimes \rho_{k,s} \otimes \proj{\vec{0}}
\]
when $\RegC$ is traced out, where $\rho_{k,s}$ denotes the mixed state resulting from measuring $\ket{\psi_s}$ in the basis $k$.
Observe that since the canonical-form commitment receiver is going to project on to the $\ket{\com_1}$ state, the adversary's output must (at the very least) be the following form:
\[
    \sum_{k, s} \alpha_s^2 \proj{s}\otimes \proj{k} \otimes \proj{\psi_s} \otimes \rho'_{k,s}
\]
for some $\rho'_{k,s}$.
This is true since the committer already holds a copy of $k$ and $s$.
Therefore, as long as we can mimic the distribution on $k$ and $\rho_{k, s}$, the reduction would work.

Recall our intuition above that $\rho_{k,s}$ should disclose very little information so that it would not help the adversary to prepare $\ket{\psi_s}$.
This suggests that maybe the reduction can just put there maybe a maximally mixed state and hope that the adversary would not notice.
However, recall from above that $\rho_{k,s}$ is only \emph{weakly} hiding.
In other words, it may contain some information about $\ket{\psi_{s}}$. It is not clear that the reduction can generate such a state without already violating the hardness of the extrapolation problem. 

Our solution is to simply generate $\rho_{k,0}$, a measurement of $\ket{\vec{0}}$ in the basis $k$, and feed it to the binding adversary instead.
In other words, we are again measuring all zeroes and expect the adversary to not see a difference.
This turns out to work.
Although $\rho_{k,0}$ only matches $\rho_{k,s}$ to a limited degree, it \emph{does} match register $\RegC$ from $\ket{\com_0}$. Since register $\RegC$ in $\ket{\com_1}$ would contain $\rho_{k,s}$ instead of $\rho_{k,0}$, the support states where $\rho_{k,s}$ significantly differs from $\rho_{k,0}$ precisely cover the case where the commitment is \emph{statistically binding}. Thus, the assumed binding adversary's advantage must be entirely on the portion of $\rho_{k,s}$ which matches $\rho_{k,0}$.
In order for the substitution to work, the receiver's view of $\rho_{k,0}$ in $\ket{\com_0}$ must be symmetric with the sender's view of $\rho_{k,s}$ in $\ket{\com_1}$. Therefore if the choice of basis $k$ appears in one view, it must appear in both.

\subsection{Quantum Extrapolation}
\label{sec:tech-q2qe}

Here, we discuss the more general task of quantum extrapolation, where an adversary operates on register $\RegB$ of the Schmidt decomposed state
\[
    \sum_{i} \alpha_i \ket{A_i}_{\RegA} \otimes \ket{B_i}_{\RegB}.
\]
For the sake of exposition, we omit conjugations of the states and assume it is hard to coherently map $\ket{B_i}$ to $\ket{A_i}$ on average over $\alpha_i$. Technically, the conjugation is necessary for ensuring that the solution is well-defined, since the Schmidt decomposition of a state is not necessarily unique. See the technical sections for a formal treatment.

\paragraph{Implications from Public-Key Quantum Money and Commitments.}
It is not hard to see that both public-key quantum money and commitments imply hard quantum extrapolation tasks. 
In the case of public-key quantum money, we may regard the serial number as $\ket{B_i}$ and the banknote as $\ket{A_i}$; thus, any adversary mapping $\ket{B_i} \mapsto \ket{A_i}$ counterfeits banknotes using their serial numbers.

The case of commitments is only slightly more complicated. A perfectly hiding commitment in canonical quantum form has a Schmidt decomposition
\begin{align*}
    \ket{\com_0} &= \sum_{i} \alpha_i \ket{c_i}_{\RegC} \otimes \ket{d_{i,0}}_{\RegD},
    \\
    \ket{\com_1} &= \sum_{i} \alpha_i \ket{c_i}_{\RegC} \otimes \ket{d_{i,1}}_{\RegD}.
\end{align*}
Note that because the commitment is perfectly hiding, the left hand side of the two decompositions are the same. Thus, if an adversary could coherently map $\ket{d_{i,0}} \mapsto \ket{c_i} \mapsto \ket{d_{i,1}}$, they could open $\ket{\com_0}$ to $\ket{\com_1}$ by operating only on the opening register, breaking computational binding.

Handling statistical hiding is essentially the same reduction, but the analysis is more difficult.
Essentially, we need to argue that statistical hiding implies that the two target states are close.
We prove this by establishing that the closeness of the target states is captured by the Holevo fidelity and then utilizing a tight connection between Holevo fidelity and trace distance \cite{Kho72-fidelity}.

\paragraph{Difficulties in Building Commitments.} 
It is natural to wonder whether our construction from cloneable$\rightarrow$quantum extrapolation can be extended to the more general quantum extrapolation task.
A natural attempt is to prepare the extrapolation state in the opening register $\RegD$, then ``destroy'' either $\ket{A_i}$ or $\ket{B_i}$, depending on the message bit, by measuring it and copying it in a random basis to the commitment register $\RegC$. This results in a commitment which looks like the following:
\begin{align*}
    \ket{\com_0} &= \E_k\mbracket{\sum_{i} \alpha_i \Ket{k, M_{k}\mparen{A_i}}_\RegC \otimes \Ket{k, B_i, M_{k}\mparen{A_i}}_\RegD},
    \\
    \ket{\com_1} &= \E_k\mbracket{\sum_{i} \alpha_i \Ket{k, M_{k}\mparen{B_i}}_\RegC \otimes \Ket{k, A_i, M_{k}\mparen{B_i}}_\RegD}.
\end{align*}

In order to open $\ket{\com_1}$ to $\ket{\com_0}$, it would suffice to map $\ket{B_i} \mapsto \ket{A_i}$ and let the statistical hiding take care of the differences between $M_{k}(A_i)$ and $M_{k}(B_i)$. Unfortunately, it is not clear whether this mapping is \emph{necessary} to break binding. In particular, the commitment register only contains partial information about $\ket{A_i}$, so the adversary is not necessarily bound to any particular $i$ when breaking binding.
instead, it could potentially map $\ket{B_i}$ onto a superposition of $\ket{A_j}$'s.

For a more concrete example, consider the state $\sum_x \ket x \otimes H\ket x = \sum_{xy} (-1)^{x \cdot y} \ket x \otimes \ket y$ and the fixed measurement of measuring in the computational basis (note that our binding proof before works for any family of measurement bases).
Then the resulting commitment has identical commitment states: $\sum_{xy} (-1)^{x \cdot y} \ket x \otimes \ket{y, x} = \sum_{xy} (-1)^{x \cdot y} \ket y \otimes \ket{x, y}$.
This means that the action to break binding is the identity unitary, whereas we would expect the adversary to apply the Hadamard gates.

\subsection{Related Works}

Both our construction and the commitment construction from one-way state generator \cite{STOC:KT24,FOCS:BJ24} (can) make use of unitary $t$-designs.
Interestingly, the similarity is superficial since the purposes are opposite.
(As a historical note, we actually arrived at this construction before we became aware of \cite{STOC:KT24}.)
In our work, we use $2$-designs to reduce statistical distance, whereas in their work, $3$-designs are indirectly used by classical shadow to preserve information about the quantum state since they are tomographically complete.

A\ifanonymous\else{}fter we have announced our results, a\fi{} more recent work by Dakshita Khurana and Kabir Tomer \cite{KT24} independently constructs commitment schemes from a state puzzle, which is equivalent to our definition of a hard classical$\rightarrow$quantum extrapolation task.
They also prove an amplification theorem for state puzzles, thus commitments can also be built from weak state puzzles, or equivalently, weakly-hard classical$\rightarrow$quantum extrapolation tasks.
However, their techniques would not immediately extend if we instead have a hard cloneable$\rightarrow$quantum extrapolation task.
To see this, their main idea is to view this hardness as a hard state synthesis problem, and construct hard one-way puzzles (or classical$\rightarrow$classical extrapolation tasks) from this hardness using similar techniques as for solving state synthesis with a classical oracle; afterwards, commitments are built using prior work \cite{STOC:KT24}.
In comparison, we construct a commitment directly from hard classical$\rightarrow$quantum extrapolation tasks and the construction trivially extends to cloneable bases as well.

\section{Commitments}

We recall the definition of a canonical non-interactive quantum bit commitment from~\cite{AC:Yan22}.
\begin{definition}[Quantum Bit Commitment Syntax]
    A canonical (non-interactive) quantum bit commitment) is specified by a family of unitaries $\{\Com_\secp\}_{\secp\in \mathbb{N}}$ which acts on two registers $\RegC$ and $\RegD$. It consists of two stages:
    \begin{itemize}
        \item \textbf{Commit.} To commit to a bit $b$, the sender prepares the state $\ket{b} \otimes \ket{\vec{0}}$ in register $(\RegC, \RegD)$, then applies $\Com_{\secp}$ to it to obtain the state $\ket{\com_b}_{\RegC,\RegD}$.\footnote{To simplify the notation, we write $\ket{\vec{0}}$ to denote enough $\ket{0}$ states to finish filling the register.} It sends register $\RegC$ to the receiver as the commitment register and keeps $\RegD$ as the opening register.
        \item \textbf{Open.} To open the commitment, the sender sends register $\RegD$ to the receiver. The receiver applies $\Com_{\secp}^{\dagger}$ to register $(\RegC, \RegD)$, then measures the register in the computational basis. If the measurement result is of the form $b \concat \vec{0}$ for a bit $b$, the receiver outputs $b$. Otherwise they output $\bot$.
    \end{itemize}
\end{definition}

Note that a canonical commitment inherently enjoys completeness; the receiver will always output $b$ as the result of opening a commitment to $b$, because its measurement for output $b$ is exactly a projection onto the state the sender prepares.

A quantum bit commitment must also satisfy hiding and a notion of binding.
We consider honest-binding, which informally guarantees that no adversary given an honestly-prepared commitment to $0$ can open it to $1$ instead, and vice-versa.
It is known that honest-binding for canonical form commitments suffices for stronger binding security \cite{AC:Yan22}.

\begin{definition}[Honest Binding]
    A commitment scheme $\{\Com_\secp\}_{\secp\in \mathbb{N}}$ is computationally (resp. statistically) $\epsilon$-\emph{honest-binding} if for every sufficiently large security parameter $\secp$, every auxiliary input $\ket{\psi}$ in register $\RegA$, and every polynomial-time (resp. physically) realizable unitary $U$ operating on register $(\RegA, \RegD)$,
    \begin{equation*}
        \Tr[\left(I\otimes \proj{\com_{1}} \right) U \left(\proj{\psi}\otimes \proj{\com_0}\right) U^\dagger]\leq \epsilon
    \end{equation*}
    When $\epsilon = \negl$, we simply refer to the commitment as honest-binding.
\end{definition}

It is known that binding on $0 \to 1$ as defined above also implies binding on $1 \to 0$, where the adversary's task is instead transforming an honest commitment to the $1$ bit into one to the $0$ bit.

\begin{definition}[Commitment Hiding]
    A commitment is computationally (resp. statistically) $\epsilon$-hiding if the states
    \begin{align*}
        &\Tr_{\RegD}\left(\proj{\com_0} \right) 
        \\
        &\Tr_{\RegD}\left(\proj{\com_1} \right)
    \end{align*}
    are $\epsilon$-computationally (resp. $\epsilon$-statistically) indistinguishable for every sufficiently large $\secp$. When $\epsilon = \negl$, we simply refer to the commitment as hiding.
\end{definition}

While the security presented here may appear weak, it is known that canonical form commitments satisfying honest binding and hiding is sufficient for constructing commitments with stronger security.

\section{Statistical Hiding via Random Measurements}\label{sec:randomizing-bases}

We give two methods to coherently ``destroy'' the information in a quantum state by dividing it into two registers so that each register on its own contains only a limited amount of information about the original state. Concretely, consider measuring a $n$-qubit register $\cR$ with respect to a basis $B_i$ which is drawn from a set of bases $\cB$, then outputting both the measurement result and the choice of basis $i$. We prove that for certain sets of bases, the expected distance between the distributions induced by this procedure on any two states (averaged over the choice of basis, which equals the total trace distance) is bounded away from $1$.

We first prove hiding when the set of bases $\cB$ is mutually unbiased.
To do this, we recall the definition of mutually unbiased bases and their properties.

Consider an $N$-dimensional quantum state.
A complete measurement is a set of rank-1 projectors $\{P_i\}_{i\in [N]}$  such that $\Tr\mparen{P_i P_j} = \delta_{ij}$.
Two complete measurements $\{P_i\}_{i\in [N]}$ and $\{Q_i\}_{i\in [N]}$ are mutually unbiased if $\Tr\mparen{P_i Q_j} = 1/N$ for all $i, j$.
A maximum set of mutually unbiased bases (MUBs) is a set of $(N + 1)$ complete measurements $\{\{P_i^{(r)}\}_{i\in [N]}\}_{r\in [N+1]}$ that are mutually unbiased.
For the qubit-string case ($N = 2^n$), it is known how to measure any complete measurement in a maximum set of MUBs in time $O(n^3)$ \cite[Section 5.1]{DPS04}.

All Hermitian $N$-dimensional matrices form an $N^2$-dimensional real vector space under the Hilbert--Schmidt inner product $\langle X, Y\rangle := \Tr\mparen{XY}$.
This also induces the Hilbert--Schmidt norm $\norm{X} = \sqrt{\Tr\mparen{X^2}}$.

\begin{lemma}[{\cite{Ivonovic1981,Ivanovic92}}]
    When a maximum set of MUBs is known, a density matrix $W$ has the following orthogonal decomposition
    \begin{equation}
        \label{eq:mub-decomposition}
        W = \frac{I}N + \sum_r\paren{W^{(r)} - \frac IN}
    \end{equation}
    with respect to the Hilbert--Schmidt inner product, where $W^{(r)} := \sum_i P_i^{(r)}WP_i^{(r)}$ is the projection onto the $r$-th MUB.
    Furthermore, each term is mutually orthogonal even when considering two different density matrices.
\end{lemma}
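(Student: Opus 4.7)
My plan is to exhibit the claimed decomposition as the unique orthogonal decomposition of $W$ (as a Hermitian matrix) with respect to the Hilbert--Schmidt inner product, relative to a particular direct sum of $(N+2)$ subspaces. Specifically, let $V_0 \defeq \spn\{I\}$ and, for each $r \in [N+1]$, let $V_r \defeq \spn\{P_i^{(r)} - I/N : i \in [N]\}$. I will carry out the following steps in order.

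First, I would verify mutual orthogonality of these subspaces. The key computation uses the mutual unbiasedness relation $\Tr(P_i^{(r)} P_j^{(r')}) = 1/N$ for $r \neq r'$ together with $\Tr(P_i^{(r)}) = 1$ and $\Tr(I/N) \cdot \Tr(P_i^{(r)}) / N = 1/N$. A direct four-term expansion gives
\[
\langle P_i^{(r)} - I/N,\; P_j^{(r')} - I/N \rangle = \tfrac{1}{N} - \tfrac{1}{N} - \tfrac{1}{N} + \tfrac{1}{N} = 0 \quad (r \neq r'),
\]
and the analogous computation against $I$ shows $V_r \perp V_0$. Next, since each $V_r$ contains exactly the traceless elements of the $N$-dimensional commutative algebra spanned by the orthogonal rank-$1$ projectors $\{P_i^{(r)}\}_i$, we have $\dim V_r = N - 1$. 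Hence
\[
\dim V_0 + \sum_{r=1}^{N+1} \dim V_r = 1 + (N+1)(N-1) = N^2,
\]
matching the dimension of the ambient Hermitian space; combined with pairwise orthogonality this establishes that $V_0 \oplus V_1 \oplus \dots \oplus V_{N+1}$ is the whole space.

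Second, I would identify the orthogonal projections of $W$ onto each $V_r$. The projection onto $V_0$ is $(\Tr(W)/N)\, I = I/N$ since $W$ is a density matrix. For the projection onto $V_r$, writing $w_i^{(r)} \defeq \Tr(P_i^{(r)} W)$ so that $W^{(r)} = \sum_i w_i^{(r)} P_i^{(r)}$ and $W^{(r)} - I/N = \sum_i (w_i^{(r)} - 1/N) P_i^{(r)} \in V_r$, I would check $W - W^{(r)} \perp V_r$ by computing for each $j$
\[
\langle W - W^{(r)},\, P_j^{(r)} - I/N\rangle = w_j^{(r)} - 1/N - w_j^{(r)} + 1/N = 0,
\]
using $\Tr(W^{(r)} P_j^{(r)}) = w_j^{(r)}$, which follows from $P_i^{(r)} P_j^{(r)} = \delta_{ij} P_i^{(r)}$. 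This yields the identity \eqref{eq:mub-decomposition}.

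Finally, for the ``two density matrix'' orthogonality statement, the same four-term expansion used above extends verbatim: for density matrices $W, V$, writing $W^{(r)} - I/N = \sum_i(w_i^{(r)} - 1/N)P_i^{(r)}$ and $V^{(r')} - I/N = \sum_j(v_j^{(r')} - 1/N)P_j^{(r')}$, we get for $r \neq r'$
\[
\langle W^{(r)} - I/N,\, V^{(r')} - I/N\rangle = \tfrac{1}{N}\Bigl(\sum_i (w_i^{(r)} - 1/N)\Bigr)\Bigl(\sum_j (v_j^{(r')} - 1/N)\Bigr) = 0,
\]
since each sum vanishes by $\sum_i w_i^{(r)} = \Tr(W) = 1$, and analogously each summand is orthogonal to $I/N$. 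There is no real obstacle here; the only point requiring care is the dimension count to ensure the subspaces $V_0, \dots, V_{N+1}$ exhaust the Hermitian space, which is what makes the mutual unbiasedness property \emph{exactly} strong enough to yield a direct sum decomposition.
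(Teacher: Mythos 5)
The paper imports this lemma from \cite{Ivonovic1981,Ivanovic92} without proof, so there is no in-paper argument to compare against; your proof is correct and is essentially the standard argument from those references. The dimension count $1 + (N+1)(N-1) = N^2$ is exactly where a \emph{maximum} set of MUBs is used to guarantee the $V_r$'s exhaust the Hermitian space, and the four-term Hilbert--Schmidt expansion with the mutual-unbiasedness relation $\Tr(P_i^{(r)}P_j^{(r')}) = 1/N$ cleanly yields both the single-matrix decomposition and the two-matrix orthogonality. One minor point worth making explicit: the "mutually orthogonal even for two density matrices" claim can only refer to pairs of terms with distinct index type, since the $I/N$ terms coincide (and a nonzero vector is never orthogonal to itself); your argument correctly handles exactly those pairs.
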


Leveraging these, we prove the following lemma, which intuitively states that the statistical distance would be somewhat hidden under most MUB measurements.
Notably, this is true even if the distinguisher knows the measurement basis.

\begin{lemma}\label{lem:mub}
    Let $N$ be a prime power and let $\cB$ be any set of $p \le N+1$ MUBs. For any two density matrices $W_0$ and $W_1$, the expected trace distance between the post-measurement state of $W_0$ and $W_1$ under a random $r\gets \cB$ is
    \[
        \E_{r\gets \cB}\mbracket{ \TD\mparen{W_0^{(r)}, W_1^{(r)}}}
        \le
        \sqrt{\frac N{2p}} \cdot \TD\mparen{W_0, W_1}.
    \]
\end{lemma}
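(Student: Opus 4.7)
The plan is to bound each post-measurement trace distance by a Hilbert--Schmidt norm, use the orthogonality in the Ivanovic MUB decomposition to collect those Hilbert--Schmidt norms into $\|W_0-W_1\|_2^2$, and then pay only a $\sqrt{2}$ factor when converting back to trace distance via Jordan--Hahn. Write $\Delta \defeq W_0 - W_1$ and $\Delta^{(r)} \defeq W_0^{(r)} - W_1^{(r)}$, so the quantity to bound equals $\tfrac{1}{2p}\sum_{r \in \cB}\|\Delta^{(r)}\|_1$, and both $\Delta$ and each $\Delta^{(r)}$ are traceless Hermitian.

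For the first step, since $\Delta^{(r)}$ is diagonal in the $r$-th MUB, Cauchy--Schwarz on its $N$ real eigenvalues gives $\|\Delta^{(r)}\|_1 \le \sqrt{N}\,\|\Delta^{(r)}\|_2$; a second Cauchy--Schwarz across the $p$ bases then yields
\[
    \sum_{r \in \cB}\|\Delta^{(r)}\|_1 \;\le\; \sqrt{Np}\,\sqrt{\sum_{r \in \cB}\|\Delta^{(r)}\|_2^2}.
\]
For the orthogonality step, applying \eqref{eq:mub-decomposition} to both $W_0$ and $W_1$ and subtracting makes the $I/N$ pieces cancel, so $\Delta = \sum_{r=1}^{N+1}\Delta^{(r)}$. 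Since $\Delta^{(r)} = (W_0^{(r)} - I/N) - (W_1^{(r)} - I/N)$ lies entirely in the $r$-th MUB sector, the cross-sector Hilbert--Schmidt orthogonality asserted in the lemma (which holds even between two distinct density matrices) gives $\langle \Delta^{(r)},\Delta^{(r')}\rangle = 0$ for $r\neq r'$. Dropping the non-negative terms outside $\cB$ then yields $\sum_{r \in \cB}\|\Delta^{(r)}\|_2^2 \le \sum_{r=1}^{N+1}\|\Delta^{(r)}\|_2^2 = \|\Delta\|_2^2$.

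The final step is the one I expect to be the crux, because the trivial estimate $\|\Delta\|_2 \le \|\Delta\|_1 = 2\,\TD(W_0, W_1)$ is lossy by $\sqrt{2}$ and would miss the claimed constant. Instead, Jordan--Hahn decompose $\Delta = \Delta_+ - \Delta_-$ with $\Delta_\pm \succeq 0$ of orthogonal support; traclessness of $\Delta$ forces $\Tr\Delta_+ = \Tr\Delta_- = \TD(W_0, W_1)$, and for each PSD piece, the nonzero eigenvalues satisfy $\sum_i \lambda_i^2 \le \paren{\sum_i \lambda_i}^2$, so $\|\Delta_\pm\|_2 \le \|\Delta_\pm\|_1 = \TD(W_0,W_1)$. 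Orthogonality of supports then gives $\|\Delta\|_2^2 = \|\Delta_+\|_2^2 + \|\Delta_-\|_2^2 \le 2\,\TD(W_0,W_1)^2$, and chaining the three bounds produces
\[
    \E_{r \gets \cB}\mbracket{\TD\mparen{W_0^{(r)},W_1^{(r)}}} \;\le\; \tfrac{1}{2p}\cdot\sqrt{Np}\cdot\sqrt{2}\,\TD(W_0,W_1) \;=\; \sqrt{\tfrac{N}{2p}}\,\TD(W_0,W_1),
\]
as required.
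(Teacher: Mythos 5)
Your proof is correct and follows essentially the same route as the paper's: you invoke the Ivanovic orthogonal decomposition to collect the per-basis Hilbert--Schmidt norms into $\|W_0-W_1\|_2^2$, use the Jordan--Hahn decomposition to get $\|W_0-W_1\|_2^2 \le 2\,\TD(W_0,W_1)^2$ (the crux that saves the $\sqrt{2}$), and Cauchy--Schwarz (your second application of which plays the role of the paper's Jensen step) to pass between $L^1$ and $L^2$. The only difference is the order in which these ingredients are chained, which is cosmetic.
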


In particular, if we instantiate $p = N$, then we get statistical distance to be at most $1/\sqrt2$.

\begin{proof}
Leveraging the orthogonal decomposition from \eqref{eq:mub-decomposition}, we can see that
\begin{align*}
    \norm{W_0 - W_1}^2
    &= \norm{\paren{\frac IN - \frac IN} + \sum_{r \in [N + 1]}\paren{\paren{W_0^{(r)} - \frac IN} - \paren{W_1^{(r)} - \frac IN}}}^2 \\
    &= \sum_{r \in [N + 1]}\norm{\mparen{W_0^{(r)} - \frac IN} - \mparen{W_1^{(r)} - \frac IN}}^2 \\
    &= \sum_{r \in [N + 1]}\norm{W_0^{(r)} - W_1^{(r)}}^2 \\
    &\ge \sum_{r \in \cB}\norm{W_0^{(r)} - W_1^{(r)}}^2,
\end{align*}
where the second equality follows from the fact that all the cross terms are $0$.
Dividing both sides by $p$, we get
\begin{align*}
    \E_r\mbracket{\norm{W_0^{(r)} - W_1^{(r)}}^2} &\le \frac{1}{p}\norm{W_0 - W_1}^2\\
    &= \frac1{p}\norm{W_+ - W_-}^2 \\
    &= \frac{\norm{W_+}^2 + \norm{W_-}^2}{p} \\
    &\le \frac{\norm{W_+}_1^2 + \norm{W_-}_1^2}{p} \\
    &= \frac{\norm{W_0 - W_1}_1^2}{2p},
\end{align*}
where the second equality is considering the Jordan--Hahn decomposition of $W_0 - W_1$, which gives two orthogonal PSD matrices $W_+, W_-$ such that $\Tr\mparen{W_+} = \Tr\mparen{W_-} = \frac12\norm{W_0 - W_1}_1$.
This is because $\Tr\mparen{W_+} + \Tr\mparen{W_-} = \Tr\mparen{W_+ + W_-} = \norm{W_0 - W_1}_1$ and $\Tr\mparen{W_+} - \Tr\mparen{W_-} = \Tr\mparen{W_+ - W_-} = \Tr\mparen{W_0 - W_1} = 0$.

Then the overall trace distance of doing a random MUB measurement is, by Cauchy--Schwarz,
\begin{align*}
    \frac12\E_r\mbracket{\norm{W_0^{(r)} - W_1^{(r)}}_1} &\le \frac12 \sqrt N \cdot \E_r\mbracket{\norm{W_0^{(r)} - W_1^{(r)}}} \\
    &\le \frac{\sqrt N}2 \cdot \sqrt{\E_r\mbracket{\norm{W_0^{(r)} - W_1^{(r)}}^2}} \\
    &\le \sqrt{\frac N{2p}} \cdot \frac{\norm{W_0 - W_1}_1}2,
\end{align*}
where the second inequality is due to Jensen's inequality.
\end{proof}

In \iflncs{}the full version\else\Cref{appendix:alternative-bases}\fi, we further discuss a few alternative choices of bases of measurement that work or do not work.

\section{Commitments from Hard Classical\extrapolateto{}Quantum Extrapolation}

\subsection{Classical\extrapolateto{}Quantum Extrapolation}
\label{sec:cq-extrap}

Here, we define a computational task which we call \emph{classical$\rightarrow$quantum extrapolation}. At a high level, a classical$\rightarrow$quantum extrapolation problem is specified by an efficiently sampleable distribution over pairs $(s, \ket{\psi_s})$ consisting of a classical string $s$ and a quantum state $\ket{\psi_s}$. Given a random $s$, the adversary's task is to extrapolate the quantum half of the pair $\ket{\psi_s}$. We say that this task is hard if no QPT adversary succeeds at this with noticeable probability over the choice of $s$.

\begin{definition}%
    A cloneable$\rightarrow$quantum extrapolation problem is specified by two efficient families of isometries $(\mathsf{Gen}, \mathsf{Clone})$ satisfying the following property.
    \begin{itemize}
        \item (Cloning correctness) For every $\lambda$, let $\braces{\ket{\mathsf{chal}_i}}_i$ be the set of unit vectors such that $\mathsf{Clone}_\lambda \ket{\mathsf{chal}_i} = \ket{\mathsf{chal}_i}\ket{\mathsf{chal}_i}$.\footnote{By the fact that isometries preserve inner products, it is not hard to see that all vectors in this set are mutually orthogonal. However, this might not form a basis since the set might be empty.}
        Then the state generated by $\mathsf{Gen}_\lambda$ should admit the following decomposition $\mathsf{Gen}_\lambda = \sum_i \alpha_i \ket{\mathsf{chal}_i}_\RegA \otimes \ket{\psi_i}_\RegB$ for some $\alpha_i \ge 0$ and unit vector $\ket{\psi_i}$.
        In other words, $\paren{\paren{I - \sum_i\proj{\mathsf{chal}_i}} \otimes I}\mathsf{Gen}_\lambda = 0$.
    \end{itemize}
    The goal of the task is on input $\ket{\mathsf{chal}_i}$, produce $\ket{\psi_i}$ with non-negligible probability, or more formally, the probability that $\Adv$ with auxiliary input $\aux$ succeeds
    \[
        \Tr\mparen{\sum_i \alpha_i^2 \proj{\psi_i} \Adv\mparen{\proj{\mathsf{chal}_i} \otimes \aux}}
    \]
    is non-negligible.\footnote{Here, we do not require that the target states $\ket{\psi_i}$ are orthogonal. For example, in public key quantum money, different serial numbers may recognize overlapping banknotes. The challenger can prepare the challenge mixed state $\sum_i \alpha_i^2 \proj{\mathsf{chal}_i}$ in the adversary's view by preparing $\Gen_\lambda$, then cloning $\ket{\mathsf{chal}_i}$ into an internal register.}

    In particular, if $\mathsf{Clone}\ket x = \ket x\ket x$ is simply cloning in the computational basis, then we consider this as a \emph{classical$\rightarrow$quantum} extrapolation task.
\end{definition}

Here, we make the distinction between efficiently cloneable basis and classical (or telegraphable) basis since efficiently cloneable basis might be a larger class \cite{ITCS:NZ24}.

\subsection{Hardness from Cryptographic Primitives}

Hard classical$\rightarrow$quantum extrapolation is implied by many natural quantum primitives, such as public-key quantum money and signatures with classical verification keys.
Morally, such hard extrapolation task is simply the impossibility to sample the secret quantum state conditioned on the public classical verification key; the reduction simply formalizes this idea using standard techniques.
In slightly more details,
\begin{itemize}
    \item In the case of public-key quantum money (or even a mini-scheme), no QPT adversary who is given the public verification key $\mathsf{vk}$ should be able to construct a state $\ket{\$}$ which passes $\Verify(\mathsf{vk}, \ket{\$})$ with noticeable probability; otherwise, they could ``clone'' or forge banknotes just by looking at the verification key.
    Since it is efficent to sample $\mathsf{vk}$ together with a passing banknote $\ket{\$}$, any adversary for the classical$\rightarrow$quantum extrapolation task defined by that sampling procedure forges banknotes with noticeable probability.
    This also extends to public-key quantum money where the public key is only cloneable instead of classical (and we get a hard cloneable$\rightarrow$quantum extrapolation task instead), or if the money has a serial number.
    \item Similarly, we can construct a hard cloneable$\rightarrow$quantum extrapolation task from signatures with cloneable verification keys.
    The task is again ``inverting'' the quantum secret state corresponding to the cloneable verification key that is used to sign additional messages.
    If we can create the quantum secret state with some noticeable overlap, then the forged signature would also pass verification with some noticeable probability as well.
\end{itemize}

Finally, we consider a slightly different task of quantum key distribution (QKD) with a bounded number of rounds.
In QKD, two parties, Alice and Bob, have access to an insecure quantum channel and a classically authenticated channel\footnote{Having access to a classically authenticated channel is also required classically: it is easy to see that without any authenticated channel the task is trivially impossible since the interceptor could always pretend to be Bob to Alice and pretend to be Alice to Bob. Public key infrastructure also does not get around this problem since it still assumes authenticated channels between the parties and the authority.}.
In the protocol, they take turns to send messages over the two channels with Alice sending the first message.
The malicious party, called the interceptor, can passively monitor the messages in the classical authenticated channel and can arbitrarily modify messages in the quantum channel.

There are many variants of QKD security considered in the literature.
Here we consider a very weak security notion, which we note is a search-type assumption.

\begin{definition}
    Consider a QKD protocol.
    Let $k_A, k_B, k_*$ be the keys Alice, Bob, and the interceptor produce correspondingly at the end of the protocol.
    We require the QKD protocol to be correct, meaning that without the presence of the interceptor, $\Pr[k_A = k_B \neq \bot]$ is negligibly close to $1$.
    
    We say the QKD protocol is (statistically) unpredictably-secure if for all efficient (or unbounded, resp.)\ interceptors, $\Pr[k_* = k_A \neq \bot \lor k_* = k_B \neq \bot]$ is negligible.
\end{definition}

\begin{fact}[{\cite[Protocol 2]{PRL:PS00}}]
    There exists a 3-message QKD that is statistically unpredictably-secure.
    Furthermore, the only quantum message is sent in the first message.
\end{fact}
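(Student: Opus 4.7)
The plan is to exhibit an explicit BB84-style three-message protocol and verify its statistical unpredictability using standard sampling and complementarity arguments. In message $1$ Alice samples $a\in\{0,1\}^n$ and $\theta\in\{+,\times\}^n$ uniformly, prepares the BB84 states $\bigotimes_i \ket{a_i}_{\theta_i}$, and sends them all to Bob over the quantum channel. In message $2$ Bob measures each qubit in a uniform basis $\theta'_i$ with outcome $b_i$, picks a uniformly random test subset $T\subset[n]$ of size $n/2$, and sends $\theta'$ together with $(b_i)_{i\in T}$ over the classical authenticated channel. In message $3$ Alice sends $\theta$, $(a_i)_{i\in T}$, and a uniformly random extractor seed $s$. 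Both parties abort (outputting $\bot$) unless $a_i=b_i$ for every $i\in T$ with $\theta_i=\theta'_i$; otherwise each locally computes $M=\{i\in[n]\setminus T:\theta_i=\theta'_i\}$ (also aborting if $|M|$ falls below some threshold) and outputs $\Ext(s,(a_i)_{i\in M})$, respectively $\Ext(s,(b_i)_{i\in M})$, where $\Ext$ is a strong quantum-proof extractor.

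First I would verify correctness: without an interceptor every matching position satisfies $b_i=a_i$, so the test always passes and a Chernoff bound gives $|M|\geq n/8$ with probability negligibly close to $1$, so $k_A=k_B\neq\bot$ almost surely. Next I would prove statistical unpredictability. Because $\theta$ is revealed only in the third message, any unbounded interceptor must commit to its interaction with the quantum register before knowing the bases. A Serfling-type tail bound on the random test subset $T$ then shows that, conditioned on passing the test with non-negligible probability, with overwhelming probability the error rate on $M$ is below some small constant. A complementarity/uncertainty argument in the style of standard BB84 security proofs then implies that the interceptor's residual quantum side information has min-entropy $\Omega(n)$ about $(a_i)_{i\in M}$, and likewise about $(b_i)_{i\in M}$. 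Applying the quantum leftover hash lemma with the independent seed $s$ from Alice's third message then shows that the key is statistically close to uniform conditioned on the interceptor's view, making it unpredictable. A man-in-the-middle attempt reduces to the above because the classical channel is authenticated, so every effective attack acts on only one of the Alice-to-interceptor or interceptor-to-Bob quantum legs.

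The main obstacle I foresee is fitting all of the above into strictly three messages while retaining statistical security against unbounded adversaries: textbook BB84 security proofs use extra rounds for information reconciliation and privacy amplification, neither of which we can afford here. I would handle this by doing no reconciliation at all---any discrepancy among the non-test matching positions simply causes $k_A\neq k_B$, which is allowed by the unpredictability definition provided the interceptor still cannot match either key---and by embedding the extractor seed in Alice's final classical message so that privacy amplification requires no additional round trip. If this minimalist route turns out to be too tight in some edge case (for example, if the seed length forces us to re-balance the parameters), I would fall back to importing Protocol~2 of \cite{PRL:PS00} together with its accompanying analysis directly.
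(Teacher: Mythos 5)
The paper does not actually prove this statement; it is stated as an imported \emph{fact} with a citation to \cite{PRL:PS00}, Protocol~2, and left at that. Your proposal instead reproves it from scratch by exhibiting a three-round BB84-style protocol (quantum BB84 states $\to$ Bob's bases $\theta'$ and test outcomes $\to$ Alice's bases $\theta$, test values, and extractor seed $s$) and verifying statistical unpredictability via Serfling sampling, an uncertainty-relation bound, and the quantum leftover hash lemma. This is a genuinely different route. What it buys is self-containedness; what the paper's route buys is economy and an externally validated analysis. Your message scheduling is the right idea for compressing to three rounds: withholding $\theta$ until the last message keeps the interceptor basis-ignorant when it touches the quantum register, and shipping the seed with Alice's final message avoids an extra privacy-amplification round. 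Two places deserve extra care when fleshing this out. First, since no reconciliation is done, $b_M$ and $a_M$ may differ, and the standard BB84 min-entropy bound is about Alice's raw key; you need the additional step that, conditioned on a passing (zero-error) test, $b_M$ agrees with $a_M$ up to a low-Hamming-weight string, so $H_{\min}(b_M\mid E)$ is within a small $\log\binom{n}{w}$ of $H_{\min}(a_M\mid E)$. Second, because Bob reveals $(b_i)_{i\in T}$ before Alice reveals $\theta$, you should check that this early disclosure (on test positions only, with bases still hidden) does not help the interceptor; it should not, since the attack on the quantum register has already been committed. With these points attended to, the sketch is a plausible alternative proof; your stated fallback of simply importing Protocol~2 of \cite{PRL:PS00} is exactly what the paper does.
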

\begin{fact}[{\cite{CRYPTO:MW24}}]
    Assuming the existence of post-quantum one-way functions, there exists a 2-message QKD that is computationally unpredictably-secure.
\end{fact}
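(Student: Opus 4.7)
The plan is to construct a 2-message QKD protocol directly from post-quantum one-way functions, leveraging a form of quantum public-key encryption. The key observation is that the target security notion is merely \emph{unpredictability} of the shared key, which is substantially weaker than IND-CPA security for an encryption scheme, and therefore plausibly achievable from assumptions as weak as post-quantum OWFs via recent constructions of quantum public-key primitives with classical secret keys.

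First I would recall the standard consequences of post-quantum OWFs: post-quantum secure $\PRG$s, pseudorandom functions, and pseudorandom state generators ($\mathsf{PRSG}$s) with classical keys. These provide the building blocks. The protocol follows the key-exchange template adapted to the quantum setting. In message 1, Alice samples a classical secret $\sk$, derives a quantum ``public key'' $\ket{\phi_{\sk}}$ via a $\mathsf{PRSG}$, and sends $\ket{\phi_{\sk}}$ over the quantum channel together with a short classical hint (such as a hash commitment to $\sk$) over the authenticated channel. In message 2, Bob verifies consistency of the received quantum state with the classical hint using a test procedure that passes whenever the state is honestly generated, samples a uniformly random key $k$, encodes $k$ into a classical ciphertext $c$ by applying a measurement-based encryption on the received state, and sends $c$ back over the authenticated channel. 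Alice decrypts $c$ using $\sk$ to recover $k$, and both parties output $k$ (or $\bot$ on verification failure).

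The main obstacle is the security proof. One must argue that any QPT interceptor who tampers arbitrarily with the quantum channel and passively observes the classical authenticated messages cannot predict the exact key $k$ with non-negligible probability. The argument proceeds by a sequence of hybrids: first, replace the quantum public key $\ket{\phi_{\sk}}$ by a truly random state using $\mathsf{PRSG}$ security; second, argue that a Haar-random (or $t$-design) state carries only negligible information about $\sk$, so the interceptor's tampered quantum state together with the classical transcript leaves them with essentially no side information usable to extract $\sk$; third, by correctness of the measurement-based encryption and a standard unpredictability reduction, predicting $k$ from $c$ without $\sk$ succeeds with at most negligible probability. The delicate technical content, and the contribution of the cited Morimae--Wang work, lies in instantiating this template with a quantum public-key primitive and a verification procedure that together fit into exactly 2 messages while respecting the asymmetry between the unauthenticated quantum channel and the authenticated classical channel.
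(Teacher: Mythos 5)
This statement is an \emph{imported} fact, cited directly from Morimae--Wang \cite{CRYPTO:MW24}; the paper contains no proof of it and relies on that reference entirely. Your sketch is directionally aligned with Morimae--Wang at the level of the template (build quantum public-key encryption with a classical secret key and quantum public key from post-quantum OWFs, then run it as a key transport protocol in two messages, with the quantum public key going over the unauthenticated quantum channel and the ciphertext and a classical hint going over the authenticated channel). So there is no conflicting ``paper proof'' to compare against.

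That said, the security argument you sketch has a genuine gap, and it is exactly the gap that Morimae--Wang's construction is designed to close. Your hybrid chain (replace the quantum public key with a Haar-random state, argue that it leaks nothing about $\sk$, conclude that predicting $k$ without $\sk$ is hard) does not address the natural attack: the interceptor simply \emph{discards} Alice's quantum public key and substitutes one of their own, for which they hold the matching secret key. Bob then encrypts $k$ under the interceptor's public key, the ciphertext is sent in the clear over the authenticated channel, and the interceptor decrypts it, winning with $k_* = k_B \neq \bot$. Nothing in your hybrid argument rules this out, because the attacker never tries to learn $\sk$ --- they route around it. The security therefore has to come from Bob detecting (and rejecting) a substituted public key, using only the classical authenticated hint; i.e.\ the quantum public key must be \emph{tamper-resilient} relative to a short classical certificate. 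Making a quantum state publicly verifiable against a classical string is precisely the nontrivial technical content of \cite{CRYPTO:MW24}, and your sketch does not explain how this is achieved, nor does it identify it as the central obstacle. (Your first hybrid also quietly breaks correctness --- once the public key is Haar random, Alice cannot decrypt and there is no $\sk$ --- so the game in that hybrid is ill-posed without further reformulation.) For a correct treatment one should defer to the actual Morimae--Wang construction rather than the generic ``PKE $\Rightarrow$ 2-message KE'' template.
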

\begin{fact}
    1-message QKD cannot be unpredictably secure.
\end{fact}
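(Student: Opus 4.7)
The plan is to reduce to the honest correctness of the protocol via a trivial ``man-in-the-middle simulation'' attack. In a 1-message QKD, Alice produces a single message consisting of a classical part $c$ (sent on the authenticated channel) and a quantum part $\rho$ (sent on the quantum channel), and then computes her key $k_A$ as a function of her private randomness and $c$. Bob, upon receiving $(c, \rho)$, runs his key-derivation algorithm to obtain $k_B$. Correctness states that when $(c,\rho)$ is delivered unmodified, $\Pr[k_A = k_B \neq \bot] \geq 1 - \negl(\secp)$.

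The interceptor's strategy is as follows. First I would have the interceptor receive Alice's message $(c, \rho)$ in full (observing the classical part $c$ on the authenticated channel and capturing the quantum part $\rho$). Second, the interceptor runs Bob's key-derivation algorithm on $(c, \rho)$ to obtain a key $k_*$. Third, the interceptor forwards an arbitrary message to Bob (e.g., the classical $c$ together with some freshly prepared all-zero quantum state); what Bob outputs is irrelevant for the attack.

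By correctness applied to the ``Bob-run'' performed by the interceptor on the genuine $(c,\rho)$, we have $k_* = k_A \neq \bot$ except with negligible probability. This immediately gives $\Pr[k_* = k_A \neq \bot] \geq 1 - \negl(\secp)$, which is far from negligible, contradicting unpredictability security. Crucially, this attack uses only the fact that the interceptor can capture the quantum channel message before (or instead of) Bob receiving it, which is why the argument breaks for 2-message and higher protocols: once Bob has to respond, his response can depend on having received the true quantum state, which the interceptor cannot simultaneously keep and deliver by no-cloning.

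The only place one needs to be slightly careful is making sure $k_A \neq \bot$ with overwhelming probability in the honest execution, but this is immediate from the stated correctness condition $\Pr[k_A = k_B \neq \bot] \geq 1 - \negl$. The main ``obstacle,'' if any, is simply articulating that Alice in a 1-message protocol cannot condition her key on any signal from Bob (since none exists), so the interceptor's decision to block Bob's message does not change the distribution of $k_A$; this is a structural observation about 1-message protocols rather than a technical difficulty.
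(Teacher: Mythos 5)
Your proof is correct and follows essentially the same strategy as the paper: capture Alice's single message, simulate Bob's key-derivation on it, and invoke correctness to conclude $k_* = k_A \neq \bot$ with overwhelming probability. The only real difference is cosmetic: the paper phrases the attack as performing Bob's action coherently and then arguing the key measurement is gentle (implicitly with an eye towards handing the undisturbed state back to Bob), whereas you observe — correctly — that since the security condition is a disjunction over $k_A$ and $k_B$, the interceptor may as well block Bob entirely and target $k_A$ alone, making the gentleness argument unnecessary. Both are valid; your version avoids an extra step by noting that the joint distribution of $(k_A, k_*)$ under your attack is literally identical to that of $(k_A, k_B)$ in the honest run, and that Alice's output distribution cannot depend on the interceptor since she receives no feedback in a one-message protocol.
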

\begin{proof}[Proof sketch]
    This follows from a straightforward observation that the interceptor could simply perform Bob's honest action coherently and do a gentle measurement to extract Bob's key.
    The measurement is gentle since Alice's key is already determined after the first message, and by correctness, Bob should output the correct key with overwhelming probability.
    Finally, note that this attack is efficient.
\end{proof}

Given these known results, it is natural to wonder if 2-message QKD can be statistically secure, and whether it requires computational assumptions.
We observe that hard classical$\rightarrow$quantum extrapolation task also follows from 2-message QKD and thus it requires computational assumptions somewhere between a post-quantum one-way function and a hard classical$\rightarrow$quantum extrapolation task.
In fact, we can prove something slightly stronger.

\begin{proposition}
    If there is an unpredictably-secure QKD where all but the last two messages are classical, then there exists a hard classical$\rightarrow$quantum extrapolation task.
\end{proposition}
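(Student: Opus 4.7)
The plan is to construct the hard classical$\to$quantum extrapolation task by running the QKD honestly (in purified form) through all of its classical messages---the first $n-2$ rounds by hypothesis---and then to argue that any extrapolation solver can be turned into an interceptor who impersonates Bob for the remaining at most two quantum rounds.

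Formally, let $\ket{\Gen}$ be the pure state obtained by purifying Alice's and Bob's randomness and running both through the first $n-2$ rounds, with the classical channel copying each transmitted value into a dedicated transcript register $\RegA$; the remaining purified randomness and states of both parties live in $\RegB$. Because only classical messages have been exchanged so far, conditioning on the transcript $t$ leaves the $\RegB$ portion as an \emph{unentangled product} of Alice's purified state $\ket{a_t}$ and Bob's purified state $\ket{b_t}$:
\[
    \ket{\Gen} \;=\; \sum_t \alpha_t \, \ket{t}_{\RegA} \otimes \ket{a_t,b_t}_{\RegB}.
\]
This defines a candidate classical$\to$quantum extrapolation task whose challenge is $t$ and whose target is the pure product $\ket{a_t,b_t}$.

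Now suppose for contradiction that an extrapolation adversary $\Adv$ achieves non-negligible expected overlap $\epsilon := \sum_t |\alpha_t|^2 \bra{a_t,b_t}\Adv(\proj{t})\ket{a_t,b_t}$. I construct an interceptor $\mathcal{I}$ as follows: passively forward the first $n-2$ classical messages and record $t$; run $\Adv(\proj{t})$ to obtain a state $\rho_t$ on $\RegB$; and for each of the remaining (at most two) quantum rounds, impersonate Bob by applying his honest algorithm to the Bob-half of $\rho_t$, sending the resulting message to real Alice and absorbing any quantum message from real Alice into the Bob-half, while discarding anything intended to or from real Bob. Finally, $\mathcal{I}$ runs Bob's key-derivation on the Bob-half of $\rho_t$ to obtain $k_B'$ and outputs $k_B'$. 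When $\rho_t = \proj{a_t,b_t}$, real Alice together with $\mathcal{I}$'s simulated Bob are executing the honest protocol starting from the correct joint state $\ket{a_t}\otimes\ket{b_t}$, so by QKD correctness $k_A = k_B' \neq \bot$ up to negligible error. For general $\rho_t$, setting $p_t := \bra{a_t,b_t}\rho_t\ket{a_t,b_t}$, the Fuchs--van de Graaf inequality against a pure target gives $\TD(\rho_t,\proj{a_t,b_t}) \le \sqrt{1-p_t}$, which cannot grow under the subsequent physical simulation; hence $\mathcal{I}$'s conditional success is at least $1 - \sqrt{1-p_t} - \negl \ge p_t/2 - \negl$ using the elementary inequality $1-\sqrt{1-p}\ge p/2$. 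Averaging over $t$ with weights $|\alpha_t|^2$ yields overall success at least $\epsilon/2 - \negl$, contradicting unpredictability.

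The main obstacle---and the reason the hypothesis requires every message except the last two to be classical---is that the extrapolated Bob-half is an independent copy, uncorrelated with real Alice's register. For the reduction to invoke QKD correctness, the joint state of real Alice's register and the simulated Bob-half must agree with the honest joint state, which holds precisely because we break before any quantum interaction, while the honest state is still a product of Alice's and Bob's purified marginals. Breaking after even a single quantum round would leave a genuinely entangled $\ket{\Psi_t}$, and pairing an independent extrapolated Bob-half with real Alice's register would yield only the product $\rho_A\otimes\rho_B$ in place of $\ket{\Psi_t}$, voiding the correctness guarantee on which the reduction rests.
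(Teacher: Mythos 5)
Your proof is correct and takes essentially the same approach as the paper's: define the extrapolation task by purifying the execution and cutting at the boundary between the classical phase and the (at most two) quantum rounds, observe that conditioning on the classical transcript leaves Alice's and Bob's registers in a product state, and convert any extrapolation solver into a man-in-the-middle interceptor for the quantum rounds whose predicted key matches by correctness. The one cosmetic difference is which party you impersonate: the paper declares (WLOG) ``Alice'' to be the party receiving the last message and has the interceptor impersonate her to Bob, targeting $k_B$; you impersonate Bob to the real Alice, targeting $k_A$. These are symmetric and both valid under the security game's disjunction over $k_A$ and $k_B$. Your Fuchs--van de Graaf plus $1-\sqrt{1-p}\ge p/2$ bookkeeping makes quantitative the step the paper states informally as the interceptor succeeding ``with roughly the same probability,'' and your closing paragraph explaining why the classical/quantum cut is essential matches the paper's remark that Bob's registers are unentangled at that point.
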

\begin{proof}
Without loss of generality, let us say Alice receives the last message.
Then consider the state immediately after Alice sends her quantum message, and the task is to produce the joint quantum state on Alice's and Bob's internal register and the message register conditioned on the classical transcript so far: this is a classical$\rightarrow$quantum extrapolation task.
Note that this is the state before any quantum communication reaches the other party, thus Bob's internal registers must be unentangled from the rest conditioned on the classical transcript.

Suppose for contradiction that this is easy, then we construct the interceptor as follows:
\begin{itemize}
    \item Upon receipt of Alice's last classical message, disgard her quantum message and produce a fresh copy of the joint quantum state and use its message register as the quantum message to send to Bob.
    \item Upon receipt of Bob's messages, run Alice's honest action to extract Bob's key.
\end{itemize}
By correctness of the QKD protocol, whenever we succeed in producing the quantum state with some noticeable overlap (which happens with non-negligible probability by assumption), Bob will not abort the protocol and we can predict Bob's key with roughly the same probability, breaking its security.
\end{proof}

Note that as special cases, this proposition covers any 2-message QKD, or any 3-message QKD whose first message is classical.

\subsection{A Quantum Bit Commitment Scheme}

Let $\{\Rand_k\}_{k \in \{0,1\}^{\poly(n)}}$ be a family of efficiently implementable $n$-qubit ``randomizing'' unitaries, which satisfy
\begin{equation}
    \label{eq:weak-hiding}
        \E_{k\gets \{0,1\}^{\poly(n)}}\mbracket{ \TD\mparen{\proj{\psi_0}^{(k)}, \proj{\psi_1}^{(k)}}}
        \le
        \frac{1}{\sqrt{2}}
\end{equation}
for any orthogonal states $\ket{\psi_0}$ and $\ket{\psi_1}$, where $\proj{\psi}^{(k)}$ is mixed state resulting from applying $\Rand_k$ to $\ket{\psi}$ and measuring it in the standard basis.\footnote{The key may be much longer than the number of qubits operated on, for example if using random Cliffords.}
These exist by \Cref{lem:mub}\iflncs\else or \Cref{lem:2-design}\fi.
More explicitly, for $P_x^{k} \coloneqq \Rand_k\proj{x}\Rand_k^\dagger $, we define
\[
    \proj{\psi}^{(k)} \coloneqq \sum_{x\in \{0,1\}^n} P_x^k \proj{\psi} P_x^k.
\]
Roughly speaking, for any fixed state $\ket{\psi}$, with high probability over $k \gets \{0,1\}^n$, the state $\Rand_k \ket{\psi}$ is well-spread in the standard basis (i.e., measuring it gives an outcome that is \emph{somewhat} uniform). 

Let $\CRand$ denote the controlled $\Rand_k$ unitary 
\begin{align*}
    \CRand \coloneqq \sum_{k \in \{0,1\}^{\poly(n)}} \proj{k} \otimes \Rand_k.
\end{align*}
\noindent We use the notation $\CRand_{\RegA \rightarrow \RegB}$ as shorthand for $\sum_k \proj{k}_{\RegA} \otimes \left( \Rand_k\right)_{\RegB}$, and we follow a similar convention for other controlled gates such as $\CNOT$.

\begin{construction}[Weakly Hiding Commitment]
    \label{constr:weak-hiding} 
    Let $(\Gen, \Clone)$ be a cloneable$\rightarrow$quantum extrapolation task.
    Our weakly-binding quantum bit commitment works as follows. 
\begin{itemize}
    \item To commit to a bit $b$:
    \begin{enumerate}
    \item The sender first prepares the following initial state (which is independent of $b$):
    \begin{align*}
        \ket{\mathsf{init}} \coloneqq \ket{0}_{\RegC_0} \otimes \ket{+^n}_{\RegK} \otimes \left(\sum_s \beta_s \ket{\chal_s}_{\RegS} \ket{\psi_s}_{\RegM_0}\right) \otimes \ket{0}_{\RegM_1},
    \end{align*}
    where $\RegM_0,\RegM_1$ are both $m$ qubit registers, and $\RegC_0$ is the same as size $(\RegK,\RegM_0)$.
    \item Next apply $\bigl(\Clone_{\RegS \rightarrow (\RegS, \RegC_1)} \cdot \CNOT_{(\RegK,\RegM_b) \rightarrow \RegC_0}\bigr) \cdot \bigl(\CRand_{\RegK \rightarrow \RegM_b}\bigr)$.
    \item Finally, send the $\RegC = (\RegC_0, \RegC_1)$ register.
    \end{enumerate}
    \item To decommit, send $b$ along with all the remaining registers $\RegK,\RegS,\RegM_0,\RegM_1$. The receiver verifies by checking that the state is the expected state
    \begin{align*}
        \ket{\com_b} \coloneqq \bigl(\Clone_{\RegS \rightarrow (\RegS, \RegC_1)} \cdot \CNOT_{(\RegK,\RegM_b) \rightarrow \RegC_0}\bigr) \cdot \bigl(\CRand_{\RegK \rightarrow \RegM_b}\bigr) \cdot  \ket{\init}.
    \end{align*}
\end{itemize}
\end{construction}

\begin{theorem}
    If $(\Gen, \Clone)$ is a hard cloneable$\rightarrow$quantum extrapolation problem, then \Cref{constr:weak-hiding} is a $\frac{1}{\sqrt{2}}$-statistically hiding and computationally binding commitment.
\end{theorem}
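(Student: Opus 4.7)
The plan is to establish hiding and binding separately, starting with an explicit expression for the reduced state of $\ket{\com_b}$ on the commitment register $\RegC$. Tracing out $\RegD$, direct calculation gives
\[
\rho_b^\RegC \;=\; \sum_s \beta_s^2\, \proj{\chal_s}_{\RegC_1} \otimes \E_k\mbracket{\proj{k} \otimes \mu_{k, \phi_b}}_{\RegC_0},
\]
where $\mu_{k,\phi}$ is the standard-basis measurement of $\Rand_k\ket{\phi}$ with $\phi_0 = \psi_s$ and $\phi_1 = \vec{0}$.

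For \emph{statistical hiding}, since $\{\ket{\chal_s}\}$ is orthonormal, the overall trace distance splits as $\TD(\rho_0^\RegC, \rho_1^\RegC) = \sum_s \beta_s^2\, \E_k\TD(\mu_{k, \psi_s}, \mu_{k, \vec{0}})$. Each inner term is at most $1/\sqrt{2}$ by \Cref{lem:mub} applied to the pure states $\ket{\psi_s}$ and $\ket{\vec{0}}$ (noting that the bound depends only on $\TD(\proj{\psi_s}, \proj{\vec{0}}) \le 1$, up to the change of basis induced by $\Rand_k$). Combined with $\sum_s \beta_s^2 = 1$, this yields the advertised $1/\sqrt{2}$ hiding bound.

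For \emph{computational binding}, I would reduce any adversary $A$ with non-negligible binding advantage to the cloneable$\rightarrow$quantum extrapolation task. The crucial structural observation is that in $\ket{\com_1}$ the register $\RegM_0$ is untouched by the commit procedure and literally holds $\ket{\psi_s}$, so any successful $A$ acting only on $\RegD$ must coherently write $\ket{\psi_s}$ into $\RegM_0$ using only $\ket{\chal_s}_\RegS$, the basis $k$ in $\RegK$, and the weakly-informative standard-basis measurement outcome in $\RegM_0$. The reduction $R$ receives its challenge into $\RegS$, clones it into $\RegC_1$, samples $k$, and places a measurement of $\Rand_k\ket{\vec{0}}$ shared between $\RegM_0$ and $\RegC_0$; this reproduces the $\RegD$-marginal of a ``fake'' commitment $\ket{\widetilde\com_0}$ obtained from \Cref{constr:weak-hiding} by replacing $\ket{\psi_s}$ with $\ket{\vec{0}}$ in $\ket{\init}$. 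After running $A$ on $\RegD$, the reduction uncomputes $\Rand_k$ on $\RegM_0$ controlled on $\RegK$ and outputs $\RegM_0$ as its extrapolation of $\ket{\psi_s}$.

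The hard step is arguing that $A$ still succeeds on $R$'s simulated state despite the $\ket{\psi_s} \mapsto \ket{\vec{0}}$ swap; a naive triangle-inequality based on hiding would lose $1/\sqrt{2}$, which is vacuous against a non-negligible $\epsilon$. The correct argument should exploit that $R$'s simulated $\RegC_0$ is identically distributed to the $\RegC_0$-marginal of the actual $\ket{\com_1}$ state, since both are a measurement of $\Rand_k\ket{\vec{0}}$. Hence when the honest receiver projects $(\RegC, \RegD)$ onto $\ket{\com_1}$, the portion of $A$'s output that is accepted must be supported on states where $\mu_{k, \psi_s}$ and $\mu_{k, \vec{0}}$ already agree, while the complementary portion corresponds to a statistically binding slice of the commitment on which no adversary can succeed. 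Formalizing this decomposition via an operator inequality on the acceptance projector will yield the needed lower bound on $R$'s success probability as a non-trivial function of $\epsilon$, contradicting hardness of the extrapolation task.
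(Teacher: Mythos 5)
Your hiding argument is correct and matches the paper's: both decompose the trace norm across the mutually orthogonal $\ket{\chal_s}$ and $\ket{k}$ blocks (you observe it as an equality; the paper only asserts the inequality, which suffices) and then invoke the random-measurement bound per $(s,k)$. The one small caveat is that the construction is stated for any family $\Rand_k$ satisfying \eqref{eq:weak-hiding}, so the cleaner move is to cite that property directly rather than \Cref{lem:mub} specifically; either way the conclusion holds.

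Your binding argument has the right structural insight and matches the paper's reduction at a high level: feed the adversary the $\RegD$-marginal of a ``fake'' $\ket{\widetilde\com_0}$ in which $\ket{\psi_s}$ has been replaced by $\ket{\vec{0}}$, and exploit that the fake $\RegC_0$ content coincides with $\RegC_0$ inside the honest $\ket{\com_1}$. However, two things are off. First, the final ``uncompute $\Rand_k$ on $\RegM_0$'' step is a bug: when the adversary succeeds, $\RegM_0$ already holds $\ket{\psi_s}$ directly (in $\ket{\com_1}$ no randomizing unitary was ever applied to $\RegM_0$), so you must output $\RegM_0$ as-is; applying $\Rand_k^\dagger$ would destroy the target state. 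Second, and more substantially, the step you yourself flag as hard is the actual content, and your proposed formalization (splitting the acceptance projector into an ``agree'' region and a statistically-binding ``disagree'' region) is not how the paper does it, and it is unclear it can be carried through, because the distributions of $\Rand_k\ket{\psi_s}$ and $\Rand_k\ket{\vec0}$ need not admit a clean outcome-wise partition into agreeing and disagreeing parts. The paper instead writes the binding acceptance probability exactly as $\bigl(2^{-n}\sum_{k,s,x}\beta_s^2\,\alpha_{k,s,x}\overline{\alpha_{k,x}}\braket{\aux'}{\aux_{k,s,x}}\bigr)$-style squared amplitude sum, where $\alpha_{k,s,x}$ and $\alpha_{k,x}$ are the measurement amplitudes of $\ket{\psi_s}$ and $\ket{\vec0}$ and $\|\ket{\aux_{k,s,x}}\|^2 = w(k,s,x)$ is the adversary's per-challenge extrapolation success, and then applies triangle inequality, Jensen, and a Cauchy--Schwarz with the split $\alpha_{k,s,x}$ versus $\alpha_{k,x}\sqrt{w(k,s,x)}$ to bound it by $\E_{k,s}\bigl[\sum_x |\alpha_{k,x}|^2\, w(k,s,x)\bigr]$, which is precisely the reduction's success probability. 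Crucially this is a \emph{lossless} reduction: the extrapolation success is at least the binding advantage, not merely ``a non-trivial function of $\epsilon$'' as you suggest; the Cauchy--Schwarz is the thing that avoids the $1/\sqrt{2}$ loss you correctly worried about. You would essentially need to reinvent this algebraic step to close the gap.
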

\begin{proof}
    We prove $\frac{1}{\sqrt{2}}$-statistically hiding in \Cref{claim:weak-hiding} and prove computational binding in \Cref{claim:comp-binding}.
\end{proof}

We compile the weakly-statistically hiding construction to be statistically hiding using standard XOR amplification.
In detail:

\begin{construction}[Full Commitment]\label{constr:full-commitment} The commitment scheme works as follows.
\begin{itemize}
    \item To commit to a bit $b$:
    \begin{enumerate}
        \item The sender samples a string $x\gets\{0,1\}^\secp$ such that the parity of $x$ is $b$.
        \item For each bit $x_i$ of $x$, the sender commits to $x_i$ using \Cref{constr:weak-hiding}.
    \end{enumerate}
    \item To decommit, send $b$, $x$ and the openings of the commitments to each bit of $x$. The verifier verifies by checking that for each $i\in [n]$, commitment $i$ validly opens to $x_i$ according to \Cref{constr:weak-hiding}, then checks that the parity of $x$ is $b$.
\end{itemize}
\end{construction}

\begin{corollary}
     If $(\Gen, \Clone)$ is a hard cloneable$\rightarrow$quantum extrapolation problem, then \Cref{constr:full-commitment} is a statistically hiding and computationally binding commitment.
\end{corollary}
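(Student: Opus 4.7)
The plan is to invoke standard XOR amplification of the $\frac{1}{\sqrt{2}}$-statistically hiding, computationally binding weak commitment of \Cref{constr:weak-hiding} certified by the preceding theorem.

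For statistical hiding, I would let $\rho_b = \Tr_{\RegD}(\proj{\com_b})$ denote the receiver's view of a weak commitment to $b$, so that the preceding theorem gives $\TD(\rho_0,\rho_1)\le 1/\sqrt{2}$. The receiver's view of the amplified commitment to $b$ is the mixture
\[
    \sigma_b \;=\; \E_{x\in\{0,1\}^\secp:\,\oplus_i x_i=b}\!\mbracket{\bigotimes_{i=1}^\secp \rho_{x_i}}.
\]
Decomposing $\rho_{x_i}=\tau+(-1)^{x_i}\Delta$ about $\tau=(\rho_0+\rho_1)/2$ and $\Delta=(\rho_0-\rho_1)/2$ and expanding the tensor product, I would use the fact that the Walsh character associated to any proper nonempty subset $S\subsetneq[\secp]$ averages to zero over a parity-$b$ coset. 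Only $S=\emptyset$ and $S=[\secp]$ survive, giving $\sigma_b=\tau^{\otimes\secp}+(-1)^b\Delta^{\otimes\secp}$, and hence $\TD(\sigma_0,\sigma_1)=\|\Delta^{\otimes\secp}\|_1=\TD(\rho_0,\rho_1)^\secp\le 2^{-\secp/2}$, which is negligible.

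For computational binding, I would argue by contradiction. Suppose a QPT adversary $A$ breaks the binding of \Cref{constr:full-commitment} with non-negligible advantage $\epsilon$. I would construct a reduction $R$ against the weak scheme's binding which first samples a direction $b^*\gets\{0,1\}$ uniformly (leveraging the canonical-form equivalence between $0\to 1$ and $1\to 0$ binding) and receives the weak challenge commitment to $b^*$. Then $R$ guesses $i^*\gets[\secp]$, samples a parity-$0$ string $x$ with $x_{i^*}=b^*$, honestly prepares weak commitments to $x_i$ for $i\ne i^*$, plants the challenge at position $i^*$, runs $A$, and forwards $A$'s opening at position $i^*$ as its own attack. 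The key observation is that any successful opening of the amplified commitment to $1$ must flip the parity of $x$ and thus at least one position. Averaging over the uniform guesses of $(b^*, i^*)$ and using that the expected number of flipped positions per successful trial is at least $1$ yields $\Pr[R\text{ succeeds}]\ge\epsilon/\secp$, contradicting the weak scheme's binding.

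The main obstacle I anticipate is reconciling the classical sampling of $x$ in \Cref{constr:full-commitment} with the canonical-form definition of binding, which naturally treats $x$ coherently. I would resolve this by observing that the coherent projector onto $\ket{\com_1^{(\mathrm{full})}}$ is dominated as an operator by the sum of the $2^{\secp-1}$ classical outcome projectors $\proj{x'}\otimes\bigotimes_i\proj{\com_{x'_i}}$ over parity-$1$ strings $x'$; consequently, any coherent-form binding advantage lower-bounds the classical-sampling advantage, and the reduction above applies regardless of the interpretation of the amplified scheme.
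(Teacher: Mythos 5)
Your proof takes essentially the same route as the paper: XOR amplification for statistical hiding and a guess-the-flipped-index reduction to the weak scheme's binding for computational binding. The only differences are expository — the paper simply cites Watrous's Lemma 2 for trace-distance amplification under XOR where you re-derive it (correctly) via the $\tau,\Delta$ decomposition and cancellation of non-trivial Walsh characters, and you flag and partially address the coherent-vs.-classical-sampling subtlety via the operator domination $\proj{\com_1^{\mathrm{full}}}\le\sum_{x'}\proj{x'}\otimes\bigotimes_i\proj{\com_{x'_i}}$, which the paper's one-sentence binding argument leaves implicit.
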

\begin{proof}
    Statistical hiding amplification follows from the fact that trace distance is amplified exponentially under XOR compositions \cite[Lemma 2]{Wat02-qszk}.
    Computational binding holds since in order to change the final bit revealed, the adversary must flip at least one of the $\lambda$ bits committed using the original commitment scheme; by randomly guessing the index, we complete the reduction to breaking the binding security of the original commitment scheme.
\end{proof}

\subsection{Statistical Hiding}

\begin{claim}\label{claim:weak-hiding}
    \Cref{constr:weak-hiding} is $\frac{1}{\sqrt{2}}$-statistically hiding.
\end{claim}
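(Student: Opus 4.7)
The plan is to compute the reduced states $\rho_b \defeq \Tr_{\RegD}\proj{\com_b}$ on the commitment register $\RegC = (\RegC_0, \RegC_1)$, show that their difference is block-diagonal, and then invoke \Cref{lem:mub} block by block. Tracing through the three unitary layers on $\ket{\init}$: $\CRand_{\RegK \to \RegM_b}$ leaves $\RegM_b$ holding $\Rand_k\ket{\psi_s}$ (for $b=0$) or $\Rand_k\ket{0}$ (for $b=1$) coherently conditioned on $\ket{k}_{\RegK}$; the $\CNOT_{(\RegK,\RegM_b)\to\RegC_0}$ step copies the computational-basis contents of $(\RegK,\RegM_b)$ into $\RegC_0$; and $\Clone_{\RegS\to(\RegS,\RegC_1)}$ deposits a duplicate of $\ket{\chal_s}$ into $\RegC_1$. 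Expanding $\Rand_k\ket{\psi_s}$ and $\Rand_k\ket{0}$ in the computational basis then yields explicit pure-state expressions for $\ket{\com_0}$ and $\ket{\com_1}$.

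Next I would perform the partial trace over $\RegD = (\RegK,\RegS,\RegM_0,\RegM_1)$. The key observation is that $\braket{\chal_{s'}}{\chal_s} = \delta_{s,s'}$ --- this follows from cloning correctness together with the fact that isometries preserve inner products --- so tracing out $\RegS$ kills all off-diagonal cross-terms in $s$. Since every classical value copied into $\RegC_0$ (the value of $k$ and the measurement outcome $x$) also appears in a register that is traced out, the corresponding cross-terms in $k$ and $x$ also vanish, leaving the block-diagonal forms
\begin{align*}
    \rho_0 &= \frac{1}{2^n}\sum_{k,s} |\beta_s|^2\, \ket{k}\!\bra{k} \otimes \proj{\psi_s}^{(k)} \otimes \proj{\chal_s}, \\
    \rho_1 &= \frac{1}{2^n}\sum_{k,s} |\beta_s|^2\, \ket{k}\!\bra{k} \otimes \proj{0}^{(k)} \otimes \proj{\chal_s}.
\end{align*}
Because distinct $(k,s)$ produce mutually orthogonal blocks, the trace norm of $\rho_0-\rho_1$ distributes over blocks and gives
\[
    \TD(\rho_0, \rho_1) \;=\; \sum_s |\beta_s|^2 \cdot \E_k\!\mbracket{\TD\!\mparen{\proj{\psi_s}^{(k)},\, \proj{0}^{(k)}}}.
\]

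Finally I would invoke \Cref{lem:mub} with $p = N$: although \eqref{eq:weak-hiding} is stated only for orthogonal pairs, the sharper conclusion $\E_k[\TD(W_0^{(k)},W_1^{(k)})] \le \TD(W_0,W_1)/\sqrt{2}$ proved there holds for arbitrary density matrices, so the inner expectation is at most $\tfrac{1}{\sqrt{2}}$ even when $\ket{\psi_s}$ and $\ket{0}$ are not orthogonal. Combined with $\sum_s |\beta_s|^2 = 1$, this yields $\TD(\rho_0, \rho_1) \le \tfrac{1}{\sqrt{2}}$. The main obstacle I anticipate is the partial-trace bookkeeping --- keeping careful track of which registers hold ``witness'' copies of $k$, $x$, and $\chal_s$, and ensuring each cross-term is killed for the right reason (computational-basis orthogonality for $k$ and $x$; cloning-induced orthogonality for $\chal_s$). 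Once the diagonal forms above are in hand, the rest is essentially a one-line application of \Cref{lem:mub}.
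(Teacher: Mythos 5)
Your proof is correct and mirrors the paper's argument: compute the reduced density matrices on $\RegC$, reduce to the per-$(k,s)$ trace distance (your observation that the $(k,s)$-blocks are mutually orthogonal simply upgrades the paper's convexity inequality to an equality), and invoke \Cref{lem:mub}. The one substantive difference is a point in your favor: you correctly flag that $\ket{\psi_s}$ and $\ket{0}$ need not be orthogonal, so one should invoke the general bound $\E_k\!\mbracket{\TD\mparen{W_0^{(k)},W_1^{(k)}}}\le\TD(W_0,W_1)/\sqrt2\le 1/\sqrt2$ that \Cref{lem:mub} actually proves for arbitrary density matrices, rather than the orthogonal-pair statement \eqref{eq:weak-hiding} which the paper cites directly at that step --- a small informality that your treatment handles more carefully.
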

\begin{proof}
    The state of $(\RegC, \RegR)$ after a commitment to $b$ is:
    \begin{align*}
        \ket{\com_0} &\coloneqq \sum_{s,k,x} \beta_s 2^{-n/2} \ket{k, \chal_s, x}_{\RegC} \otimes \left(\ket{k,\chal_s} \otimes \proj{x} \Rand_k \ket{\psi_s} \otimes \ket{0} \right)_{\RegD},
        \\
        \ket{\com_1} &\coloneqq \sum_{s,k,x} \beta_s 2^{-n/2} \ket{k, \chal_s, x}_{\RegC} \otimes \left(\ket{k,\chal_s} \otimes \ket{\psi_s} \otimes \proj{x} \Rand_k \ket{0} \right)_{\RegD}.
    \end{align*}
    After tracing out register $\RegD$, the mixed states on register $\RegC$ are
    \begin{align*}
        \rho_0 &\coloneqq \sum_{s} \beta_s^2 \proj{\chal_s} \otimes  \sum_{k \in \{0,1\}^n} 2^{-n}\proj{k} \otimes \sum_{x \in \{0,1\}^n}   \left|\bra{x} \Rand_k \ket{\psi_s}\right|^2 \proj{x},
        \\
        \rho_1 &\coloneqq \sum_{s} \beta_s^2 \proj{\chal_s} \otimes \sum_{k \in \{0,1\}^n} 2^{-n}\proj{k} \otimes \sum_{x \in \{0,1\}^n}  \left|\bra{x} \Rand_k \ket{0}\right|^2 \proj{x}.
    \end{align*}
    Denote the projections of $\ket{\psi_s}$ and $\ket{0}$, respectively, onto measurement in the basis $\Rand_k$ as
    \begin{align*}
        \rho_{0,s,k} &\coloneqq \sum_{x \in \{0,1\}^n}   \left|\bra{x} \Rand_k \ket{\psi_s}\right|^2 \proj{x},
        \\
        \rho_{1,s,k} &\coloneqq \sum_{x \in \{0,1\}^n}   \left|\bra{x} \Rand_k \ket{0}\right|^2 \proj{x}.
    \end{align*}
    Then the trace distance between $\rho_0$ and $\rho_1$ is bounded as
    \begin{align}
        \frac{1}{2}\|\rho_0 - \rho_1\|_1 
        &\leq \frac{1}{2}\sum_{s} \beta_s^2  \sum_{k \in \{0,1\}^n} 2^{-n} \left\| \rho_{0,s,k} - \rho_{1,s,k} \right\|_1
        \label{eq:wh-convex}
        \\
        &= \sum_{s} \beta_s^2  \E_{k \in \{0,1\}^n}\mbracket{ \frac{1}{2}\left\| \rho_{0,s,k} - \rho_{1,s,k} \right\|_1}
        \nonumber
        \\
        &\leq \sum_{s} \beta_s^2 \frac{1}{\sqrt{2}}
        \label{eq:wh-lemmas}
        \\
        &= \frac{1}{\sqrt{2}}. \nonumber
    \end{align}
    \eqref{eq:wh-convex} follows from the convexity of the trace norm and its invariance under tensor product with the same state; \eqref{eq:wh-lemmas} follows from \eqref{eq:weak-hiding}.
\end{proof}

\subsection{Computational Binding}

\begin{claim}\label{claim:comp-binding}
    If $(\Gen, \Clone)$ is a hard cloneable$\rightarrow$quantum extrapolation problem, then \Cref{constr:weak-hiding} is computationally binding.
\end{claim}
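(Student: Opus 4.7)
The plan is to design a reduction $\mathcal{R}$ from an alleged binding adversary $U$ (with auxiliary state $\ket{\aux}$) achieving non-negligible binding advantage $\epsilon$ to an adversary against the cloneable$\rightarrow$quantum extrapolation task. On input a challenge $\ket{\chal_s}$ (drawn with probability $\beta_s^2$ from the challenger), $\mathcal{R}$ uses $\Clone$ to place an additional copy of $\chal_s$ into $\RegC_1$ and prepares a ``simulated'' state $\ket{\widetilde{\com}_0^{(s)}}$ on $(\RegC, \RegD)$ that coincides with the $s$-th branch $\ket{\com_0^{(s)}}$ of $\ket{\com_0}$ in every register except $\RegM_0$, where $\ket{0}$ is substituted for the unknown $\ket{\psi_s}$ before the $\CRand_{\RegK \to \RegM_0}$ and $\CNOT_{(\RegK,\RegM_0) \to \RegC_0}$ steps. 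The reduction then runs $U$ on $(\RegD, \aux)$ and outputs register $\RegM_0$, claiming it holds $\ket{\psi_s}$.

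The analysis proceeds in two steps. First, I exploit the orthogonality of $\{\ket{\chal_s}\}_s$ (a consequence of $\Clone$ being an isometry): since $U$ does not touch $\RegC_1$, which carries a clone of $\chal_s$, the binding overlap decomposes as a sum $\sum_s \beta_s^2 \, \bra{\com_1^{(s)}} U (\ket{\com_0^{(s)}} \otimes \ket{\aux})$, and Cauchy--Schwarz together with $\sum_s \beta_s^2 = 1$ yields $\epsilon \leq \sum_s \beta_s^2 \, \epsilon_s$ for the per-branch advantages $\epsilon_s$. Since $\ket{\com_1^{(s)}}$ has $\ket{\psi_s}$ in register $\RegM_0$, the operator inequality $\proj{\com_1^{(s)}} \preceq \proj{\psi_s}_{\RegM_0} \otimes I$ turns each $\epsilon_s$ into a lower bound on $q(s) := \|(\proj{\psi_s}_{\RegM_0} \otimes I) \, U (\ket{\com_0^{(s)}} \otimes \ket{\aux})\|^2$, the probability that $U$ applied to the real $\ket{\com_0^{(s)}}$ leaves $\ket{\psi_s}$ in $\RegM_0$.

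The second step compares $q(s)$ to the reduction's per-branch success $p_{\mathrm{red}}(s) := \|(\proj{\psi_s}_{\RegM_0} \otimes I) \, U (\ket{\widetilde{\com}_0^{(s)}} \otimes \ket{\aux})\|^2$. The inputs to $U$ differ only in whether register $\RegM_0$ holds the measurement of $\Rand_k \ket{\psi_s}$ or of $\Rand_k \ket{0}$, and \Cref{claim:weak-hiding} bounds the expected trace distance between these marginals by $1/\sqrt{2}$ on average over $k$. This is where the main obstacle lies: the per-branch hiding loss is a constant, so a naive triangle-inequality bound $p_{\mathrm{red}}(s) \geq q(s) - 1/\sqrt{2}$ would be vacuous whenever $\epsilon < 1/\sqrt{2}$. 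Overcoming this requires exploiting the structural symmetry of the construction---namely, the $\RegC$-register of $\ket{\widetilde{\com}_0}$ matches that of $\ket{\com_1}$ exactly (both encoding a measurement of $\Rand_k \ket{0}$), so the mismatch between $\rho_{s,k}$ and $\rho_{0,k}$ is confined to the ``statistically binding'' portion of the commitment on which no adversary can succeed to begin with. A careful Uhlmann-style argument based on this symmetry should establish $p_{\mathrm{red}} \geq f(\epsilon)$ for some polynomial $f$, yielding the desired contradiction with extrapolation hardness.
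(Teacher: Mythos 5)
Your reduction is exactly the paper's: feed the binding adversary a simulated commitment with $\ket{0}$ substituted for $\ket{\psi_s}$ in $\RegM_0$, then read $\ket{\psi_s}$ out of $\RegM_0$. Your decomposition into branches $s$ (valid because the clone in $\RegC_1$ is untouched by the adversary and the final projection onto $\ket{\com_1}$ forces consistency) and the observation $\proj{\com_1^{(s)}} \preceq \proj{\psi_s}_{\RegM_0} \otimes I$ are also both sound. And you correctly put your finger on the obstacle: a naive trace-distance/triangle-inequality argument relating the adversary's behavior on $\ket{\com_0^{(s)}}$ versus $\ket{\widetilde{\com}_0^{(s)}}$ only gives an additive $1/\sqrt{2}$ loss, which is vacuous when the binding advantage is small.

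However, ``a careful Uhlmann-style argument based on this symmetry should establish $p_{\mathrm{red}} \geq f(\epsilon)$'' is not a proof, and no Uhlmann-style argument can close the gap: Uhlmann controls fidelities/overlaps, and the fidelity between the real and simulated inputs to $U$ is still a constant bounded away from $1$, so any argument that proceeds by first controlling $q(s)$ (the adversary's success on the \emph{real} $\ket{\com_0^{(s)}}$) and then transferring to $p_{\mathrm{red}}(s)$ by proximity of inputs will run into the same quantitative wall. The paper avoids this detour entirely: it never passes through $q(s)$. Instead it expands the binding amplitude explicitly, $\norm{(\bra{\com_1} \otimes I)\ket{\varphi'_A}}^2 = \norm{2^{-n}\sum_{k,s,x} \beta_s^2\, \alpha_{k,s,x}\, \overline{\alpha_{k,x}}\, \ket{\aux_{k,s,x}}}^2$, where $\alpha_{k,s,x}, \alpha_{k,x}$ are the amplitudes of $\Rand_k\ket{\psi_s}, \Rand_k\ket{0}$ respectively and $\norm{\ket{\aux_{k,s,x}}}^2 = w(k,s,x)$ is the probability that $U$ outputs $\ket{\psi_s}$ given $(k,\chal_s,x)$. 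The triangle inequality and Jensen's inequality then reduce the question to bounding $\E_{k,s}\bigl[\bigl(\sum_x |\alpha_{k,s,x}\alpha_{k,x}|\sqrt{w(k,s,x)}\bigr)^2\bigr]$, and the decisive step is Cauchy--Schwarz in the form $\bigl(\sum_x |\alpha_{k,s,x}| \cdot |\alpha_{k,x}|\sqrt{w(k,s,x)}\bigr)^2 \le \bigl(\sum_x |\alpha_{k,s,x}|^2\bigr)\bigl(\sum_x |\alpha_{k,x}|^2 w(k,s,x)\bigr)$. The first factor is $1$ by normalization (this is where the dependence on the unknown $\psi_s$ evaporates), and the second factor is exactly the reduction's per-$(k,s)$ success probability $\sum_x q_k(x) w(k,s,x)$. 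So the binding advantage is bounded by the reduction's success probability with no additive loss at all. Your ``symmetry'' intuition (the receiver's view in $\ket{\com_0}$ looks like $\rho_{k,0}$, matching what the reduction feeds the adversary) is the right heuristic, but the quantitative realization of it is this Cauchy--Schwarz transfer of amplitude weight, not an Uhlmann argument, and your proposal is missing that step.
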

\begin{proof}

Suppose we have an adversary $\Adv$ that can map $\ket{\com_0}$ to $\ket{\com_1}$ by acting only on the decommitment register $(\RegK,\RegS,\RegM_0,\RegM_1)$. 

We'll use this to give a solver for the cloneable$\rightarrow$quantum extrapolation problem. Let's think of the cloneable$\rightarrow$quantum extrapolation problem as follows. The challenger prepares 
\begin{align*}
    \ket{\chi_{\mathrm{start}}} \coloneqq \sum_s \beta_s \ket{\chal_s}_{\RegS} \ket{\psi_s}_{\RegM} \otimes \ket{\chal_s}_{\RegS'}
\end{align*} and gives us the $\RegS'$ register. To win the game, it suffices to turn this into the pure state
\begin{align*}
    \ket{\chi_{\mathrm{end}}}\coloneqq \sum_s \beta_s \ket{\chal_s}_{\RegS} \ket{\psi_s}_{\RegM} \otimes \ket{\aux_s}_{\RegA} \ket{\psi_s}_{\RegM'}
\end{align*}
for some $\aux_s$ since the $\RegM'$ register will also pass the verification.

The reduction, given the $(\RegS',\RegM')$ register of $\ket{\chi_{\mathrm{start}}}$, initializes a $\RegK$ register to the uniform superposition $\ket{+^n}$, $\RegM_1$ registers to $\ket{0}$, and then applies $\bigl(\Clone_{(\RegK,\RegS,\RegM') \rightarrow \RegC}\bigr) \cdot \bigl(\CRand_{\RegK \rightarrow \RegM'}\bigr)$. It applies the binding adversary on registers $\RegK, \RegS', \RegM', \RegM_1$.
It then coherently checks that the registers $\RegK, \RegS', \RegM_1$ agrees\footnote{This can be done efficiently for any cloneable basis. In particular, take any efficient unitary completion of $\Clone$ and we simply compute its inverse and check if the auxiliary registers return to $\ket0$.} with $\RegC$ and abort if not (in other words only the decision is measured).
Finally, it returns register $\RegM'$ to the challenger.

\paragraph{Random measurements.} In general, the result of applying $\Rand_k$ can be expressed as
\begin{align*}
    \Rand_k \ket{\psi_{s}} &= \sum_{x} \alpha_{k, s, x} \ket{x},
    \\
    \Rand_k \ket{0} &= \sum_{x} \alpha_{k, x} \ket{x}
\end{align*}
for some complex coefficient $\alpha$'s.
Using these, we can write the commitment states as
\begin{align*}
    \ket{\com_0} &= \sum_{k, s, x} 2^{-n/2} \beta_s \alpha_{k, s, x} \ket{k, \chal_s, x}_{\RegC} \otimes \ket{k, \chal_s, x, 0}_{\RegD},
    \\
    \ket{\com_1} &= \sum_{k, s, x} 2^{-n/2} \beta_s \alpha_{k, x} \ket{k, \chal_s, x}_{\RegC} \otimes \ket{k, \chal_s, \psi_s, x}_{\RegD}.
\end{align*}
Here, $\RegC$ contains the commitment (sent to the receiver) and $\RegD$ contains the opening (kept by the sender). 

\paragraph{Binding adversary's attack.} Say the adversary consists of a unitary $U_A$ and an auxiliary quantum input $\ket{\aux}_\RegA$. It applies $U_A$ to registers $\RegC$ and $\RegA$.
We assume the adversary always prepares a state $\ket{k,s,\phi_{k,s,x},x}$ in register $D$ along with potentially another private register.
This is without loss of generality since not preparing a state like this form can only decrease the binding attack advantage since when the challenger projects onto $\ket{\com_1}$ it necessarily projects $k, s, x$ to the correct value; furthermore, in our reduction, we also perform this check so the amplitude outside would not affect the reduction either.
More formally, we can account for this by considering the residual state to be potentially sub-normalized when the adversary does not comply.
The output of the adversary is
\[
    \ket{\varphi_A} = \sum_{k, s, x, y} 2^{-n/2} \beta_s \alpha_{k, s, x} \ket{k, \chal_s, x}_{\RegC} \otimes \ket{k, \chal_s, y, x}_{\RegD} \otimes \ket{\aux_{k,s,y,x}}_{\RegA}.
\]
The state $\ket{\phi_{k, s, x}} := \sum_y \ket{y}\ket{\aux_{k,s,y,x}}$ might be subnormalized but the remaining component will never contribute (in either direction) to the binding advantage.
From this we can see that the adversary's action can be seen as first swapping $\RegM_0 \RegM_1$ and then prepare a (possibly subnormalized) quantum state on the registers $\RegM_0\RegA$ controlled on $k, s, x$.

We now consider the state after projecting $\RegM_0$ to $\ket{\psi_s}$ to be
\[
    \ket{\varphi'_A} = \sum_{k, s, x} 2^{-n/2} \beta_s \alpha_{k, s, x} \ket{k, s, x}_{\RegC} \otimes \ket{k, \chal_s, \psi_s, x}_{\RegD} \otimes \ket{\aux_{k,s,x}}_{\RegA},
\]
where $\ket{\aux_{k,s,x}} := \sum_y \braket{\psi_s}{y}\ket{\aux_{k,s,y,x}}$.
Note that this projection again contains both the commitment receiver's projection as well as the extrapolation verification projection.
As a consequence, we can for simplicity consider the overlap with this state, which will be a lower bound on the success probability.

\paragraph{Additional notations.}
Let $\Pr[s] = \beta_s^2$ be the probability of sampling a challenge $s$.
Define $p_{k,s}(x) = |\alpha_{k,s,x}|^2$ to be the probability that a measurement of $\ket{\psi_{s}}$ in the basis $k$ outputs $x$. Similarly define $q_{k}(x) = |\alpha_{k,x}|^2$. $w(k,s,x)$ denotes the probability that the adversary, given $\ket{k,\chal_s,x}$, successfully produces $\ket{\psi_s}$. In other words,
\begin{align*}
     w(k,s,x) 
        &= \norm{\ket{\aux_{k,s,x}}}^2.
\end{align*}

\paragraph{Reduction success probability.} Given the notation above, the reduction's success probability in the cloneable$\rightarrow$quantum extrapolation game is
\[
    \sum_{k,s} 2^{-n} \Pr[s] \sum_{x} q_k(x) w(k,s,x) = \E_{k,s}\mbracket{\sum_{x} q_k(x) w(k,s,x)},
\]
since again, in the reduction we also trimmed out the useless amplitudes.

We now show that the reduction's success probability in the cloneable$\rightarrow$quantum extrapolation game is at least the adversary's probability of success in the binding attack.
To that end, we compute the binding success probability to be
\begin{align}
    \norm{(\bra{\com_1} \otimes I)\ket{\varphi'_A}}^2\nonumber
        &= \norm{2^{-n} \sum_{k,s,x}\beta_s^2  \alpha_{k,s,x} \overline{\alpha_{k,x}} \ket{\aux_{k,s,x}}}^2\nonumber
        \\
        &\le \paren{2^{-n} \sum_{k,s,x}\beta_s^2 \abs{ \alpha_{k,s,x} \overline{\alpha_{k,x}} } \cdot \norm{\ket{\aux_{k,s,x}}}}^2 \label{eq:binding-triangle}
        \\
        &= \E_{k, s}\mbracket{\sum_x\abs{ \alpha_{k,s,x} \alpha_{k,x}}\sqrt{w(k, s, x)}}^2 \nonumber \\
        &\le \E_{k, s}\mbracket{\mparen{\sum_x\abs{ \alpha_{k,s,x} \alpha_{k,x}}\sqrt{w(k, s, x)}}^2} \label{eq:binding-jensen} \\
        &\le \E_{k, s}\mbracket{\mparen{\sum_x\abs{ \alpha_{k,s,x}}^2} \mparen{\sum_x{\abs{\alpha_{k,x}}^2w(k, s, x)}}} \label{eq:binding-cs} \\
        &= \E_{k, s}\mbracket{\sum_{x} q_{k}(x) w(k,s,x)}.\nonumber
\end{align}
\eqref{eq:binding-triangle} follows from triangle inequality; \eqref{eq:binding-jensen} follows from Jensen's inequality; \eqref{eq:binding-cs} follows from Cauchy--Schwarz inequality.
Therefore, our reduction succeeds with the same probability as the binding adversary's advantage, completing the proof.
\end{proof}

\section{Quantum Extrapolation}

In this section, we define (fully) quantum extrapolation tasks and give some initial observations.

\begin{definition}[Quantum Extrapolation]
    \label{def:qextrapolation}
    Let $\Gen$ be a family of efficiently preparable bipartite pure states on registers $\RegA, \RegB$.
    Let $\rho$ be the reduced density matrix on $\RegA$.
    For every $\lambda$, define the target state to be its canonical purification
    \[
        \ket{T_\lambda} \coloneqq (\sqrt\rho \otimes I) \ket{\Phi}
    \]
    where $\ket\Phi_{\RegA\RegB'} = \sum_i \ket i\ket i$ is the unnormalized maximally entangled state where $\RegB'$ has same dimension as $\RegA$.
    $\Gen$ is a \emph{hard quantum extrapolation problem} if there exists a polynomial $p$ such that for every auxiliary input $\aux$ and polynomial-time channel $V_{\RegB\RegC \rightarrow \RegB'}$,
    \[
        F\mparen{\proj{T_\lambda}_{\RegA,\RegB'}, V_{\RegB\RegC \rightarrow \RegB'}(\Gen_\lambda() \otimes \aux_\lambda)} \leq 1 - \frac{1}{p(\lambda)}.
    \]
\end{definition}

Readers might notice that this is defined differently from \Cref{thm:informal-qextrapolation}.
We now explain why the two definitions are equivalent, or why \Cref{def:qextrapolation} can be viewed as an extrapolation problem.

\begin{lemma}
    Let
    \[
        \ket\Gen = \sum_{i} \alpha_i \ket{A_i}_{\RegA} \otimes \ket{B_i}_{\RegB}
    \]
    be any of $\ket\Gen$'s Schmidt decompositions.
    Then the target state
    \[ \ket{T} =  \sum_{i} \alpha_i \ket{A_i}_{\RegA} \otimes \ket{A_i^*}_{\RegB'} \]
    where $\ket{A_i^*}$ is the complex conjugate of $\ket{A_i}$.
\end{lemma}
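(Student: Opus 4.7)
The plan is to directly compute $(\sqrt{\rho}\otimes I)\ket{\Phi}$ using the given Schmidt decomposition and recognize the resulting state as the claimed target.

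First, I would compute $\rho$ from the Schmidt decomposition. Since $\{\ket{A_i}\}$ and $\{\ket{B_i}\}$ are (extendable to) orthonormal sets, tracing out $\RegB$ yields
\[
    \rho = \Tr_{\RegB}\mparen{\proj{\Gen}} = \sum_{i,j} \alpha_i \alpha_j \braket{B_j}{B_i}\ketbra{A_i}{A_j} = \sum_i \alpha_i^2 \proj{A_i}.
\]
Because the $\alpha_i$ are nonnegative, the unique PSD square root is then $\sqrt{\rho} = \sum_i \alpha_i \proj{A_i}$.

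Next, I would apply $\sqrt{\rho}\otimes I$ to the unnormalized maximally entangled state $\ket{\Phi} = \sum_j \ket{j}_{\RegA}\ket{j}_{\RegB'}$ and reorganize:
\[
    (\sqrt{\rho}\otimes I)\ket{\Phi} = \sum_j \sum_i \alpha_i \ket{A_i}\braket{A_i}{j}\otimes\ket{j} = \sum_i \alpha_i \ket{A_i}_{\RegA} \otimes \mparen{\sum_j \braket{A_i}{j}\ket{j}}_{\RegB'}.
\]
The key step is the identity $\sum_j \braket{A_i}{j}\ket{j} = \sum_j \overline{\braket{j}{A_i}}\ket{j} = \ket{A_i^*}$, where the complex conjugation is taken in the computational basis that defines $\ket{\Phi}$. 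Substituting this in yields exactly the claimed formula for $\ket{T}$.

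The only subtlety worth flagging, which I would address in a brief remark, is well-definedness: the Schmidt decomposition is not unique when some $\alpha_i$'s coincide, so one should verify that the right-hand side is independent of the choice of decomposition. This is automatic from the derivation above, since $\sqrt{\rho}\otimes I$ acting on $\ket{\Phi}$ is manifestly basis-independent on the $\RegA$ side, and the complex-conjugation on $\RegB'$ is defined relative to the fixed basis $\{\ket{j}\}$ used to define $\ket{\Phi}$; within a degenerate Schmidt subspace, the matching rotation on the $\RegB$ vectors $\ket{B_i}$ is precisely the complex conjugate rotation on $\RegB'$, so the decomposition of $\ket{T}$ transforms consistently. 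There is no real obstacle here---the lemma amounts to unpacking the definition of the canonical purification.
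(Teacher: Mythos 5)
Your proof is correct and matches the paper's argument in substance: both compute $\rho=\sum_i\alpha_i^2\proj{A_i}$, take $\sqrt{\rho}=\sum_i\alpha_i\proj{A_i}$, and apply it to $\ket{\Phi}$. The only cosmetic difference is that the paper first rewrites $\ket{\Phi}=\sum_i\ket{A_i}\otimes\ket{A_i^*}$ (the $U\otimes U^*$ invariance of the maximally entangled state) and then plugs in, whereas you keep $\ket{\Phi}$ in the computational basis and recognize $\sum_j\braket{A_i}{j}\ket{j}=\ket{A_i^*}$ at the end; these are the same calculation.
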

\begin{proof}
    By the Schmidt decomposition, we can see that $\rho = \sum_i\alpha_i^2 \proj{A_i}$.
    Since $\{A_i\}_i$'s form an orthonormal basis, $\sqrt\rho = \sum_i\alpha_i \proj{A_i}$ and $\ket{\Phi} = \sum_i \ket{A_i} \otimes \ket{A_i^*}$.
    The lemma follows by plugging these into the definition of $\ket T$.
\end{proof}

\begin{corollary}
    For any $\Gen$, there exists an isometry $U$ such that $U_{\RegB\to\RegB'}\ket{\Gen} = \ket T$.
\end{corollary}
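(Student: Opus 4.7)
The plan is to exhibit an explicit isometry by leveraging the Schmidt decomposition from the preceding lemma. Fix any Schmidt decomposition $\ket\Gen = \sum_i \alpha_i \ket{A_i}_\RegA \otimes \ket{B_i}_\RegB$, and define the partial isometry
\[
    V \coloneqq \sum_{i : \alpha_i \neq 0} \ket{A_i^*}_{\RegB'}\bra{B_i}_\RegB.
\]
Then extend $V$ to a full isometry $U \colon \RegB \to \RegB'$ in any way (if $\dim \RegB > \dim \RegB'$ one first embeds $\RegB$ into a larger space or notes that $\{B_i\}_{\alpha_i \neq 0}$ spans at most $\dim \RegB' = \dim \RegA$ directions).

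First I would verify that $V$ is a partial isometry, which is immediate since $\{\ket{B_i}\}_i$ and $\{\ket{A_i^*}\}_i$ are each orthonormal sets (the former by Schmidt, the latter because complex conjugation in the chosen computational basis is an antiunitary, hence preserves orthonormality). Thus $V^\dagger V$ is the projector onto $\mathrm{span}\{\ket{B_i} : \alpha_i \neq 0\}$, and one can extend $V$ to an isometry $U$ with $U\ket{B_i} = \ket{A_i^*}$ for every $i$ with $\alpha_i \neq 0$.

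Next I would compute directly:
\[
    U_{\RegB \to \RegB'}\ket{\Gen}
    = \sum_i \alpha_i \ket{A_i}_\RegA \otimes U\ket{B_i}_\RegB
    = \sum_{i : \alpha_i \neq 0} \alpha_i \ket{A_i}_\RegA \otimes \ket{A_i^*}_{\RegB'}
    = \ket T,
\]
where the last equality uses the formula for $\ket T$ given in the preceding lemma (terms with $\alpha_i = 0$ vanish anyway).

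There is essentially no obstacle here; the only subtlety worth flagging is that the Schmidt decomposition is not unique when the $\alpha_i$'s have repeated values, but this is harmless because the preceding lemma already guarantees that \emph{every} Schmidt decomposition yields the same $\ket T$, so the particular choice of basis used to build $U$ does not matter.
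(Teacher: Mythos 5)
Your proof is correct and follows essentially the same route as the paper: fix a Schmidt decomposition, define $U$ by $\ket{B_i}\mapsto\ket{A_i^*}$, and verify the identity by direct computation. You are a bit more careful than the paper's terse proof about handling terms with $\alpha_i=0$ and about extending the partial isometry to a full one, but the underlying construction is identical.
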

\begin{proof}
    Consider any Schmidt decomposition as above.
    Take any $U$ such that $U\ket{B_i} = \ket{A_i^*}$ for all $i$.
    This is well defined since $\ket{B_i}$'s and $\ket{A_i^*}$'s form a basis.
\end{proof}

The reader may notice that the Schmidt decomposition of a state is not necessarily unique.
Indeed, if we asked the adversary to map $\ket{B_i}$ to $\ket{A_i}$ directly, the optimal action would depend on the specific Schmidt decomposition that we consider.
However, by requiring the adversary to instead map it to $\ket{A_i^*}$ instead, we ensure uniqueness of the target state $\ket{T}$, and thus that the quantum extrapolation problem is well-defined.

Let us briefly compare this task with classical$\rightarrow$quantum extrapolation.
On a high level, there are three important differences:
\begin{enumerate}
    \item Obviously, the input or the challenge is now a quantum state instead of classical bitstrings.
    \item We also require the solver to solve the problem coherently, in other words, the solver cannot measure (or remember) which $i$ is given if he wishes to succeed.
      This is different from classical$\rightarrow$quantum extrapolation where the solver is free to measure the input.
      This condition is important to ensure the uniqueness of the action as otherwise the action would depend on the specific Schmidt decomposition that we consider.
    \item Finally, we also require the problem to be ``strongly'' solved, i.e.\ solving the task to arbitrary inverse polynomial error; whereas we can accept any non-negligible overlap for classical$\rightarrow$quantum extrapolation.\footnote{This difference would not be as important if one could show an amplification theorem for this task. Known techniques \cite{STOC:BQSY24} do not apply since it is inefficient to project onto the target state.}
\end{enumerate}

Before proceeding, let us first establish the robustness of the quantum extrapolation task.
Informally, we show that if the instance specified by $\Gen$ is slightly disturbed, then the target state will not change drastically either.
As a corollary, the same optimal action will still do pretty good even if a slightly disturbed input is given instead.
This is not obvious since the target state depends on $\Gen$ non-linearly.
In particular, the square-root superoperator is not linear.

To prove this, we first need the following lemma on Holevo fidelity $F_H(\rho, \sigma) := \Tr(\sqrt\rho \sqrt\sigma)$.

\begin{lemma}[{\cite{Kho72-fidelity}}]
    \label{conj:robustness}
    Let $\rho, \sigma$ be two density matrices.
    Then $1 - \sqrt{F_H(\rho, \sigma)} \le TD(\rho, \sigma) \le \sqrt{1 - F_H(\rho, \sigma)}$.
\end{lemma}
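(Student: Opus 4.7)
The plan is to reduce both directions to standard identities relating trace distance, fidelity, and purifications. Let $\ket{\psi_\rho} \coloneqq (\sqrt\rho \otimes I)\ket\Phi$ and $\ket{\psi_\sigma} \coloneqq (\sqrt\sigma \otimes I)\ket\Phi$ be the canonical purifications into a doubled Hilbert space, where $\ket\Phi = \sum_i \ket{i}\ket{i}$. A short direct calculation gives
\[
    \braket{\psi_\rho}{\psi_\sigma} = \Tr(\sqrt\rho\sqrt\sigma),
\]
and this number is real and non-negative because the trace of the product of two PSD operators is non-negative. This identifies the overlap of the canonical purifications with the Holevo fidelity expression, and is the single identity that drives both halves of the inequality.

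For the upper bound, I would invoke monotonicity of trace distance under the partial trace (which is a CPTP map), giving $\TD(\rho, \sigma) \le \TD(\proj{\psi_\rho}, \proj{\psi_\sigma})$. The right-hand side is then evaluated using the standard pure-state identity $\TD(\proj{\psi}, \proj{\phi}) = \sqrt{1 - |\braket{\psi}{\phi}|^2}$, and substituting the inner product computed above yields the claimed upper bound.

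For the lower bound, I would use the Powers–Størmer inequality $\|\sqrt\rho - \sqrt\sigma\|_2^2 \le \|\rho - \sigma\|_1$, which admits a short proof via the Jordan–Hahn decomposition of $\rho - \sigma$ applied together with the operator identity $\rho - \sigma = \tfrac{1}{2}\{\sqrt\rho + \sqrt\sigma,\, \sqrt\rho - \sqrt\sigma\}$. Expanding $\|\sqrt\rho - \sqrt\sigma\|_2^2 = 2 - 2\Tr(\sqrt\rho\sqrt\sigma)$ and $\|\rho - \sigma\|_1 = 2\TD(\rho,\sigma)$ immediately gives $\TD(\rho, \sigma) \ge 1 - \Tr(\sqrt\rho\sqrt\sigma)$, from which the stated bound follows since $\sqrt{F_H} \ge F_H$ when $F_H \in [0,1]$.

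The main obstacle is bookkeeping the square roots and squares so that everything lines up with the paper's convention for $F_H$; conceptually, both halves are immediate consequences of the canonical-purification identification together with standard fidelity–distance inequalities (Fuchs–van de Graaf on the upper side, Powers–Størmer on the lower side), so the remaining work is purely algebraic verification of the precise form claimed.
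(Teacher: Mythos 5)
The paper imports this statement from Holevo's 1972 paper without giving a proof, so there is nothing in the paper to compare against; I will assess your argument on its own.

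Your lower-bound argument is correct: Powers--Størmer gives $\|\sqrt\rho - \sqrt\sigma\|_2^2 = 2 - 2F_H \le \|\rho - \sigma\|_1 = 2\,\TD(\rho,\sigma)$, hence $\TD \ge 1 - F_H \ge 1 - \sqrt{F_H}$ since $F_H \in [0,1]$. Your upper-bound argument, however, contains a substitution error. You correctly compute $\braket{\psi_\rho}{\psi_\sigma} = \Tr(\sqrt\rho\sqrt\sigma) = F_H$, but the pure-state identity $\TD(\proj\psi, \proj\phi) = \sqrt{1 - |\braket\psi\phi|^2}$ then yields $\TD(\proj{\psi_\rho}, \proj{\psi_\sigma}) = \sqrt{1 - F_H^2}$, not $\sqrt{1 - F_H}$. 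Since $F_H \in [0,1]$ implies $F_H^2 \le F_H$, the bound $\sqrt{1 - F_H^2}$ is \emph{strictly weaker} than $\sqrt{1 - F_H}$, so the monotonicity-of-purification route does not establish the inequality in the form stated.

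In fact the upper bound cannot be saved as stated, because with the paper's definition $F_H(\rho,\sigma) := \Tr(\sqrt\rho\sqrt\sigma)$ it is false: take $\rho = \proj{0}$ and $\sigma = 0.01\proj{0} + 0.99\proj{1}$. Then $\TD(\rho,\sigma) = 0.99$ while $F_H = \Tr(\proj{0}\cdot(0.1\proj{0} + \sqrt{0.99}\proj{1})) = 0.1$, so $\sqrt{1 - F_H} \approx 0.949 < 0.99$. The version that is true, and that your two arguments actually prove, is $1 - F_H \le \TD(\rho,\sigma) \le \sqrt{1 - F_H^2}$; this coincides with the textbook Holevo bound $1 - \sqrt{G} \le \TD \le \sqrt{1 - G}$ under the alternative convention $G := (\Tr\sqrt\rho\sqrt\sigma)^2$, so the discrepancy is a convention mismatch in the paper's statement rather than a flaw in the underlying result. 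Note that this is harmless to the paper: the lemma is only ever invoked in its lower-bound direction, and only in the weakened form $F_H \ge (1-\TD)^2$, which your bound $\TD \ge 1 - F_H$ implies.
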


\begin{lemma}
    If two quantum extrapolation tasks $\abs{\braket{\Gen}{\Gen'}} = \sqrt{1 - \varepsilon}$ have a large overlap, then their target states has noticeable overlap as well: $\braket{T}{T'} \ge (1 - \sqrt\varepsilon)^2$.
\end{lemma}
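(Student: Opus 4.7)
The plan is to reduce the overlap of target states to the Holevo fidelity of the reduced density matrices, and then use \Cref{conj:robustness} together with monotonicity of trace distance under partial trace.

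First, I would write out $\braket{T}{T'}$ using the canonical purification formula. Let $\rho, \rho'$ be the reduced density matrices on $\RegA$ of $\ket{\Gen}, \ket{\Gen'}$. Then
\[
    \braket{T}{T'} = \bra{\Phi}(\sqrt{\rho}\sqrt{\rho'} \otimes I)\ket{\Phi} = \Tr\mparen{\sqrt{\rho}\sqrt{\rho'}} = F_H(\rho,\rho'),
\]
using the standard identity $\bra{\Phi}(M \otimes I)\ket{\Phi} = \Tr(M)$ for the unnormalized maximally entangled state. Note that this quantity is a nonnegative real number because it equals the Hilbert--Schmidt inner product of the two PSD operators $\sqrt{\rho}$ and $\sqrt{\rho'}$.

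Second, I would invoke the left inequality from \Cref{conj:robustness}, which gives
\[
    \sqrt{F_H(\rho,\rho')} \ge 1 - \TD(\rho,\rho'),
\]
and therefore $F_H(\rho,\rho') \ge (1 - \TD(\rho,\rho'))^2$ (provided the right-hand side is nonnegative, which will hold below).

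Third, I would bound $\TD(\rho,\rho')$ by reducing to trace distance of the purifications. Since trace distance is monotone under the partial trace over $\RegB$,
\[
    \TD(\rho,\rho') \le \TD(\proj{\Gen}, \proj{\Gen'}) = \sqrt{1 - \abs{\braket{\Gen}{\Gen'}}^2} = \sqrt{\varepsilon},
\]
using the standard closed form for trace distance between pure states. Combining the three steps yields $\braket{T}{T'} \ge (1 - \sqrt{\varepsilon})^2$, as required.

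I do not anticipate a substantial obstacle here: the argument is essentially three one-line observations (purification identity, Holevo--trace-distance inequality, monotonicity under partial trace). The only subtle point worth checking is that the Holevo fidelity really does equal $\braket{T}{T'}$ directly (no absolute value needed), which follows from the fact that $\Tr(\sqrt{\rho}\sqrt{\rho'}) \ge 0$; and that the bound $\TD \le \sqrt{\varepsilon}$ is small enough for the intermediate quantity $1 - \TD(\rho,\rho')$ to be nonnegative whenever $(1-\sqrt{\varepsilon})^2$ is a meaningful lower bound, which holds automatically for $\varepsilon \le 1$.
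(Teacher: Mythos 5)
Your proof is correct, and the core of the argument is the same as the paper's: compute $\braket{T}{T'} = \Tr(\sqrt{\rho}\sqrt{\rho'}) = F_H(\rho,\rho')$ via the canonical purification identity, then lower-bound the Holevo fidelity by $(1-\TD(\rho,\rho'))^2$ using \Cref{conj:robustness}. The one place you deviate is in bounding $\TD(\rho,\rho')$: you invoke contractivity of trace distance under the partial trace, $\TD(\rho,\rho') \le \TD(\proj{\Gen},\proj{\Gen'}) = \sqrt{1 - |\braket{\Gen}{\Gen'}|^2} = \sqrt{\varepsilon}$, whereas the paper first applies Uhlmann's theorem to get $F(\rho,\rho') \ge 1-\varepsilon$ and then the Fuchs--van de Graaf inequality $\TD \le \sqrt{1-F}$. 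Both paths deliver the identical bound $\TD(\rho,\rho') \le \sqrt{\varepsilon}$, and your route is arguably the more economical one---it replaces a two-lemma detour with a single appeal to data-processing for trace distance plus the closed form for pure-state trace distance, and it avoids the (here unnecessary) notion of an optimal Uhlmann purification. Your side remarks (that $\Tr(\sqrt{\rho}\sqrt{\rho'}) \ge 0$ so no absolute value is needed, and that $1-\sqrt{\varepsilon}$ could be negative only when the claimed lower bound is vacuous) are both sound and appropriately address the only minor subtleties in the chain.
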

\begin{proof}
    Let $\rho, \sigma$ be the reduced density matrices of $\Gen$ and $\Gen'$ respectively.
    By Uhlmann's theorem,
    \begin{equation}
        \label{eq:fidelity-from-uhlmann}
        \sqrt F(\rho, \sigma) \ge |\braket{\Gen}{\Gen'}| = \sqrt{1 - \varepsilon}.
    \end{equation}
    Since the target states are the canonical purifications of the bipartite states,
    we compute the overlap of the two target states
    \begin{align}
        \braket{T'}{T} &= \bra\Phi (\sqrt\sigma \otimes I) (\sqrt\rho \otimes I)\ket\Phi \nonumber \\
            &= \Tr(\sqrt\sigma \sqrt\rho) \nonumber \\
            &= F_H(\rho, \sigma) \nonumber \\
            &\ge (1 - TD(\rho, \sigma))^2 \label{eq:use-holevo} \\
            &\ge (1 - \sqrt{1 - F(\rho, \sigma)})^2 \label{eq:use-fuchs-vandegraaf} \\
            &\ge (1 - \sqrt{\varepsilon})^2 \label{eq:use-fidelity-from-uhlmann}.
    \end{align}
    \eqref{eq:use-holevo} is by \Cref{conj:robustness}; \eqref{eq:use-fuchs-vandegraaf} is by Fuchs--van de Graaf inequality; and finally, \eqref{eq:use-fidelity-from-uhlmann} is by \eqref{eq:fidelity-from-uhlmann}.
\end{proof}

\subsection{Relations to Other Hardness Assumptions}

\iflncs\else
In this section, we begin by showing that quantum extrapolation easily generalizes classical$\rightarrow$quantum extrapolation tasks, and then move onto construction from commitments.

\begin{proposition}\label{prop:pkqm-qextrap}
    If hard classical$\rightarrow$quantum extrapolation exists, then hard quantum extrapolation exists.
\end{proposition}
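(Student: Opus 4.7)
The plan is to show the contrapositive. Viewing a classical$\rightarrow$quantum extrapolation instance $\ket{\Gen} = \sum_s \alpha_s \ket{s}_\RegA \otimes \ket{\psi_s}_\RegB$ as a bipartite pure state, its canonical target is $\ket{T} = \sum_s \alpha_s \ket{s}_\RegA \otimes \ket{s}_{\RegB'}$ since the computational basis is self-conjugate. So a quantum extrapolation solver for $\ket{\Gen}$ is morally an efficient isometry mapping $\ket{\psi_s} \mapsto \ket{s}$ coherently, and running its inverse on the classical challenge $\ket{s}$ should recover $\ket{\psi_s}$, which is exactly what the classical$\rightarrow$quantum task asks for.

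Concretely, I would first purify the hypothesized channel $V$ via Stinespring to an efficient isometry $U\colon \RegB\RegC\RegE_0 \to \RegB'\RegE_1$. Applying Uhlmann's theorem to the fidelity bound yields a (possibly inefficient) junk state $\ket{J}_{\RegE_1}$ such that $U(\ket{\Gen}_{\RegA\RegB} \otimes \ket{\aux}_\RegC \otimes \ket{0}_{\RegE_0})$ has squared overlap at least $1 - 1/p$ with $\ket{T}_{\RegA\RegB'} \otimes \ket{J}_{\RegE_1}$. The reduction, on input $s$, would then (a) prepare $\ket{\Gen}\otimes \ket{\aux} \otimes \ket{0}$ and apply $U$; (b) trace out register $\RegB'$, leaving a state close to $\rho_\RegA \otimes \proj{J}_{\RegE_1}$; (c) prepare a fresh $\ket{s}_{\RegB'}$; (d) apply $U^\dagger$ on $\RegB'\RegE_1$; (e) output register $\RegB$. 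The key conceptual point is that the reduction need not prepare $\ket{J}$ from scratch --- step (b) provides it implicitly as a byproduct of running $U$ on $\ket{\Gen}$.

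For the analysis, applying $U^\dagger$ to the Uhlmann bound and expanding in the computational basis of $\RegA$ shows that $\sum_s \alpha_s \ket{s}_\RegA \otimes \ket{\phi_s}$ has squared overlap at least $1 - 1/p$ with $\sum_s \alpha_s \ket{s}_\RegA \otimes \ket{\psi_s, \aux, 0}$, where $\ket{\phi_s} \coloneqq U^\dagger(\ket{s}_{\RegB'} \otimes \ket{J})$. Expanding gives $\abs{\sum_s \alpha_s^2 \braket{\phi_s}{\psi_s, \aux, 0}}^2 \geq 1 - 1/p$; the triangle inequality followed by Jensen's inequality then yields $\sum_s \alpha_s^2 \abs{\braket{\phi_s}{\psi_s, \aux, 0}}^2 \geq 1 - 1/p$, and dropping the projector $\proj{\aux, 0} \leq I$ bounds this below by $\sum_s \alpha_s^2 \bra{\psi_s} \Tr_{\RegC\RegE_0}\mbracket{\proj{\phi_s}} \ket{\psi_s}$ --- precisely the reduction's classical$\rightarrow$quantum success probability in the idealized case. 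The approximation in step (b) propagates through the remaining operations only as an additive $O(\sqrt{1/p})$ loss by Fuchs--van de Graaf and monotonicity of trace distance under quantum channels, so non-negligible advantage is preserved. The main subtlety --- that Uhlmann's $\ket{J}$ is in general not efficiently preparable on its own --- is exactly what the use-$U$-to-produce-$J$ trick circumvents; everything else is routine accounting.
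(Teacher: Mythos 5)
Your reduction hinges on the claim that the canonical target of $\ket{\Gen} = \sum_s \alpha_s \ket{s}_\RegA \otimes \ket{\psi_s}_\RegB$ is $\ket{T} = \sum_s \alpha_s \ket{s}_\RegA \otimes \ket{s}_{\RegB'}$. This holds only when the $\ket{\psi_s}$ are mutually orthogonal, but the definition of classical$\rightarrow$quantum extrapolation explicitly does \emph{not} require this (see the footnote in Section~4.1, which notes that in public-key quantum money different serial numbers may recognize overlapping banknotes). When the $\ket{\psi_s}$ overlap, the reduced state on $\RegA$ is $\rho = \sum_{s,s'}\alpha_s\alpha_{s'}\braket{\psi_{s'}}{\psi_s}\ketbra{s}{s'}$, which is not diagonal, so neither $\sqrt{\rho}$ nor $\ket{T}=(\sqrt{\rho}\otimes I)\ket{\Phi}$ has the claimed form. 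As an extreme example, if all the $\ket{\psi_s}$ coincide then $\ket{\Gen}$ is a product state and $\ket{T}$ is as well, nothing like a maximally classically correlated state. With off-diagonal terms in $\sqrt{\rho}$, your Uhlmann expansion of $\bra{T}\bra{J}U(\ket{\Gen}\otimes\ket{\aux}\otimes\ket{0})$ in the $\ket{s}_\RegA$ basis no longer isolates the per-$s$ overlaps you need, and there is no reason that $U^\dagger(\ket{s}\otimes\ket{J})$ should be close to $\ket{\psi_s,\aux,0}$ for typical $s$, so the rest of the analysis collapses.

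The paper's proof avoids this by first \emph{purifying} the classical$\rightarrow$quantum experiment: it writes the state as $U\ket{\vec{0}}=\sum_x\alpha_x\ket{A_x,x}_\RegA\otimes\ket{x}_\RegB$, keeping a second copy of $x$ alongside the secret $\ket{A_x}$. Now the $\RegA$-vectors $\ket{A_x,x}$ \emph{are} mutually orthogonal no matter how the $\ket{A_x}$ overlap, so this is a genuine Schmidt decomposition with an explicit target. Note also that the paper places the challenge $\ket{x}$ in $\RegB$, the register the quantum-extrapolation solver acts on, and the secret in $\RegA$, so the solver directly outputs the forgery $\ket{A_x^*,x}$; there is no need to invert the solver, Stinespring-dilate it, or reconstruct the Uhlmann junk state. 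The leftover complex conjugate is handled by running the solver on the conjugated circuit $U^*\ket{\vec{0}}$ (also efficiently preparable, by conjugating each gate) and commuting an inefficient $\RegA$-side correction unitary $M$ past the $\RegB$-side solver. Your observation that the computational basis is self-conjugate is a genuine simplification opportunity, but the purification step is not optional, and your register assignment --- feeding the solver $\ket{\psi_s}$ rather than $\ket{s}$ --- is precisely what forces the inversion machinery and yields the wrong target.
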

\begin{proof}
In classical$\rightarrow$quantum extrapolation, the challenger efficiently prepares the state
    \[
        U\ket{\vec{0}} = \sum_{x\in \{0,1\}^\lambda} \alpha_x \ket{A_x, x}_{\RegA} \otimes \ket{x}_{\RegB}
    \]
    and asks the adversary to find $\ket{A_x}$ given $x$.\footnote{Here, the extra copy of $x$ in $\RegA$ acts as a delayed measurement of $x$, which purifies the classical$\rightarrow$quantum extrapolation experiment.} Observe that the complex conjugate of this state can also be prepared efficiently by conjugating each gate in $U$, producing
    \begin{equation}
        \label{eq:cq2q-schmidt}
        U^*\ket{\vec{0}} = \left(U\ket{\vec{0}} \right)^*= \sum_{x\in \{0,1\}^\lambda} \alpha_x \ket{A_x^*, x}_{\RegA} \otimes \ket{x}_{\RegB}.
    \end{equation}
    Sice both $\ket{A_x^*, x}$'s as well as $\ket{x}$'s are orthonormal, \eqref{eq:cq2q-schmidt} must be a Schmidt decomposition.
    
    Assume for contradiction that quantum extrapolation were easy, then there exists a QPT adversary operating on register $\RegB$ of $U^*\ket{\vec{0}}$ which produces a state with $\frac12$ fidelity to
    \[
        \sum_{x\in \{0,1\}^\lambda} \alpha_x \ket{A_x^*, x}_{\RegA} \otimes \ket{A_x, x}_{\RegB}.
    \]
    We now show that this adversary, when applied to register $\RegB$ of $U\ket{\vec{0}}$, produces a state with noticeable fidelity to 
    \[
        \sum_{x\in \{0,1\}^\lambda} \alpha_x \ket{A_x,x}_{\RegA} \otimes \ket{A_x,x}_{\RegB}.
    \]
    Thus, this contradicts the hardness of classical$\rightarrow$quantum extrapolation.
    
    Since $\ket{A_x, x}$ is orthogonal to $\ket{A_{x'}, x'}$ for any $x\neq x'$, there exists a unitary $M$ mapping $\ket{A_x, x} \mapsto \ket{A_x^*, x}$. Applying $M$ to register $\RegA$ maps $U\ket{\vec{0}}$ to $U^*\ket{\vec{0}}$ 
    Then, since 
    \begin{align*}
        (I_\RegA \otimes V_{\RegB,\RegC}) \left(U\ket{\vec{0}} \otimes \ket{\aux}_{\RegC}\right)
        &=
        (M^\dagger_\RegA M_\RegA \otimes V_{\RegB,\RegC}) \left(U\ket{\vec{0}} \otimes \ket{\aux}_{\RegC}\right)
        \\
        &= (M^\dagger_\RegA \otimes I_{\RegB,\RegC})(I_\RegA \otimes V_{\RegB,\RegC}) (M_\RegA \otimes I_{\RegB,\RegC})\left(U\ket{\vec{0}} \otimes \ket{\aux}_{\RegC}\right)
        \\
        &= (M^\dagger_\RegA \otimes I_{\RegB,\RegC})(I_\RegA \otimes V_{\RegB,\RegC}) \left(U^*\ket{\vec{0}} \otimes \ket{\aux}_{\RegC}\right),
    \end{align*}
    the adversary produces a state with fidelity $\frac12$ to
    \[
        (M^\dagger_{\RegA} \otimes I_{\RegB,\RegC})\sum_{x\in \{0,1\}^\lambda} \alpha_x \ket{A_x^*, x}_{\RegA} \otimes \ket{A_x, x}_{\RegB} \otimes I_{\RegC}
        =
        \sum_{x\in \{0,1\}^\lambda} \alpha_x \ket{A_x, x}_{\RegA} \otimes \ket{A_x, x}_{\RegB} \otimes I_{\RegC}.
    \]
    Since this state is always accepted by the classical$\rightarrow$quantum extrapolation challenger, the reduction must succeed with probability at least $\frac12$ as well, which is non-negligible.
\end{proof}

We next prove that the hardness is also implied by commitments.
This technically subsumes the proposition above, but the reduction is slightly more involved.
\fi

\begin{lemma}[Triangle-Like Inequality for Fidelity \cite{NielsenChuang}]\label{lem:triangle-like-fidelity}
    For any three states $\rho$, $\sigma$, and $\tau$, the following inequality holds:
    \[
        \arccos(F(\rho, \tau)) \leq \arccos(F(\rho, \sigma)) + \arccos(F(\sigma, \tau)).
    \]
\end{lemma}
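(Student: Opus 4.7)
The plan is to reduce the inequality to the triangle inequality for the Fubini--Study angle on pure states via Uhlmann's theorem, which lifts fidelities between mixed states to inner products between carefully chosen purifications.

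First, I would fix a sufficiently large ancilla Hilbert space and pick any purification $\ket{\psi_\sigma}$ of $\sigma$ in it. By Uhlmann's theorem, $F(\rho,\sigma) = \max |\braket{\psi_\rho}{\psi_\sigma'}|$ over purifications of $\rho$ and $\sigma$, and moreover the maximum is attained for any fixed purification of one side by choosing the purification of the other appropriately. So I obtain a purification $\ket{\psi_\rho}$ of $\rho$ with $|\braket{\psi_\rho}{\psi_\sigma}| = F(\rho,\sigma)$ and, independently, a purification $\ket{\psi_\tau}$ of $\tau$ with $|\braket{\psi_\sigma}{\psi_\tau}| = F(\sigma,\tau)$, all living in the same ancilla space.

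Next, I would invoke the pure-state triangle inequality for the angle between rays: for any three unit vectors $\ket a, \ket b, \ket c$,
\[
\arccos(|\braket{a}{c}|) \;\leq\; \arccos(|\braket{a}{b}|) + \arccos(|\braket{b}{c}|).
\]
This is the statement that the Fubini--Study distance is a metric on projective Hilbert space. To see it concretely, rephase $\ket a$ and $\ket c$ so that $\braket{a}{b}$ and $\braket{b}{c}$ are real and non-negative; then $|\braket{a}{c}| \geq \mathrm{Re}(\braket{a}{c})$, and the claim reduces to the ordinary spherical triangle inequality applied to the three real unit vectors obtained by taking real parts (or by looking at the real 2-plane spanned by $\ket a, \ket b$ and the one spanned by $\ket b, \ket c$). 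Applying this with $\ket a = \ket{\psi_\rho}$, $\ket b = \ket{\psi_\sigma}$, $\ket c = \ket{\psi_\tau}$ yields
\[
\arccos(|\braket{\psi_\rho}{\psi_\tau}|) \;\leq\; \arccos F(\rho,\sigma) + \arccos F(\sigma,\tau).
\]

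Finally, since $\ket{\psi_\rho}$ and $\ket{\psi_\tau}$ are merely \emph{some} purifications of $\rho$ and $\tau$, Uhlmann's theorem once more gives $|\braket{\psi_\rho}{\psi_\tau}| \leq F(\rho,\tau)$, and monotonicity of $\arccos$ on $[0,1]$ flips this to $\arccos F(\rho,\tau) \leq \arccos(|\braket{\psi_\rho}{\psi_\tau}|)$, chaining with the previous inequality to finish. The main subtle step is the pure-state triangle inequality with absolute values, because the phases break the direct reduction to the Euclidean/spherical triangle inequality; the rephasing trick above is the cleanest way to handle it, but it is the only real content beyond invoking Uhlmann twice.
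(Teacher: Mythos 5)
The paper does not actually prove this lemma; it just cites Nielsen and Chuang, where $F(\rho,\sigma)=\Tr\sqrt{\sqrt\rho\,\sigma\sqrt\rho}=\max|\braket{\psi_\rho}{\psi_\sigma}|$ and $\arccos F$ is shown to be a metric (the Bures angle). Your proof is exactly that textbook argument --- Uhlmann's theorem to attach optimal purifications of $\rho$ and $\tau$ to a common purification of $\sigma$, the Fubini--Study triangle inequality for pure states, and Uhlmann once more to pass from $|\braket{\psi_\rho}{\psi_\tau}|$ to $F(\rho,\tau)$ --- so it is correct and matches the intended source.

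Two small remarks. First, the phrase ``taking real parts'' of the three vectors is not literally valid, since real parts do not preserve the relevant inner products; the clean version, which your parenthetical about the two real $2$-planes already gestures at, is to expand $\ket a = \cos\alpha\ket b + \sin\alpha\ket{e_1}$ and $\ket c = \cos\gamma\ket b + \sin\gamma\ket{e_2}$ with $\alpha,\gamma\in[0,\pi/2]$ and $\ket{e_1},\ket{e_2}\perp\ket b$, giving $|\braket{a}{c}|\geq \cos\alpha\cos\gamma-\sin\alpha\sin\gamma=\cos(\alpha+\gamma)$ (the case $\alpha+\gamma>\pi/2$ being trivial). Second, elsewhere the paper appears to use $F$ for the \emph{squared} fidelity (e.g.\ writing $\sqrt{F(\rho,\sigma)}\geq|\braket{\Gen}{\Gen'}|$), whereas your proof, like Nielsen--Chuang, establishes the inequality for the unsquared $F$. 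The claim does remain true under the squared convention because $\theta\mapsto\arccos(\cos^2\theta)$ is concave on $[0,\pi/2]$ with value $0$ at $0$, hence subadditive, but that is an extra step and worth flagging when reconciling the lemma with its later use.
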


\begin{theorem}
    If quantum bit commitments exist, then hard quantum extrapolation exists.
\end{theorem}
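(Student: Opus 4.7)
The plan is to construct an extrapolation instance directly from the commitment scheme. Define
\[
    \ket{\Gen_\lambda} \coloneqq \tfrac{1}{\sqrt{2}}\bigl(\ket{0}_\RegE \ket{\com_0}_{\RegC\RegD} \ket{0}_\RegG + \ket{1}_\RegE \ket{\com_1}_{\RegC\RegD} \ket{1}_\RegG\bigr),
\]
with extrapolation registers $\RegA = (\RegE, \RegC)$ and $\RegB = (\RegD, \RegG)$. Placing the branch tag $\RegG$ inside $\RegB$ (so that it gets traced out in the reduced density) is what makes the reduced density on $\RegA$ exactly block-diagonal: $\rho_\Gen = \tfrac{1}{2}(\proj{0}_\RegE \otimes \rho_0 + \proj{1}_\RegE \otimes \rho_1)$, where $\rho_b \coloneqq \Tr_\RegD(\proj{\com_b})$ and all cross terms vanish thanks to the orthogonality of $\ket{0}_\RegG$ and $\ket{1}_\RegG$. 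A direct computation of the canonical purification then yields
\[
    \ket{T} = \tfrac{1}{\sqrt{2}}\bigl(\ket{0}_\RegE \ket{T_0}_{\RegC\RegC_*} \ket{0}_{\RegE_*} + \ket{1}_\RegE \ket{T_1}_{\RegC\RegC_*} \ket{1}_{\RegE_*}\bigr),
\]
where $\RegB' = (\RegE_*, \RegC_*)$ and $\ket{T_b}$ is the canonical purification of $\rho_b$ on $(\RegC, \RegC_*)$.

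The key analytic input is that $\ket{T_0}$ and $\ket{T_1}$ are close under statistical hiding. Computing directly, $\braket{T_0}{T_1} = \Tr(\sqrt{\rho_0}\sqrt{\rho_1}) = F_H(\rho_0, \rho_1)$; combining the Kho72 bound $1 - \sqrt{F_H(\rho_0, \rho_1)} \leq \TD(\rho_0, \rho_1)$ recalled earlier in the paper with statistical hiding $\TD(\rho_0,\rho_1) \leq \negl$ gives $F_H(\rho_0, \rho_1) \geq 1 - \negl$ and hence $\ket{T_0} \approx \ket{T_1}$ in trace distance. So $\ket{T}$ is negligibly close to $\tfrac{1}{\sqrt{2}}(\ket{00} + \ket{11})_{\RegE\RegE_*} \ket{T_0}_{\RegC\RegC_*}$: any extrapolation solver effectively copies the committed bit from $\RegE$ into $\RegE_*$ while producing a $b$-independent state on $(\RegC, \RegC_*)$.

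Now assume for contradiction that extrapolation is easy, so for every polynomial $p$ there is an efficient channel $V$ with $F(V(\Gen_\lambda), T) \geq 1 - 1/p$. Pass to the Stinespring dilation to obtain an efficient isometry $\tilde V \colon (\RegD, \RegG) \to (\RegB', \RegD')$ with $\tilde V\ket{\Gen_\lambda} \approx \ket{T} \otimes \ket{\mathsf{junk}}$ as pure states (any isometric dilation whose output has fidelity at least $1 - 1/p$ with a pure target is at trace distance $O(\sqrt{1/p})$ from a product with that target). Since $\tilde V$ acts trivially on $\RegE$, projecting onto $\RegE = b$ gives $\tilde V(\ket{\com_b} \ket{b}_\RegG) \approx \ket{T_b} \ket{b}_{\RegE_*} \ket{\mathsf{junk}}$, crucially with the \emph{same} $\ket{\mathsf{junk}}$ for both $b$. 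The efficient binding attack on $\ket{\com_0} \to \ket{\com_1}$ is then: attach an ancilla $\ket{0}_\RegG$, apply $\tilde V$ on $(\RegD, \RegG)$ to reach $\approx \ket{T_0}\ket{0}_{\RegE_*}\ket{\mathsf{junk}}$, flip $\RegE_*$ from $\ket{0}$ to $\ket{1}$ so the state is now $\approx \ket{T_0}\ket{1}_{\RegE_*}\ket{\mathsf{junk}} \approx \ket{T_1}\ket{1}_{\RegE_*}\ket{\mathsf{junk}}$ using $T_0 \approx T_1$, and finally apply $\tilde V^\dagger$ to reach $\approx \ket{\com_1}\ket{1}_\RegG$. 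The $\RegG$ ancilla is irrelevant to the binding check on $(\RegC, \RegD)$, so this contradicts computational binding.

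The main subtle step, which the paper's overview flags as the new ingredient over the perfect-hiding sketch, is the Kho72-based control of $\braket{T_0}{T_1}$ from statistical hiding; the $\RegG$-tag trick is precisely what eliminates the off-diagonal cross terms in $\rho_\Gen$ that would otherwise complicate the structure of $\ket{T}$ and make the bit-flip step in the attack imprecise.
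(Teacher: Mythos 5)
Your proposal is correct, and it takes a genuinely different route from the paper's. The paper treats $\ket{\com_0}$ and $\ket{\com_1}$ as \emph{two separate} extrapolation instances, obtains (by the ``all extrapolation is easy'' contradiction) solvers $\cA_0$ and $\cA_1$, then chains $\cA_1^{-1}\circ\cA_0$; because these are channels rather than isometries, the error propagation is controlled with the arccos triangle-like inequality for fidelity (Lemma on NielsenChuang) applied twice. Your proof instead bundles $\com_0$ and $\com_1$ into \emph{one} extrapolation instance using the control register $\RegE$ and the crucial tag register $\RegG$ placed in $\RegB$, which makes $\rho_\Gen$ block-diagonal and forces $\ket{T}$ to have the clean form $\tfrac1{\sqrt2}\sum_b\ket b_\RegE\ket{T_b}\ket b_{\RegE_*}$. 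A single Stinespring dilation $\tilde V$ of a single extrapolation solver then does the whole job, and the attack $\tilde V^\dagger \circ X_{\RegE_*}\circ\tilde V$ replaces the paper's composition of two solvers. Both proofs hinge on exactly the same analytic step --- $\braket{T_0}{T_1}=F_H(\rho_0,\rho_1)$ bounded via the Kho72 relation between Holevo fidelity and trace distance --- so the quantitative ``statistical hiding $\Rightarrow$ close purifications'' content is identical; but your error analysis is simpler because everything is done coherently at the level of isometries (plain $2$-norm additivity), so you never need the fidelity triangle inequality. One small thing you should make explicit: you invoke statistical hiding, but a generic quantum bit commitment need only be computationally hiding; you need the standard flavor-switching step (as the paper does at the start of its proof) to reduce to the statistically hiding case before the Kho72 bound applies.
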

\begin{proof}
    Without loss of generality, we can simply consider this for statistically hiding commitment schemes, which can be constructed from general quantum bit commitments \cite{AC:Yan22,ITCS:BCQ23}.
    Consider the commitment states $\ket{\Com_0}, \ket{\Com_1}$ when committing to $0, 1$ respectively.
    Let $\rho_b := \Tr_D(\proj{\Com_b})$ be the commitment message for bit $b$.
    By statistical hiding, we have that $TD(\rho_0, \rho_1) \le .01$ for all sufficiently large security parameters.

    Let the target states $\ket{T_b} = (\sqrt{\rho_b} \otimes I)\ket{\Phi}$.
    On a high level, we will construct an adversary that turns $\ket{\Com_0}$ to $\ket{T_0}$ using quantum extrapolation, and then pretend that $\ket{T_0} \approx \ket{T_1}$ by statistical hiding, and then undo the quantum extrapolation to turn $\ket{T_1}$ to $\ket{\Com_1}$, breaking the binding property.
    
    Assume for contradiction that quantum extrapolation were easy, then there exist efficient quantum channels $\cA_b$ that only act on the $\RegD$ register, such that $\Tr(\proj{T_b}\cA_b(\proj{\Com_b})) \ge .99$ for both $b = 0, 1$.
    Note that this also implies the ability to efficiently invert the extrapolation: there exists efficient $\cA'_b$ that only acts on the $\RegD$ register, such that $\Tr(\proj{\Com_b}\cA_b'(\proj{T_b})) \ge .99$ for both $b = 0, 1$.\footnote{One essentially runs the unitary purification of $\cA_b(\proj{\Com_b})$ for some sufficiently high-fidelity $\cA_b$, swap its output register with the input and then uncompute the purified $\cA_b$.}
    
    The adversary for breaking computational binding is simply $\cA'_1 \circ \cA_0$.
    We now analyze the success probability.
    After running $\cA_0$, we must have some state that has fidelity $\ge .99$ with $\ket{T_0}$.
    Since $\abs{\braket{T_1}{T_0}}^2 = \abs{\bra\Phi (\sqrt\rho_1 \otimes I) (\sqrt\rho_0 \otimes I)\ket\Phi}^2 = \abs{\Tr(\sqrt\rho_1 \sqrt\rho_0)}^2 = \abs{F_H(\rho_0, \rho_1)}^2 \ge (1 - TD(\rho_0, \rho_1))^4 \ge .96$ by \Cref{conj:robustness}, this state must have 0fidelity $\ge .9$ with $\ket{T_1}$ by \Cref{lem:triangle-like-fidelity}.
    Using \Cref{lem:triangle-like-fidelity} again, we get that after running $\cA'_1$, the output state has fidelity $\ge .8$ with $\ket{\Com_1}$.
    Therefore, we break computational binding with noticeable probability for all sufficiently large security parameters, a contradiction.
    \end{proof}

\subsection{Polynomial-Space Upper Bound}

Finally, we give an upper bound on how one could solve any quantum extrapolation task in (quantum) polynomial space, more specifically $\mathsf{pureUnitaryPSPACE}$ \cite[Definition 2.7]{MY23-stateQIP}.
This suggests that it might be hard to unconditionally establish the hardness of a quantum extrapolation task with an explicit classical description.

\begin{theorem}
    For any quantum extrapolation task specified by $\Gen$, it can be solved in quantum polynomial space and exponential time.
\end{theorem}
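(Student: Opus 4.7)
The plan is to exhibit a single isometry $V$ acting on register $\RegB$ that maps $\ket{\Gen}$ to the target state $\ket{T}$ exactly, and then argue that $V$ is implementable in $\mathsf{pureUnitaryPSPACE}$. Since both $\ket{\Gen}$ and $\ket{T}$ are purifications of the same reduced density matrix $\rho$ on $\RegA$ (indeed $\Tr_{\RegB'}\proj{T} = \sqrt{\rho}\,\Tr_{\RegB'}\proj{\Phi}\,\sqrt{\rho} = \rho$), Uhlmann's theorem guarantees the existence of an isometry $V \colon \RegB \to \RegB'$ (extending $\RegB$ with ancillas if needed) such that $(I_{\RegA} \otimes V)\ket{\Gen} = \ket{T}$. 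Concretely, fixing any Schmidt decomposition $\ket{\Gen} = \sum_i \alpha_i \ket{A_i}_{\RegA}\ket{B_i}_{\RegB}$, the earlier lemma identifies $\ket{T} = \sum_i \alpha_i \ket{A_i}_{\RegA}\ket{A_i^*}_{\RegB'}$, and $V$ is characterized by $V\ket{B_i} = \ket{A_i^*}$.

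Next, I would argue that the matrix entries of $V$ are PSPACE-computable. The amplitudes of $\ket{\Gen}$ are PSPACE-computable (since $\Gen$ is a polynomial-size quantum circuit and amplitudes reduce to evaluating sums of exponentially many products of gate entries, a task in $\mathsf{GapP}/\mathsf{PSPACE}$). From these amplitudes one assembles $\rho$, then computes its spectral decomposition in PSPACE via standard techniques for diagonalizing an exponentially large Hermitian matrix with PSPACE-computable entries. This yields $\alpha_i$ and eigenvectors $\ket{A_i}$, from which $\ket{B_i} = \alpha_i^{-1}(\bra{A_i} \otimes I)\ket{\Gen}$ on the support of $\rho$ can likewise be extracted in PSPACE. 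Defining $V$ by $\ket{B_i} \mapsto \ket{A_i^*}$ (and extending arbitrarily off the support) gives a matrix all of whose entries are PSPACE-computable. To handle degenerate eigenspaces, one fixes a canonical orthonormal basis of each eigenspace (say, the lexicographically first basis obtained by Gram--Schmidt against the computational basis); any such consistent choice produces the same target state $\ket{T}$ by the basis-invariance of $\sum_i \ket{A_i}\ket{A_i^*}$ on a degenerate block.

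Finally, one invokes the fact that any isometry (after unitary completion) with PSPACE-computable matrix entries lies in $\mathsf{pureUnitaryPSPACE}$, as established in the Metger--Yuen framework cited in the statement. Applying $V$ to the $\RegB$ register of $\ket{\Gen}$ produces $\ket{T}$ exactly, so the fidelity achieved is $1$, well above any inverse-polynomial threshold; in particular the extrapolation task is solved in polynomial space and at most exponential time.

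The main obstacle I anticipate is making the ``compute $\sqrt{\rho}$ and its eigenbasis in PSPACE'' step fully rigorous in the presence of degeneracies and irrational spectral data: one must either argue via symbolic manipulation of the characteristic polynomial that a canonical eigenbasis is PSPACE-definable up to arbitrary inverse-exponential precision, or appeal directly to the known closure of $\mathsf{pureUnitaryPSPACE}$ under the relevant linear-algebraic subroutines (spectral decomposition, pseudoinverse, and the construction of Uhlmann isometries between purifications). With that closure in hand, the rest of the argument is a direct translation of Uhlmann's theorem into a space-efficient algorithm.
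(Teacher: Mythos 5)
Your high-level plan---use Uhlmann's theorem to identify the transformation from $\ket{\Gen}$ to its canonical purification $\ket{T}$, then argue the transformation is implementable in $\mathsf{pureUnitaryPSPACE}$---is the same plan the paper follows. The divergence is in \emph{how} you make the transformation space-efficient, and your primary route has a real gap that the paper's route is specifically designed to avoid.

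You propose to compute the matrix entries of the Uhlmann isometry $V$ directly, via spectral decomposition of $\rho$: extract eigenvalues $\alpha_i^2$, eigenvectors $\ket{A_i}$, the co-vectors $\ket{B_i}$, and then define $V\ket{B_i} = \ket{A_i^*}$. You correctly flag degeneracies as a concern, but the trouble is deeper than choosing a canonical basis within an exactly-degenerate block: spectral decomposition is not a continuous function of $\rho$. A $\mathsf{PSPACE}$ machine can only approximate the entries of $\rho$ to inverse-exponential precision, and when two eigenvalues are exponentially close (but not equal), the corresponding eigenspaces can rotate by $O(1)$ under an exponentially small perturbation of $\rho$. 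Your proposed ``lexicographically first Gram--Schmidt'' rule is well-defined on the true $\rho$ but is not computable from an approximation of $\rho$, so ``assemble $V$ entrywise'' does not obviously go through. This is not a detail to be polished; it is precisely the instability that forces one to avoid eigendecompositions in this setting.

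The paper sidesteps eigenbases entirely. It never constructs $V$ entrywise. Instead it (a) observes that the task is an exact-fidelity Uhlmann transformation from $\ket{\Gen}$ to $\ket{T}$ and invokes the Algorithmic Uhlmann Theorem of Metger and Yuen, which only requires that \emph{both states} be approximately preparable in polynomial space; and (b) shows $\ket{T}$ is so preparable by computing its amplitudes via space-efficient block encodings: a block encoding of $\rho$, a block encoding of $\sqrt{\rho}$ obtained from a polynomial approximation of the square-root superoperator (no eigenbasis needed), and a product of block encodings. These are all stable operations---small errors in $\rho$ give small errors in $\sqrt{\rho}\ket{\Phi}$---so the inverse-exponential approximation guarantees carry through, and the overall fidelity loss is negligible, comfortably within the $1-1/p(\lambda)$ threshold. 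Your fallback sentence (``appeal directly to the known closure of $\mathsf{pureUnitaryPSPACE}$ under\dots the construction of Uhlmann isometries between purifications'') is essentially gesturing at this; if you take that branch, you should also replace the eigendecomposition step with a block-encoding computation of $\sqrt{\rho}$, at which point you have reproduced the paper's argument.
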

\begin{proof}[Proof sketch]
    The algorithm essentially reduces to a special case of a succinct Uhlmann instance, which can be solved in quantum polynomial space \cite{BEMPQY23}.
    To solve a quantum extrapolation task specified by $\Gen$, it is equivalent to consider the fidelity-1 Uhlmann transformation problem from $\ket{\Gen}$ to $\ket T$, or equivalently put, consider breaking the binding of a (not necessarily efficient) commitment scheme where the two commitment states are $\ket{\Gen}$ and $\ket T$.
    Recall that the target state is simply the canonical purification of $\ket{\Gen}$, i.e.\ consider the reduced density matrix $\rho := \Tr_B\mparen{\proj\Gen}$, then $\ket T = (\sqrt\rho \otimes I) \ket\Phi$ where $\ket\Phi_{\RegA\RegB}$ is the unnormalized maximally entangled state.

    We prove this by invoking Algorithmic Uhlmann's Theorem \cite[Theorem 7.4]{MY23-stateQIP}, which states that we can (approximately) perform the Uhlmann transformation problem in polynomial space given that we know how to (approximately) prepare each of the two states in polynomial space.
    Therefore, it suffices to show how to prepare the state $\ket T$ in polynomial space, or equivalently, to compute the amplitudes of $\ket T$ in $\mathsf{PSPACE}$; and this follows from the fact that we can space-efficiently approximate the block encoding of $\rho$ \cite[Lemma 3.3]{MY23-stateQIP}, the square root superoperator \cite[Lemma 3.15]{MY23-stateQIP}, as well as the product of two operators \cite[Lemma 3.5]{MY23-stateQIP}, and from there it is easy to compute the amplitudes from the block encoding.
\end{proof}

\ifanonymous\else
\section*{Acknowledgements}

We thank Fermi Ma for his extended insightful discussions.
We also thank John Bostanci, Yuval Efron, Tony Metger, Alexander Poremba, and Henry Yuen for their initial discussions on the quantum extrapolation task.
Finally, we thank Tomoyuki Morimae for the reference on the 3-message QKD protocol \cite{PRL:PS00}.

This work was done in part when LQ was at Boston University and supported by DARPA under Agreement No.\ HR00112020023.
\fi
\fi

\printbibliography

\ifexabs\else
\iflncs\else
\appendix

\section{Alternative choices of bases}
\label{appendix:alternative-bases}

\subsection{A Counterexample for Binary Phase Bases}

Without loss of generality, we are going to consider a basis to be described by a unitary followed by a complete measurement in the standard basis.
In this language, a set of bases can be equivalently expressed as a set of unitaries.

One natural choice of basis, inspired by the binary phase construction of pseudorandom states \cite{C:JiLiuSon18,TCC:BraShm19,TCC:AGQY22}, is to first apply a (pseudo-)random $\pm1$ diagonal matrix and then measure in the Hadamard basis.
Consider the counterexample where we have two states $\ket{\pm 0 \cdots 0} = \frac1{\sqrt2} (\ket{0^n} \pm \ket{10^{n-1}})$.
(In other words, state $b$ is $\frac1{\sqrt2} (\ket{0^n} + (-1)^b \ket{10^{n-1}})$.)
After applying a random diagonal matrix, these states are still either $\ket{\pm 0 \cdots 0}$ or $\ket{\mp 0 \cdots 0}$, and thus perfectly distinguishable when measured in the Hadamard basis.

\subsection{A Counterexample for Pauli Bases}

Consider two states $\ket{0^n}$ and $\ket{1^n}$.
Applying a random Pauli and then measuring in the standard basis will not work in this case since with overwhelming probability at least $n/2 - o(n)$ qubits will be measured in the standard basis, thus the measurement outcome can be distinguished with overwhelming probability.
This shows that 1-design cannot work since Pauli group forms a 1-design.

\subsection{Statistical Hiding with $2$-Designs}\label{sec:t-design}

In this section, we are going to show that using a unitary 2-design suffices, and we can efficiently sample a unitary 2-design over $n$ qubits uniformly from $2^{5n} - 2^{3n}$ elements and it can be implemented in quasi-linear time using Clifford gates \cite{CLLW16}.

\begin{lemma}\label{lem:2-design}
    Let $D_2$ be a unitary 2-design over dimension $N$ and $D_\infty$ be the Haar random unitary distribution.
    For any two mixed states $\rho_0$ and $\rho_1$, the expected trace distance after a random measurement drawn from $D_2$ is
    \[
        \E_{U\sim D_2} \mbracket{\TD\mparen{\rho_0^{(U)}, \rho_1^{(U)}}} \leq  \sqrt{\frac N{2(N+1)}} \cdot \TD(\rho_0, \rho_1)
    \]
    where $\rho^{(U)} \coloneqq \sum_{x=0}^{N-1} \proj{x} U \rho U^\dagger \proj{x}$ is the projection onto the basis $\{U\ket{x}\}_{x}$.
\end{lemma}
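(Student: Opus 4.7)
The plan is to mimic the structure of the MUB proof (\Cref{lem:mub}): bound the trace distance by a degree-$2$ spectral quantity, use the $2$-design property to evaluate its expectation exactly as in the Haar case, and then recover a trace-distance bound via Jordan--Hahn decomposition. Throughout, write $\Delta \coloneqq \rho_0 - \rho_1$ and note that $\Delta^{(U)} \coloneqq \rho_0^{(U)} - \rho_1^{(U)}$ is diagonal in the standard basis with entries $\bra{x} U \Delta U^\dagger \ket{x}$.

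First, since $\Delta^{(U)}$ is a diagonal $N \times N$ matrix, Cauchy--Schwarz over its $N$ entries gives $\norm{\Delta^{(U)}}_1 \le \sqrt{N}\,\norm{\Delta^{(U)}}_2$. Applying Jensen's inequality then yields
\[
    \E_{U\sim D_2}\mbracket{\tfrac{1}{2}\norm{\Delta^{(U)}}_1} \le \tfrac{1}{2}\sqrt{N}\cdot \sqrt{\E_{U\sim D_2}\mbracket{\norm{\Delta^{(U)}}_2^2}}.
\]
The quantity inside the square root is $\sum_x \bra{x} U\Delta U^\dagger \ket{x}^2$, which is a degree-$(2,2)$ polynomial in the entries of $U$ and $U^\dagger$, so by the defining property of a unitary $2$-design its $D_2$-expectation equals the Haar expectation.

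Second, I would evaluate the Haar expectation via the standard second-moment (Weingarten) formula: for each fixed $x$,
\[
    \E_{U \sim D_\infty}\mbracket{\bra{x} U \Delta U^\dagger \ket{x}^2} = \frac{(\Tr \Delta)^2 + \Tr(\Delta^2)}{N(N+1)}.
\]
Since $\Tr \Delta = \Tr(\rho_0 - \rho_1) = 0$, summing over the $N$ values of $x$ gives $\E_{U\sim D_2}[\norm{\Delta^{(U)}}_2^2] = \norm{\Delta}_2^2/(N+1)$. Plugging back in,
\[
    \E_{U\sim D_2}\mbracket{\TD\mparen{\rho_0^{(U)}, \rho_1^{(U)}}} \le \tfrac{1}{2}\sqrt{\tfrac{N}{N+1}}\,\norm{\Delta}_2.
\]

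Third, to avoid the lossy bound $\norm{\Delta}_2 \le \norm{\Delta}_1$, I would apply the Jordan--Hahn decomposition $\Delta = \Delta_+ - \Delta_-$ as in the MUB proof. The two pieces are PSD with orthogonal supports and equal trace $\TD(\rho_0,\rho_1)$, so $\norm{\Delta}_2^2 = \norm{\Delta_+}_2^2 + \norm{\Delta_-}_2^2 \le \norm{\Delta_+}_1^2 + \norm{\Delta_-}_1^2 = 2\,\TD(\rho_0,\rho_1)^2$, i.e.\ $\norm{\Delta}_2 \le \sqrt{2}\,\TD(\rho_0,\rho_1)$. Combining with the previous display gives the claimed bound $\sqrt{N/(2(N+1))}\cdot \TD(\rho_0,\rho_1)$.

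I do not expect a serious obstacle: the only potentially delicate step is checking that $\sum_x \bra{x} U\Delta U^\dagger \ket{x}^2$ is genuinely a polynomial of total degree $2$ in both $U$ and $U^\dagger$ (so that the $2$-design guarantee applies), but this is immediate from expanding the matrix product. The rest is bookkeeping: Cauchy--Schwarz/Jensen to pass from $L^1$ to $L^2$, the second-moment Weingarten formula to evaluate the Haar average exactly, and the Jordan--Hahn trick already developed for \Cref{lem:mub} to recover the $\sqrt{2}$ factor.
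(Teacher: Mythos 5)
Your proof is correct and follows essentially the same route as the paper's: apply the $2$-design property to replace $D_2$ with Haar inside a degree-$2$ moment, evaluate that moment, and then recover a trace-distance bound via Jordan--Hahn. The differences are organizational: you apply Cauchy--Schwarz over the $N$ diagonal entries and then Jensen, rather than the paper's order of factoring out a factor of $N$ by symmetry first and then Jensen; and you invoke the closed-form second moment $\E_\phi[\bra{\phi}A\ket{\phi}^2] = \tfrac{(\Tr A)^2 + \Tr(A^2)}{N(N+1)}$ directly, whereas the paper expands $P_+-P_-$ in its eigenbasis and computes the fourth-moment overlaps term by term. Both routes give the identical bound. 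One minor point in your favor: the paper pulls out the factor of $N$ at the level of $\E_{U\sim D_2}\big[\sum_x \lvert \bra{x}U\Delta U^\dagger\ket{x}\rvert\big]$ by arguing the $x$-marginals coincide, which for a generic $2$-design is only guaranteed \emph{after} squaring (since $\lvert\cdot\rvert$ is not a polynomial); your Cauchy--Schwarz step over the diagonal entries sidesteps this entirely and holds pointwise in $U$.
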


\begin{proof}
Consider the normalized Jordan--Hahn decomposition of $\rho_0 - \rho_1 = TD(\rho_0, \rho_1) \cdot (P_+ - P_-)$ for two orthogonal density matrices $P_+, P_-$.
Then we compute
\begin{align*}
    &\equad \E_{U \sim D_2}\mbracket{TD\mparen{\sum_x{\proj x U\rho_0U^\dagger\proj x}, \sum_x{\proj x U\rho_1U^\dagger\proj x}}} \\
    &= \E_{U \sim D_2}\mbracket{\frac12 \sum_x\abs{\bra x U\rho_0U^\dagger\ket x - \bra x U\rho_1U^\dagger\ket x}} \\
    &= \E_{U \sim D_2}\mbracket{\frac12 \sum_x\abs{\bra x U(\rho_0 - \rho_1)U^\dagger\ket x}} \\
    &= \frac N2 TD(\rho_0, \rho_1) \cdot \E_{U \sim D_2}\mbracket{\abs{\bra 0 U(P_+ - P_-)U^\dagger\ket 0}} \\
    &\le \frac N2 TD(\rho_0, \rho_1) \cdot \sqrt{\E_{U \sim D_2}\mbracket{\paren{\bra 0 U(P_+ - P_-)U^\dagger\ket 0}^2}} \\
    &= \frac N2 TD(\rho_0, \rho_1) \cdot \sqrt{\E_{U \sim D_\infty}\mbracket{\paren{\bra 0 U(P_+ - P_-)U^\dagger\ket 0}^2}}.
\end{align*}
where the inequality is due to Jensen's inequality.

It remains to calculate the last term.
To that end, let $P_+ - P_- = \sum_i p_i\proj{\psi_i}$ be its eigendecomposition since it is Hermitian.
Plugging this in, we get that
\begin{align}
    &\equad \E_{U \sim D_\infty}\mbracket{\paren{\bra 0 U(P_+ - P_-)U^\dagger\ket 0}^2} \nonumber \\
    &= \E_{U \sim D_\infty}\mbracket{\paren{\bra 0 U\paren{\sum_i p_i\proj{\psi_i}}U^\dagger\ket 0}^2} \nonumber \\
    &= \E_{U \sim D_\infty}\mbracket{\sum_i p_i^2 \paren{\bra 0 U\proj{\psi_i}U^\dagger\ket 0}^2 + \sum_{i \neq j} p_ip_j \bra 0 U\proj{\psi_i}U^\dagger\proj0 U\proj{\psi_j}U^\dagger\ket 0} \nonumber \\
    &= \sum_i p_i^2 \cdot \frac2{N(N+1)} + \sum_{i \neq j} p_ip_j \cdot \frac1{N(N+1)} \label{eq:haar-state-calc} \\
    &= \frac1{N(N+1)} \cdot \sum_i\paren{p_i^2 + p_i\sum_jp_j} \nonumber \\
    &= \frac1{N(N+1)} \cdot \sum_ip_i^2 \label{eq:pj-balanced} \\
    &\le \frac2{N(N+1)}. \label{eq:norm-is-stat-dist}
\end{align}
\eqref{eq:haar-state-calc} holds since by unitary invariance, it is equivalent to consider $\bra0U\ket{\psi_i}$ as the overlap between a Haar random state $\ket\psi$ and $\ket0$, thus $\E_{U \sim D_\infty}\mbracket{\paren{\bra 0 U\proj{\psi_i}U^\dagger\ket 0}^2} = \E[\abs{\braket{0}{\psi}}^4]$, where $\ket\psi$ is a Haar random state; on the other hand, $\E[\abs{\braket{0}{\psi}}^4] = \frac2{N(N+1)}$ can be computed via the probability of measuring $\ket0\otimes\ket0$ on the maximally mixed state over the symmetric subspace $\textrm{Sym}_2^N$, which has dimension $\binom{N + 1}2$.
Similarly, $\E_{U \sim D_\infty}\mbracket{\bra 0 U\proj{\psi_i}U^\dagger\proj0 U\proj{\psi_j}U^\dagger\ket 0} = \E\mbracket{\abs{\braket{0}{\psi}\braket{1}{\psi}}^2} = \frac1{N(N+1)}$.
\eqref{eq:pj-balanced} holds since $\sum_jp_j = \Tr(P_+ - P_-) = 0$.
\eqref{eq:norm-is-stat-dist} holds since $\sum_ip_i^2 = \norm{P_+ - P_-}_2^2 = \norm{P_+}_2^2 + \norm{P_-}_2^2 \le \norm{P_+}_1^2 + \norm{P_-}_1^2 = 2$.

Putting everything together, we obtain that it is upper bounded by $TD(\rho_0, \rho_1) \cdot \sqrt{\frac N{2(N+1)}}$.
\end{proof}

\paragraph{Higher $t$-Designs.} Interestingly, the bound gets only a bit better than $1/\sqrt2$ if we use Haar random unitaries.
Consider $\rho_0 = \proj0$ and $\rho_1 = \proj1$.
The calculation starts the same as the last section except that we start with using $D_\infty$ instead of $D_2$ and we avoid the step involving Jensen's inequality, so we obtain that the trace distance is
\begin{align*}
    \frac N2 \cdot \E\mbracket{\abs{\abs{\braket{0}{\psi}}^2 - \abs{\braket{1}{\psi}}^2}}.
\end{align*}
We note that when $N \to \infty$, each $\abs{\braket{x}{\psi}}^2$ is approximately an independent draw from the exponential distribution with rate $1/N$ (also known as the Porter--Thomas distribution).
Let $X, Y$ be two such independent random variables with PDF $p$, then we get that it is approximately
\begin{align*}
    \frac N2 \cdot \E\mbracket{\abs{X - Y}}
      &= N \cdot \E_{X \ge Y}\mbracket{X - Y} \\
      &= N \cdot \int_0^\infty p(y)\int_y^\infty p(x)(x - y)dxdy \\
      &= \frac12.
\end{align*}
This implies that going to higher $t$-designs does not seem to lead to a more asymptotically efficient construction.
\fi
\fi

\end{document}